\tikzset{snake it/.style={decorate, decoration=snake}}
\tikzset{meter/.append style={draw, inner sep=10, rectangle, font=\vphantom{A}, minimum width=30, line width=.8,
 path picture={\draw[black] ([shift={(.1,.3)}]path picture bounding box.south west) to[bend left=50] ([shift={(-.1,.3)}]path picture bounding box.south east);\draw[black,-latex] ([shift={(0,.1)}]path picture bounding box.south) -- ([shift={(.3,-.1)}]path picture bounding box.north);}}}
\def\bbsmatrix#1{\begin{bsmallmatrix}#1\end{bsmallmatrix}}
\newtheorem{lemma}{Lemma}
\newtheorem{theorem}{Theorem}
\newtheorem{corollary}{Corollary}
\newtheorem{proposition}{Proposition}
\newcommand\rk{\normalfont{\mbox{rk}}}
\newcommand{\dec}{\mbox{\footnotesize dec}}
\newcommand\TQC{\normalfont{\scalebox{0.65}{\mbox{TQC}}}}
\newcommand\AME{\normalfont{\scalebox{0.65}{\mbox{AME}}}}
\newcommand\init{\normalfont {\sf init}}
\newcommand\achi{\normalfont {\sf achi}}
\newcommand\encoder{\normalfont{\sf{Enc}}}
\newcommand\decoder{\normalfont{\sf{Dec}}}
\newcommand\conv{\normalfont{\rm{conv}}}
\DeclareMathOperator{\tr}{tr}
\DeclareMathOperator{\Tr}{Tr}
\DeclareMathOperator{\Image}{Im}
\title{Capacity of Summation over a Symmetric Quantum Erasure MAC with  Partially Replicated Inputs}
\author{Yuhang Yao, Syed A. Jafar\\
{\small Center for Pervasive Communications and Computing (CPCC)}\\
{\small University of California Irvine, Irvine, CA 92697}\\
{\small \it Email: \{yuhangy5, syed\}@uci.edu}
}
\date{}      
\begin{document}
\maketitle

\begin{abstract}
The optimal quantum communication cost of computing a classical  sum of distributed sources is studied over a quantum erasure multiple access channel (QEMAC). $K$ classical messages comprised of finite-field symbols are distributed across $S$ servers, who also share quantum entanglement in advance. Each server $s\in[S]$ manipulates its quantum subsystem $\mathcal{Q}_s$ according to its own available classical messages and  sends $\mathcal{Q}_s$ to the receiver who then computes the sum of the messages based on a joint quantum measurement. The download cost from Server $s\in [S]$ is the logarithm of the dimension of $\mathcal{Q}_s$. The rate $R$ is defined as the number of instances of the sum computed at the receiver, divided by the total download cost from all the servers. The main focus is on the symmetric setting with $K= {S \choose \alpha} $ messages where each message is replicated among a unique subset of $\alpha$ servers, and the answers from any $\beta$ servers may be erased. If no entanglement is initially available to the receiver, then we show that the capacity (maximal rate) is precisely $C= \max\left\{ \min \left\{ \frac{2(\alpha-\beta)}{S}, \frac{S-2\beta}{S} \right\}, \frac{\alpha-\beta}{S} \right\}$. The capacity with arbitrary levels of prior entanglement $(\Delta_0)$ between the $S$ data-servers and the receiver is also characterized, by including an auxiliary server (Server $0$) that has no classical data, so that the communication cost from Server $0$ is a proxy for the amount of receiver-side entanglement that is available in advance. The challenge on the converse side resides in the optimal application of the weak monotonicity property, while the achievability combines ideas from classical  network coding and treating qudits as classical dits, as well as new constructions based on the $N$-sum box abstraction that rely on absolutely maximally entangled quantum states.
\end{abstract}

{\let\thefootnote\relax\footnote{By quantum ``MAC" we simply mean a many-to-one communication/computation scenario similar to \cite{Yao_Jafar_Sum_MAC}. We do not consider noisy multiple access channels.}\addtocounter{footnote}{-1}}

\allowdisplaybreaks
\section{Introduction}
Described by Schrodinger in 1935 \cite{Schrodinger1935} as the foremost distinctive feature of quantum mechanics, entanglement -- especially among \emph{many} parties --- remains one of the theory's most intriguing aspects. From a network information theoretic perspective, a key objective is to quantify  multiparty entanglement in terms of its utility as a resource, by the gains in communication efficiency that are enabled by entanglement, not only for communication tasks but more generally for multiparty \emph{computation} tasks. 

Going back at least four decades to the 1979 work of Korner and Marton in \cite{Korner_Marton_sum}, it is understood in classical network information theory, that the generalization from communication tasks, where the receivers need to recover various desired input messages, to \emph{computation} tasks, where the receivers need to recover some \emph{functions} of the distributed input messages, is highly non-trivial. Remarkably, this is the case even when the function to be computed is simply a sum of inputs and the network is simply a multiple access channel, as in \cite{Korner_Marton_sum}. Despite the challenges, with accelerating trends towards distributed computing there has been an explosion of interest in the capacity of communication networks when used for (typically, linear) computation tasks. Examples from an abundance of literature on this topic include works on the capacity of \emph{sum-networks} \cite{Rai_Dey, Ramamoorthy_Langberg, Appuswamy1}, \emph{network function computation} \cite{Kowshik_Kumar,  NetworkFC,Guang_Yeung_Yang_Li} and \emph{over-the-air computation} \cite{OTAC}, to name a few.

By the same token, there is growing interest in the fundamental limits of classical linear computation over a \emph{quantum} multiple access channel (QMAC), spurred by a variety of applications ranging from distributed quantum sensing and metrology \cite{Lloyd, GlobalMetrology, DistributedQS,Young} and quantum simultaneous message passing protocols \cite{buhrman1998quantum, QSM, PSQM} to quantum private information retrieval (QPIR) \cite{song_multiple_server_PIR, song_colluding_PIR, QMDSTPIR, hayashi2021computation,Yao_Jafar_Sum_MAC, aytekin2023quantum}. In particular, the $\Sigma$-QMAC model introduced in \cite{Yao_Jafar_Sum_MAC} is the most closely related to our work in this paper. Here we explore its generalization to a $\Sigma$-QEMAC, i.e., a $\Sigma$-QMAC that allows \emph{erasures}.
 
Fig. \ref{fig:example} illustrates an example of a $\Sigma$-QMAC setting of \cite{Yao_Jafar_Sum_MAC}. 
\begin{figure}[t]
\begin{center}
\begin{tikzpicture}
\coordinate (O) at (0,0){};
\node [draw, rectangle, rounded corners, aspect=0.2,shape border rotate=90,fill=black!10, text=black, inner sep =0cm, minimum height=1.1cm, minimum width=1.4cm, above right = 0cm and -5cm of O, align=center] (0)  {\footnotesize Server 0\\[-0.1cm]\footnotesize $-$};
\node [draw, rectangle, rounded corners, aspect=0.2,shape border rotate=90,fill=black!10, text=black, inner sep =0cm, minimum height=1.1cm, minimum width=1.4cm, above right = 0cm and -3cm of O, align=center] (ABC)  {\footnotesize Server 1\\[-0.1cm]\footnotesize ${\sf A},{\sf B},{\sf C}$};
\node [draw, rectangle, rounded corners, aspect=0.2,shape border rotate=90,fill=black!10, text=black, inner sep =0cm, minimum height=1.1cm, minimum width=1.4cm, above right = 0cm and -1cm of O, align=center] (ADE)  {\footnotesize Server 2\\[-0.1cm]\footnotesize ${\sf A},{\sf D}, {\sf E}$};
\node [draw, rectangle, rounded corners, aspect=0.2,shape border rotate=90,fill=black!10, text=black, inner sep =0cm, minimum height=1.1cm, minimum width=1.4cm, above right = 0cm and 1cm of O, align=center] (BDF)  {\footnotesize Server 3\\[-0.05cm]\footnotesize ${\sf B}, {\sf D}, {\sf F}$};
\node [draw, rectangle, rounded corners, aspect=0.2,shape border rotate=90,fill=black!10, text=black, inner sep =0cm, minimum height=1.1cm, minimum width=1.4cm, above right = 0cm and 3cm of O, align=center] (CEF)  {\footnotesize Server 4\\[-0.05cm]\footnotesize ${\sf C}, {\sf E},{\sf F}$};

 \node[alice, draw=black, text=black, minimum size=0.8cm, inner sep=0, above right = -2.5cm and -0.5cm of O] (U) at (0,0) {};

\node [draw, ellipse, dotted, aspect=0.2,shape border rotate=90,fill=white!10, text=blue, inner sep =0cm, minimum height=0.75cm, inner sep=0.1cm, minimum width=2cm, align=center, above right = 2cm and -1.5cm of O] (E)  {\footnotesize   Quantum\\[-0.1cm] \footnotesize Entanglement};

\draw [->,  black!30,  blue, snake it, out=180, in=70] (E) to (0.north);  
\draw [->,  black!30,  blue, snake it, out=200, in=80] (E) to (ABC.north);  

\draw [->,  black!30,  blue, snake it, out=270, in=90] (E) to (ADE.north);  

\draw [->,  black!30,  blue, snake it, out=320, in=100] (E) to (BDF.north);  

\draw [->,  black!30,  blue, snake it, out=0, in=110] (E) to (CEF.north);  

\draw[->, blue, snake it] (0.south) to node[left=0.2, pos=0.3,  text=blue, align=center]{\footnotesize $\mathcal{Q}_0$}(U) ;
\draw[->, blue, snake it] (ABC.south) to node[left=0.2, pos=0.3,  text=blue, align=center]{\footnotesize $\mathcal{Q}_1$}(U) ;
\draw[->, blue, snake it] (ADE.south)to node[left=0, pos=0.3,  text=blue]{\footnotesize $\mathcal{Q}_2$}(U);
\draw[->, blue, snake it] (BDF.south)to node[right=0.1, pos=0.3,  text=blue]{\footnotesize $\mathcal{Q}_3$}(U);
\draw[->, blue, snake it] (CEF.south)to node[right=0.3, pos=0.3,  text=blue,align=center]{\footnotesize $\mathcal{Q}_4$}(U);

\node (Ans) [right=0.75cm of U]{\footnotesize $({\sf A}_\ell+{\sf  B}_\ell+{\sf C}_\ell+\cdots+{\sf F}_\ell)_{\ell\in\{1,\cdots,L\}}$};
\draw [->] (U)--(Ans); 
\end{tikzpicture}
\end{center}
\vspace*{-3mm}\caption{A $\Sigma$-QMAC example with $S=4$ data-servers, one auxiliary server (Server $0$), and $K = 6$ data streams $({\sf A}, {\sf B}, {\sf C}, \cdots, {\sf F})$. A $\Sigma$-QEMAC setting   allows some of the $\mathcal{Q}_i$ (e.g., any one of $\mathcal{Q}_1, \mathcal{Q}_2, \mathcal{Q}_3, \mathcal{Q}_4$) to be erased.}\vspace*{-3mm}\label{fig:example}
\end{figure}
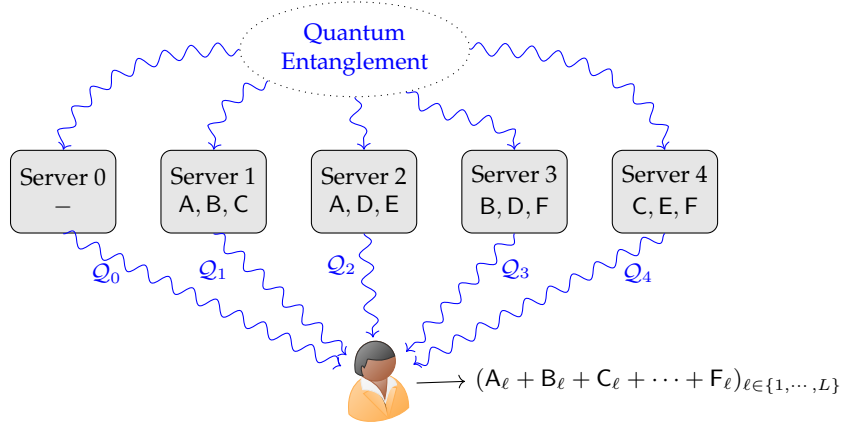
In the example shown in the figure, there are $K=6$  data streams, denoted as ${\sf A}, {\sf B}, \cdots, {\sf F}$, comprised of $\mathbb{F}_d$ symbols. Various subsets of these data streams are made available to a set of servers who also have \emph{entangled} quantum resources $(\mathcal{Q}_0,\cdots,\mathcal{Q}_4)$ distributed among them in advance (independent of the data). Any classical information available to a server may be suitably encoded into its quantum system through local operations, after which the quantum systems are sent to the receiver (Alice) through ideal (noise-less) quantum channels. Note that Server $0$, which has no data inputs and simply forwards its quantum subsystems to Alice, is an auxiliary server that may be included to  model any prior entanglement that the data-servers share in advance with Alice. The cost of communicating each quantum subsystem $\mathcal{Q}_s$ to Alice (say, in qubits) is equal to the (base $2$) logarithm of its dimension, i.e., $\log|\mathcal{Q}_s|$. Based on a joint measurement of the received quantum systems, Alice must be able to recover the sum of the classical data streams. Communication rates are measured in terms of the number of instances of the sum that are computed by Alice, per qubit of communication cost from the data-servers, for any given rate from the auxiliary server (see Section \ref{sec:symmetric} for formal definitions). The supremum of achievable rates defines the capacity.

As in the classical setting, the generalization from communication to computation tasks is also non-trivial in the quantum setting. For example, while the capacity of classical communication over a quantum multiple access channel (QMAC) is known under many fairly general noisy channel models for well over a decade  \cite{yard2008capacity, hsieh2008entanglement}, the recent capacity characterization for the  $\Sigma$-QMAC in \cite{Yao_Jafar_Sum_MAC} is limited to ideal (noise-free) channels. The aforementioned studies of QPIR and Quantum Metrology also typically choose an ideal channel model. While the classical communication capacity of an ideal (noise-free) QMAC is trivial, what makes these problems non-trivial even with ideal channels is the multiparty computation aspect --- that Alice needs to recover a function of the distributed inputs, instead of the inputs themselves. Indeed, instead of noisy channels, the focus in the $\Sigma$-QMAC is on quantifying the utility of multiparty entanglement as a resource that facilitates desired linear computations, by comparing the capacity with and without entanglement. What makes the $\Sigma$-QMAC especially interesting is the interplay between classical dependencies (replication of data across servers, mixing of inputs due to the desired computation), and multiparty quantum entanglement, which  creates opportunities for synergistic gains from a confluence of  ideas from the classical domain such as interference alignment \cite{Jafar_FnT}, network coding \cite{Ahlswede_network_information_flow}, and network function computation \cite{wei2023robust}  on one hand, and the quantum domain such as teleportation, superdense coding \cite{werner2001all}, and the $N$-sum box abstraction \cite{Allaix_N_sum_box} on the other. 
The study of the $\Sigma$-QMAC in \cite{Yao_Jafar_Sum_MAC} takes a perspective similar to degrees-of-freedom (DoF) analyses in wireless networks \cite{Jafar_FnT}, where noise is de-emphasized and the focus is entirely on quantifying the distributed accessibility of signal dimensions as the key resource. Note the similarity to the resource-theoretic accounting of quantum entanglement in facilitating classical and quantum communication in a \emph{two-party} setting, also with simplified channels, in \cite{grassl2022entropic, mamindlapally2023singleton}. The capacity of the $\Sigma$-QMAC in \cite{Yao_Jafar_Sum_MAC}, and its generalization to include erasures as in this work, can thus be seen as attempts to provide a \emph{resource-theoretic} accounting of \emph{multiparty} entanglement for an elemental multiparty computation task --- computing a  sum of distributed inputs.

Generalizing the $\Sigma$-QMAC model to a $\Sigma$-QEMAC, where erasures are possible, leads to interesting questions. For example, consider the counterpoint that multiparty entanglement in the presence of erasures could be  a double-edged sword. On one hand, the quantum systems sent by distributed servers to Alice are capable of facilitating more efficient linear computation if they are entangled than if they are unentangled. But on the other hand, because Alice in general needs to jointly measure the entangled systems, it is possible that the loss of some of the sub-systems may turn the entanglement into a \emph{disadvantage}, i.e., it is possible that the sub-systems that are received successfully may be useless without the sub-systems that are lost,  \emph{because} of their entanglement. Recall that in the point to point setting, quantum erasure correction is in general less efficient than classical erasure correction, as evident from their respective Singleton bounds \cite{mamindlapally2023singleton}. In the classical case, in order to protect $K$  symbols  against $D-1$ erasures, it suffices to encode them into $N$ symbols, such that $N \geq K+D-1$. However, in the quantum case, protecting $K$ qudits from $D-1$ erasures requires encoding them into at least $N\geq K+2(D-1)$ qudits. For example, $1$ classical symbol can be protected from $1$ erasure by simple repetition, i.e., by sending $2$ symbols, which doubles the communication cost. Protecting quantum information carried by $1$ qudit against one erasure on the other hand requires at least $3$ qudits, tripling the communication cost. Without entanglement, qudits can always be used as classical dits, allowing classical erasure codes that are more efficient. With entanglement, especially in network computation settings, we have the potential for superdense coding gain to improve communication efficiency, but can this advantage overcome the potential loss of efficiency due to the greater cost of quantum erasures? Such a question exemplifies the insights that we seek in this work. 


Going back to the example in Fig. \ref{fig:example}, and assuming for simplicity that no prior entanglement is available to Alice (i.e., ignoring Server $0$), suppose any \emph{one} of the $4$ quantum systems $(\mathcal{Q}_1,\mathcal{Q}_2,\mathcal{Q}_3,$ $\mathcal{Q}_4)$ could be erased, e.g., lost in transit or known to be corrupted and therefore discarded by Alice. Consider first a few  natural baselines for comparison. The first baseline scheme does not utilize quantum entanglement, it simply treats qudits as classical dits (TQC). The problem is reduced to a classical sum-network coding problem (cf. \cite{Rai_Dey, Ramamoorthy_Langberg}) with possible link failure as considered in \cite{wei2023robust}. Applying the result of \cite{wei2023robust}, the rate achieved is $R = 1/4$. 
As another baseline, consider a scheme based on   \cite{Yao_Jafar_Sum_MAC}. While the coding scheme in \cite{Yao_Jafar_Sum_MAC} does not allow for erasures, an erasure-tolerant scheme can be built from it as follows. In the absence of erasures,  \cite{Yao_Jafar_Sum_MAC} shows that Alice can compute ${\sf B} + {\sf C}+{\sf E}$ with only the transmission from Server $1$ and Server $2$. Alternatively, the same can also be done with only the transmission from Server $3$ and Server $4$. Implementing both alternatives together produces a scheme that allows Alice to compute ${\sf B} + {\sf C}+{\sf E}$ even if any one of the quantum subsystems $\mathcal{Q}_1,\cdots,\mathcal{Q}_4$ is unavailable at the decoder. Similarly, Alice can compute ${\sf A} +{\sf D} +{\sf F}$ using only Servers $1$ and $3$, or by using only Servers $2$ and $4$. Combining both alternatives  allows Alice to also compute ${\sf A} +{\sf D} +{\sf F}$ while tolerating the erasure of any one quantum system, i.e., by tolerating any one unresponsive server. Overall, this approach also achieves the rate $R=1/4$. Note that while this approach combines entanglement-assisted coding schemes, the redundancy required to tolerate erasures is such that the benefits of entanglement are lost, as reflected in the fact that the rate achieved is the same as without entanglement assistance in the first baseline scheme.
Fortunately, it turns out that the optimal (capacity achieving) scheme improves upon these baselines, and the benefits of entanglement do \emph{not} vanish due to erasures. Applying the main result of this work from Theorem \ref{thm:achievability} to this particular example will show that the rate $R=1/2$ is achievable and  optimal. With this brief preview of our main result, let us now summarize the relevant background that we need for this work.

\section{Preliminaries}
\subsection{Miscellaneous}
$\mathbb{N}$ denotes the set of positive integers. 
For $n_1, n_2\in \mathbb{N}$, $[n_1:n_2]$ denotes the set $\{n_1,n_1+1,\cdots,n_2\}$ if $n_1\leq n_2$ and $\emptyset$ otherwise. $[n]\triangleq [1:n]$ for $n\in \mathbb{N}$. 
$\mathbb{C}$ denotes the set of complex numbers. $\mathbb{R}_+$ denotes the set of non-negative real numbers. $\mathbb{F}_q$ denotes the finite field with $q=p^r$ a power of a prime. 
Define compact notations $A^{[n]} \triangleq (A^{(1)}, A^{(2)},\cdots, A^{(n)})$ and $A_{[n]} \triangleq (A_1, A_2, \cdots, A_n)$. $\mathcal{S}^{a\times b}$ denotes the set of $a\times b$ matrices with elements in $\mathcal{S}$.  
For an element $x\in \mathbb{F}_q = \mathbb{F}_{p^r}$, define $\mbox{tr}(x)$ as the \emph{field trace} of $x$ that maps it to an element in $\mathbb{F}_p$.
For a set $\mathcal{A}$, the set of its cardinality-$m$ sub-sets is denoted as $\binom{\mathcal{A}}{m} \triangleq \{\mathcal{A} \subseteq \mathcal{A} \mid |\mathcal{A}| = m \}$. The notation $2^{\mathcal{A}}$ denotes the power set of $\mathcal{A}$. The notation $f:\mathcal{A} \mapsto \mathcal{B}$ denotes a map $f$ from  $\mathcal{A}$ to  $\mathcal{B}$. 
$M^\dagger$ denotes the conjugate transpose of a matrix $M$, and $\Tr(M)$ denotes the trace of a square matrix $M$. 
$\mathcal{H}$ denotes a finite-dimensional Hilbert space. $\mathcal{D}(\mathcal{H})$ denotes the set of density operators acting on $\mathcal{H}$, i.e., the set of positive semi-definite matrices that have trace one. $\mathcal{L}(\mathcal{H})$ denotes the set of square linear operators acting on $\mathcal{H}$. 
${\sf Pr}(E)$ denotes the probability of an event $E$.

\subsection{Quantum information and entropy}
For a classical random variable $X$, the support set and probability mass function are denoted as $\mathcal{X}$ and $p_X(x)$ by default.
For a quantum system $A$, $\mathcal{H}_A$ is used to denote its underlying Hilbert space by default, so that a pure state of $A$ is described by a unit vector in $\mathcal{H}_A$. A general (mixed) state of $A$ is described a density operator $\rho_A \in \mathcal{D}(\mathcal{H}_A)$. We may write $\mathcal{D}(\mathcal{H}_A)$ as $\mathcal{D}(A)$ for simplicity.
A classical random variable $X$ may be regarded as a special quantum system with density operator $\rho_X = \sum_{x\in \mathcal{X}} p_X(x)\ket{x}\bra{x}_X$ where $\{\ket{x}\}_{x\in \mathcal{X}}$ is an orthonormal basis of $\mathcal{H}_X$.
For quantum systems $A_1,A_2,\cdots, A_S$, the composite system is compactly written as $A_1A_2\cdots A_S$. The underlying Hilbert space for the composite system is $\mathcal{H}_{A_1A_2\cdots A_S} = \mathcal{H}_{A_1}\otimes \mathcal{H}_{A_2} \otimes \cdots \otimes \mathcal{H}_{A_S}$. 

Let $S_v(\rho) \triangleq -\Tr(\rho \log \rho)$ be the von-Neumann entropy of a density operator $\rho$.
For a quantum system $A$ in the state $\rho_A \in \mathcal{D}(A)$, define $H(A)_{\rho_A} \triangleq S_v(\rho_A)$ as the entropy of $A$. The subscript $\rho_A$ in $H(A)_{\rho_A}$ may be omitted for compact notation when it is obvious from the context.

For a bipartite quantum system $AB$ in the (joint) state $\rho_{AB}\in \mathcal{D}(AB)$, the reduced state on $A$ is described by $\rho_A = \Tr_B(\rho_{AB}) \in \mathcal{D}(A)$, where $\Tr_B(\rho_{AB})$ is the \emph{partial trace} with respect to system $B$. $H(A)_{\rho_{AB}} \triangleq S_v(\Tr_B(\rho_{AB}))$ and similarly $H(B)_{\rho_{AB}} \triangleq S_v(\Tr_A(\rho_{AB}))$. 

Define the conditional entropy ($A$ conditioned on $B$) as $H(A\mid B)_{\rho_{AB}} \triangleq H(AB)_{\rho_{AB}} - H(B)_{\rho_{AB}}$. 
Define the mutual information (between $A$ and $B$) as $I(A;B)_{\rho_{AB}} \triangleq H(A)+ H(B) -H(AB) = H(A) - H(A\mid B) = H(B) - H(B\mid A)$. 
For a tripartite quantum system $ABC$ in the state $\rho_{ABC}$, the conditional mutual information is defined as $I(A;B\mid C)_{\rho_{ABC}} \triangleq H(A\mid C) + H(B\mid C) - H(AB\mid C) =  H(AC) + H(BC) - H(C) - H(ABC) = H(A\mid C) - H(A\mid BC) = H(B\mid C) - H(B\mid AC) = I(A;BC)-I(A;C) = I(AC;B)- I(C;B)$.

For a classical-quantum system $XA$ where $X$ is classical, if conditioned on $X=x$, $A$ is in the state $\rho_A^{(x)}$ for $x\in \mathcal{X}$, then $XA$ is in the joint state $\rho_{XA} \triangleq \sum_{x\in \mathcal{X}}p_X(x) \ket{x}\bra{x}_X \otimes \rho_A^{(x)}$. In this case, the reduced state on $A$ becomes $\rho_A = \Tr_X(\rho_{XA}) = \sum_{x\in \mathcal{X}}p_X(x) \rho_A^{(x)}$. 
We say that $H(A)_{\rho_{A}^{(x)}}$ is the entropy for $A$ conditioned on $X=x$, which is also written as $H(A\mid X=x)_{\rho_{XA}}$. 
With these definitions, $H(A\mid X)_{\rho_{XA}} = \sum_{x\in \mathcal{X}}p_X(x) H(A\mid X=x)_{\rho_{XA}}$ (e.g., see \cite[Eq. (11.54)]{Wilde_2017}), similar to classical information measures. 

We similarly define $H(A\mid B, X=x)_{\rho_{XAB}} = H(A\mid B)_{\rho_{AB}^{(x)}}$, $I(A;B\mid X=x)_{\rho_{XAB}} \triangleq I(A;B)_{\rho_{AB}^{(x)}}$ and $I(A;B\mid C, X=x)_{\rho_{XABC}} \triangleq I(A;B\mid C)_{\rho_{ABC}^{(x)}}$ as the corresponding information measures with conditioning on $X=x$. It is also true that $H(A\mid BX)_{\rho_{XAB}} = \sum_{x\in \mathcal{X}} p_X(x) H(A\mid B, X=x)_{\rho_{XAB}}$, $I(A;B\mid X)_{\rho_{XAB}} = \sum_{x\in \mathcal{X}}p_X(x) I(A;B \mid X=x)_{\rho_{XAB}}$ and $I(A;B\mid CX)_{\rho_{XABC}} = \sum_{x\in \mathcal{X}} p_X(x) I(A;B\mid C, X=x)_{\rho_{XABC}}$.

\subsection{Quantum channel}
A quantum channel is a completely positive trace-preserving (CPTP) map $\Phi\colon \mathcal{L}(\mathcal{H}) \mapsto \mathcal{L}(\mathcal{H}')$ between two spaces of square linear operators. It has a Choi-Kraus decomposition as $\Phi(M) = \sum_{i=1}^{n}V_i M V_i^\dagger$ for any $M\in \mathcal{L}(\mathcal{H})$, where $V_i$ is a linear operation that takes $\mathcal{L}(\mathcal{H})$ to $\mathcal{L}(\mathcal{H}')$ for $i\in [n]$, such that $\sum_{i=1}^n V_i^\dagger V_i = I$ is the identity matrix \cite[Thm. 4.4.1]{Wilde_2017}. $\{V_i\}$ are referred to as the Kraus operators of the quantum channel $\Phi$.
For two quantum channels $\Phi_A \colon \mathcal{L}(\mathcal{H}_A) \mapsto \mathcal{L}(\mathcal{H}_A')$ and $\Phi_B\colon \mathcal{L}(\mathcal{H}_B) \mapsto \mathcal{L}(\mathcal{H}_B')$, $\Phi_A\otimes \Phi_B \colon \mathcal{L}(\mathcal{H}_A \otimes \mathcal{H}_B) \mapsto \mathcal{L}(\mathcal{H}_A'\otimes \mathcal{H}_B')$ denotes their parallel concatenation. 
Trace and partial trace are quantum channels. The identity channel is denoted as the identity matrix ${\bf I}$ which maps any input operator to the same output.

In this paper, quantum channels are used to model encoding operations and measurements on quantum systems. When we say that an operation $\Phi$ (a quantum channel) is applied to a quantum system $A$, we mean that the system $A$ with initial state $\rho_A$ has state $\rho_A' = \Phi(\rho_A)$ after the operation.

A quantum measurement $\Psi$ is a quantum channel that maps any density operator $\rho_A$ to $\Psi(\rho_A) = \sum_{x\in \mathcal{X}} \Tr(\Lambda_A^{(x)}\rho_A) \ket{x}\bra{x}_X$ with $\{\Lambda^{(x)}\}_{x\in \mathcal{X}}$ a set of matrices that define a POVM (positive operator valued measurement) and $\{\ket{x}\}$ an  orthonormal basis \cite[Sec. 4.6.6]{Wilde_2017}. $\Psi(\rho_A)$ thus corresponds to the density operator of a classical random variable $X$ with $p_X(x) = \Tr(\Lambda^{(x)}_A \rho_A)$ for $x\in \mathcal{X}$.
When we say that a quantum system $A$ in the state $\rho_A$ is measured by $\Psi$ (a quantum measurement), we define a classical random variable $X$ such that $p_X(x) = \Tr(\Lambda^{(x)}\rho)$ for $x\in \mathcal{X}$, and we refer to $X$ as the output  of that measurement. 

 An important set of operations, referred to as the Pauli operations, are defined as follows. Let $\{\ket{a}\}_{a\in \mathbb{F}_q}$ denote the standard basis of a $q$-dimensional quantum system such that $q=p^r$ is a power of a prime. For $x\in \mathbb{F}_q$, define the ${\sf X}(x)$ operation on the system such that ${\sf X}(x)\ket{a} = \ket{a+x}$ for $a \in \mathbb{F}_q$. Define the ${\sf Z}(z)$ operation on the system such that ${\sf Z}(z)\ket{b} = \omega^{\tr(bz)}\ket{b}$ for $b\in \mathbb{F}_q$.

\subsection{Useful Lemmas}
For our converse proofs, the following lemmas will be useful. 
\begin{lemma}[No-signaling \cite{Peres_QIT, wiki:No-communication_theorem}] \label{lem:no_signal}
	Let $XAB$ be a classical-quantum system where $X$ is classical.  Say $AB$ is in the state $\rho_{AB}^{\init} \in \mathcal{D}(AB)$ initially. Let $\{\Phi_A^{(x)}\}_{x\in \mathcal{X}}$ be a set of quantum channels (to be applied to $A$). Let $\rho_{AB}^{(x)} = \Phi_A^{(x)} \otimes {\bf I}_B (\rho_{AB}^{\init})$ for $x\in \mathcal{X}$ and let $\rho_{XAB} = \sum_{x\in \mathcal{X}}p_X(x)  \ket{x}\bra{x} \otimes \rho_{AB}^{(x)}$. Then $I(B;X)_{\rho_{XAB}} = 0$.
\end{lemma}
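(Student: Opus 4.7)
The plan is to prove that the reduced state of $B$ is the same for every value of $x$, so that conditioning on $X$ cannot change the entropy of $B$. Since $\Phi_A^{(x)}$ acts only on subsystem $A$ and is trace-preserving, intuition from the no-signaling theorem says that nothing Alice does to $A$ can influence the marginal on $B$. My proof would make this rigorous and then translate it into a statement about mutual information.

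First I would fix any $x\in\mathcal{X}$ and analyze the marginal $\rho_B^{(x)} \triangleq \Tr_A(\rho_{AB}^{(x)})$. Using the Choi-Kraus decomposition guaranteed by \cite[Thm. 4.4.1]{Wilde_2017}, write $\Phi_A^{(x)}(\cdot) = \sum_i V_i^{(x)} (\cdot) (V_i^{(x)})^\dagger$ with $\sum_i (V_i^{(x)})^\dagger V_i^{(x)} = I_A$. Then
\begin{align}
\rho_B^{(x)} &= \Tr_A\!\left(\sum_i (V_i^{(x)}\otimes I_B)\,\rho_{AB}^{\init}\,((V_i^{(x)})^\dagger \otimes I_B)\right)\\
&= \Tr_A\!\left(\left(\sum_i (V_i^{(x)})^\dagger V_i^{(x)} \otimes I_B\right)\rho_{AB}^{\init}\right) = \Tr_A(\rho_{AB}^{\init}),
\end{align}
where the key manipulation is the cyclicity of the partial trace with respect to operators supported only on $A$. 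Thus $\rho_B^{(x)}$ does not depend on $x$; call this common state $\rho_B^{\init}$.

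From here the mutual information computation is immediate. The reduced state $\rho_B = \sum_{x}p_X(x)\rho_B^{(x)} = \rho_B^{\init}$, so $H(B)_{\rho_{XAB}} = S_v(\rho_B^{\init})$. On the conditional side, the definition of $H(B\mid X)$ for a classical-quantum system (as recalled in the preliminaries) gives $H(B\mid X)_{\rho_{XAB}} = \sum_x p_X(x) H(B)_{\rho_B^{(x)}} = S_v(\rho_B^{\init})$. Subtracting yields $I(B;X)_{\rho_{XAB}} = H(B) - H(B\mid X) = 0$.

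There is no real obstacle here beyond bookkeeping; the only subtlety is being careful that the partial trace absorbs $\sum_i (V_i^{(x)})^\dagger V_i^{(x)} = I_A$ before any dependence on $x$ can propagate to $B$. Once that identity is invoked, non-negativity and concavity of von-Neumann entropy are not needed, and the conclusion follows purely from trace preservation of the local channel $\Phi_A^{(x)}$.
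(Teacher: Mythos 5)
Your proof is correct and follows essentially the same route as the paper: show that the marginal state on $B$ is independent of $x$, then conclude $I(B;X) = H(B) - H(B\mid X) = 0$. The only difference is that you justify the key no-signaling identity $\Tr_A(\rho_{AB}^{(x)}) = \Tr_A(\rho_{AB}^{\init})$ by an explicit Choi--Kraus calculation, whereas the paper simply asserts this fact ``because only the identity operation is applied to $B$.''
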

\begin{proof}
	It suffices to show that $H(B)_{\rho_{XAB}} = H(B\mid X)_{\rho_{XAB}}$. First note that $\Tr_A(\rho_{AB}^{(x)}) = \Tr_A(\rho_{AB}^{\init})$ for $x\in \mathcal{X}$ because  only the identity operation is applied to $B$. By definition, 
	\begin{align}
		&H(B)_{\rho_{XAB}} = S_v\big(\Tr_{XA}(\rho_{XAB})\big) = S_v\big(\Tr_{A}\Tr_{X}(\rho_{XAB})\big) \notag \\
		&= S_v\big(\Tr_A(\sum_{x\in \mathcal{X}}p_X(x) \rho_{AB}^{(x)})\big) = S_v\big(\sum_{x\in \mathcal{X}} p_X(x) \Tr_A(\rho_{AB}^{(x)})\big) \notag \\
		&= S_v\big(\sum_{x\in \mathcal{X}} p_X(x) \Tr_A(\rho_{AB}^{\init})\big) = S_v\big(\Tr_A(\rho_{AB}^{\init})\big)
	\end{align}%
	On the other hand, by definition, $H(B\mid X=x)_{\rho_{XAB}} = S_v\big(\Tr_A(\rho_{AB}^{(x)})\big) = S_v\big(\Tr_{A}(\rho_{AB}^{\init})\big)$ for all $x\in \mathcal{X}$, and thus $H(B\mid X)_{XAB} = \sum_{x\in \mathcal{X}}p_X(x) H(B\mid X=x)_{\rho_{XAB}} = S_v\big(\Tr_A(\rho_{AB}^{\init})\big)$.
\end{proof}

\begin{lemma}[Holevo Bound \cite{holevo1973bounds}] \label{lem:Holevo}
	Let $XA$ be a classical-quantum system where $X$ is classical,  $\rho_A^{(x)} \in \mathcal{D}(A)$ for $x\in \mathcal{X}$ and $\rho_{XA} = \sum_{x\in \mathcal{X}}p_X(x) \ket{x}\bra{x} \otimes \rho_{A}^{(x)}$. 
	Let $\Psi$ be a quantum measurement that measures $A$ in the state $\rho_{A} = \Tr_{X}(\rho_{XA})$, and denote the output as $Y$. Then, $I(X;Y)\leq I(X;A)_{\rho_{XA}}$.
\end{lemma}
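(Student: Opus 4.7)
The plan is to derive the Holevo bound as a direct instance of the data processing inequality for quantum mutual information, specialized to the measurement channel $\Psi$ acting on system $A$ of the classical-quantum state $\rho_{XA}$.

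First, I would form the post-measurement state obtained by applying $\Psi$ to the $A$ register while leaving $X$ untouched, namely
\begin{align}
\rho_{XY} \triangleq ({\bf I}_X \otimes \Psi)(\rho_{XA}) = \sum_{x \in \mathcal{X}} \sum_{y \in \mathcal{Y}} p_X(x)\, \Tr(\Lambda^{(y)}\rho_A^{(x)})\, \ket{x}\bra{x}_X \otimes \ket{y}\bra{y}_Y.
\end{align}
Since both registers are now classical, the three von Neumann entropies $H(X), H(Y), H(XY)$ appearing in $I(X;Y)_{\rho_{XY}}$ coincide with the Shannon entropies of the joint distribution $p_{X,Y}(x,y) = p_X(x)\Tr(\Lambda^{(y)}\rho_A^{(x)})$, so the quantum mutual information of $\rho_{XY}$ is precisely the classical mutual information between $X$ and the measurement outcome $Y$ in the statement of the lemma.

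Next, I would invoke the data processing inequality: for any CPTP map $\Phi$ acting on subsystem $A$ of a bipartite state $\rho_{XA}$, one has $I(X;\Phi(A)) \leq I(X;A)$. Taking $\Phi = \Psi$ immediately yields $I(X;Y) \leq I(X;A)_{\rho_{XA}}$. To justify the data processing inequality itself I would use a Stinespring dilation to write $\Psi = \Tr_E \circ \mathcal{V}$, where $\mathcal{V}(\cdot) = V (\cdot) V^\dagger$ for some isometry $V\colon \mathcal{H}_A \to \mathcal{H}_Y \otimes \mathcal{H}_E$. An isometry preserves the spectra of the relevant reduced states, giving $I(X; YE)_{({\bf I}_X \otimes \mathcal{V})(\rho_{XA})} = I(X;A)_{\rho_{XA}}$; and discarding the environment cannot increase mutual information because $I(X;YE) - I(X;Y) = I(X;E\mid Y) \geq 0$, which is just strong subadditivity of von Neumann entropy. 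Chaining the two facts delivers the claim.

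The main obstacle, if one wishes to avoid simply quoting the result, is establishing the data processing inequality, which ultimately rests on strong subadditivity. Everything else is bookkeeping: because $X$ is classical throughout and $Y$ is classical after the measurement, there is no subtlety in identifying the classical mutual information $I(X;Y)$ with the quantum mutual information of the classical-classical state $\rho_{XY}$, and no issue of coherence between the two registers needs to be tracked.
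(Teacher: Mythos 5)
Your proposal is correct and takes essentially the same route as the paper: both reduce the lemma to the quantum data processing inequality applied to the measurement channel $\Psi$ acting on the $A$ register of $\rho_{XA}$. The only difference is that the paper simply cites the data processing inequality (from Wilde's textbook), while you additionally sketch its standard proof via Stinespring dilation and strong subadditivity; that extra detail is accurate but not needed for the argument.
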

\begin{proof}
	Since the density operator for $Y$ after the measurement is $\rho_Y = \Psi(\rho_A)$ where $\Psi$ is a given channel, quantum data processing inequality \cite[Thm. 11.9.4]{Wilde_2017} implies $I(X;Y) \leq I(X;A)_{\rho_{XA}}$.
\end{proof}

\section{Problem Statement}
\subsection{$\Sigma$-QEMAC Model}
The $\Sigma$-QEMAC problem is specified by a finite field $\mathbb{F}_d$, a set of $S$ data-servers with indices $1,\cdots, S$, an auxiliary server (Server $0$) with index $0$ that may be included to explicitly model prior shared entanglement between Alice and the data-servers, $K$ data streams, $T$ erasure patterns, the data replication map $\mathcal{W}:[K]\mapsto 2^{[S]}$ that specifies the subsets of data-servers among which a data stream is replicated, and the map $\mathcal{E}:[T]\mapsto 2^{[S]}$ that specifies the subsets of servers from which the transmissions may be unavailable (erased). Specifically, for $k\in [K]$, the $k^{th}$ data stream, ${\sf W}_k$  is comprised of symbols ${\sf W}_k^{(\ell)} \in \mathbb{F}_d$ for $\ell \in \mathbb{N}$. The data stream ${\sf W}_k$ is available at Servers $s$ for all $s\in \mathcal{W}(k)\subseteq[S]$. The coding scheme must allow Alice to recover ${\sf W}_1+{\sf W}_2+\cdots+{\sf W}_K$ given that the answers from a subset of servers $\mathcal{E}(t)\subseteq [S]$ are unavailable to the receiver (Alice), for any $t\in [T]$. Note that no data stream is available to Server $0$, and the answer from Server $0$ cannot be erased, because it represents the shared entanglement that is already available in advance to Alice. Define $\mathcal{S}=\{0\}\cup[S]$ as a compact notation for the set of all $S+1$ server indices.

A quantum coding scheme is specified by a $5$-tuple 
\begin{align*}
	\big(L, \delta_{\mathcal{S}}, \rho^{\init},  \encoder_{[S]}, \decoder \big).
\end{align*}
Here, $L\in \mathbb{N}$ is  the batch size, which specifies the number of instances of the data to be encoded together, i.e., the data to be encoded is ${\sf W}_{[K]}^{[L]} = ({\sf W}_1^{[L]}, {\sf W}_2^{[L]},\cdots, {\sf W}_K^{[L]})$. 
We use $w = (w_1,w_2,\cdots, w_K) \in \mathbb{F}_d^{KL}$ to represent a realization of $({\sf W}_1^{[L]}, {\sf W}_2^{[L]},\cdots, {\sf W}_K^{[L]})$, where $w_k$ is the realization of the $k^{th}$ data stream. The superscript `${[L]}$' over the data streams may be omitted for compact notation. Given $w$, for $s\in [S]$, let $x_s$ be the part of $w$ that is available to Server $s$, i.e., $x_s \triangleq (w_k \colon s\in \mathcal{W}(k))$.

\begin{figure}[htbp]
\center
\begin{tikzpicture}
\node[rectangle, draw, fill=gray!10, rounded corners=5, minimum width = 0.8cm, minimum height=4.1cm] (Q) at (0,-1.56) {};

\node (Q1) at (0,0) [draw, circle, dotted, aspect=0.2, fill=black!20, inner sep =0cm, minimum width=0.6cm, align=center, align=center] {$\mathcal{Q}_0$};
\node (Q2) at (0,-1.35) [draw, circle, dotted, aspect=0.2, fill=black!20, inner sep =0cm, minimum width=0.6cm, align=center, align=center] {$\mathcal{Q}_1$};
\node at (0,-2) [align=center] {$\vdots$};
\node (QS) at (0,-3.1) [draw, circle, dotted, aspect=0.2, fill=black!20, inner sep =0cm, minimum width=0.6cm, align=center, align=center] {$\mathcal{Q}_{S}$};

\node at ($(Q1.west)+(-0.4,-1.6)$) [rotate=90] {\footnotesize Entangled Quantum Systems};

\node (C1) at (1.5,0) [draw, rectangle, inner sep =0.1cm] {${\bf I}$};
\node (C2) at (1.5,-1.35) [draw, rectangle, inner sep =0.1cm] {$\Phi_1^{(x_1)}$};
\node (CS) at (1.5,-3.1) [draw, rectangle, inner sep =0.1cm] {$\Phi_{S}^{(x_{S})}$};
\node at (1.5,-2) {$\vdots$};

\draw [color=black, thick] (Q1.east)--(C1);
\draw [color=black, thick] (Q2.east)--(C2);
\draw [color=black, thick] (QS.east)--(CS);

\node (D) at (5,-0.5) [draw, rectangle, minimum height = 1.75cm, minimum width = 0.98cm] {$\Psi_t$};
\node (E) at (5,-2.7) [draw, rectangle, minimum height = 1.4cm, minimum width = 0.98cm] {$\Tr$};

\draw [color=black, thick] (C1.east)--($(C1.east)+(2.8,0)$);

\draw [color=black, thick] (C2.east)--($(C2.east)+(0.5,0)$)--($(E.west)+(-1.5,0.4)$)--($(E.west)+(0,0.4)$);

\draw [color=black, thick] ($(C2.east)+(0,-0.8)$)--($(C2.east)+(0.5,-0.8)$)--($(E.west)+(-1.5,-0.4)$)--($(E.west)+(0,-0.4)$);

\draw [color=black, thick] ($(CS.east)$)--($(CS.east)+(0.5,0)$)--($(D.west)+(-1.5,-0.4)$)--($(D.west)+(0,-0.4)$);

\node at ($(D.west)+(-1.35,0.15)$) {$\vdots$};
\node at ($(D.west)+(-0.65,0)$) {\footnotesize $\mathcal{Q}_{\mathcal{S}\setminus \mathcal{E}(t)}$};

\node at ($(D.west)+(-1.2,-2.1)$) {$\vdots$};
\node at ($(D.west)+(-0.65,-2.25)$) {\footnotesize $\mathcal{Q}_{\mathcal{E}(t)}$};

\node (Y) [right=0.3cm of D]{$Y_t^{(w)}$};
\draw [color=black, thick] (D.east)--(Y.west);

\node [above=-0.1cm of C1]{\footnotesize (Server $0$)};
\node [above=-0.1cm of C2]{\footnotesize (Server $1$)};
\node [above=-0.1cm of CS]{\footnotesize (Server $S$)};
\node (R) [above=-0.1cm of D]{\footnotesize (Alice)};
\node [below=-0.05cm of E]{\footnotesize (Erasure)};

\draw[thick, dotted] (0.6,0.4)--(0.6,-3.5);
\node at (0.6,-3.9){$\rho^{\init}$};

\draw[thick, dotted] (2.4,0.4)--(2.4,-3.5);
\node at (2.6,-3.9){$\rho^{(w)}_{\mathcal{Q}_0\cdots \mathcal{Q}_{S}}$};

\node at (2.1,1.2){\footnotesize $\forall w=(w_1,\cdots, w_K) \in \mathbb{F}_d^{KL}, t\in [T]$};

\end{tikzpicture}
\caption{A quantum coding scheme for the $\Sigma$-QEMAC. The output measured at the receiver, $Y_t^{(w)}$, must be the sum $w_1+w_2+\cdots+w_K$.}
\end{figure}
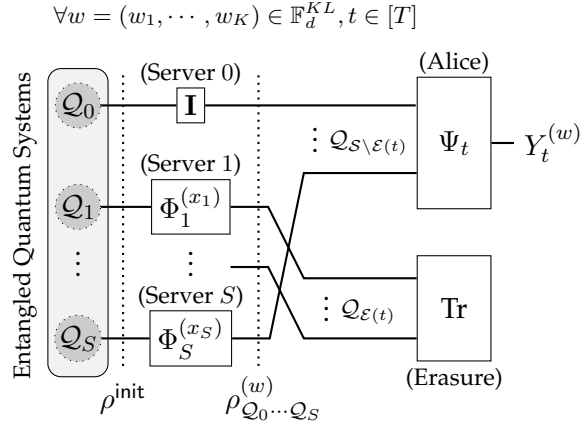

A quantum system $\mathcal{Q}_{\mathcal{S}}=\mathcal{Q}_0\mathcal{Q}_1\cdots \mathcal{Q}_S$ is prepared in advance in the initial state $\rho^{\init}$ and shared among the servers so that Server $s$ has the subsystem $\mathcal{Q}_s$ for $s\in \mathcal{S}$. For any data realization $w \in \mathbb{F}_d^{KL}$, the encoder $\encoder_s$ at Server $s\in [S]$ is represented by a quantum channel $\Phi_{s}^{(x_s)}$ that depends on $x_s$. The output dimension is upper bounded by $\delta_s, s\in\mathcal{S}$ for all $w$.  For $w\in \mathbb{F}_d^{KL}$,
\begin{align}
	 \rho_{\mathcal{Q}_0\cdots \mathcal{Q}_S}^{(w)} \triangleq {\bf I} \otimes \Phi_1^{(x_1)} \otimes \cdots \otimes \Phi_S^{(x_S)} \big( \rho^{\init} \big)
\end{align}
denotes the output state for the joint quantum system $\mathcal{Q}_0\mathcal{Q}_1\cdots\mathcal{Q}_S$, after the channel $\Phi_s^{(x_s)}$ is applied to $\mathcal{Q}_s$, for each $s\in [S]$.

For $t\in [T]$, the decoder $\decoder$ is represented by a quantum measurement,  $\Psi_t$, that depends on $t$, i.e., the erasure pattern that is encountered. For the case when the answers from Servers $s\in \mathcal{E}(t)$ are unavailable at Alice, the remaining subsystem $\mathcal{Q}_{\mathcal{S}\setminus \mathcal{E}(t)}$ in the reduced state,
\begin{align*}
	\rho_{\mathcal{Q}_{\mathcal{S}\setminus \mathcal{E}(t)}}^{(w)} = \Tr_{\mathcal{Q}_{\mathcal{E}(t)}}(\rho_{\mathcal{Q}_0\cdots \mathcal{Q}_S}^{(w)})
\end{align*}
is measured by $\Psi_t$, with the output represented by the random variable $Y^{(w)}_t$.
It is required that the decoding must be always correct, i.e.,  
\begin{align} \label{eq:def_correctness}
	{\sf Pr} \big( Y^{(w)}_t = w_1+w_2+\cdots + w_K \big) = 1
\end{align}
for all data realizations $w\in \mathbb{F}_d^{KL}$, and for \emph{any} case of erasure indexed by $t\in [T]$.

For such a coding scheme, we define $\Delta_s = \log_d \delta_s/L$ as the (normalized) download cost from Server $s$ for $s\in \mathcal{S}$, and ${\bf \Delta} \triangleq (\Delta_0, \Delta_1,\cdots, \Delta_S)$ as the cost tuple achieved by the scheme. 

Let $\mathfrak{C}_L$ be the set of coding schemes with batch size $L$.
Define $\mathfrak{D}^*$ as the closure of the cost tuples achieved by the schemes in $\mathfrak{C}_L$ as $L\to \infty$. The ultimate goal is to study $\mathfrak{D}^*$.

\subsection{Symmetric setting}\label{sec:symmetric}
To counter the combinatorial complexity of the problem, we will focus especially on a particular symmetric setting of the $\Sigma$-QEMAC. For clarity, it will be useful to identify the symmetric setting by including a superscript \emph{s} (for symmety), as in $\Sigma^s$-QEMAC.  The $\Sigma^s$-QEMAC setting is specified by a tuple $(\mathbb{F}_d, S, \alpha, \beta)$. The data symbols are from $\mathbb{F}_d$. $S,\alpha, \beta$ are integers and $S\geq \alpha>\beta \geq 0$ for the problem to be feasible. 
As in the general $\Sigma$-QEMAC, the $\Sigma^s$-QEMAC contains $S$ data-servers, indexed by $[S]=\{1,\cdots, S\}$ and an auxiliary server, Server $0$ is included if prior shared entanglement is available to Alice. There are $K = \binom{S}{\alpha}$ data streams, each replicated among a unique cardinality-$\alpha$ subset of the  data-servers.
The answers from any $\beta$-subset of the data-servers may be erased. Mathematically, the storage is specified as $\mathcal{W}:[\binom{S}{\alpha}]\mapsto \binom{[S]}{\alpha}$ and the erasure map as $\mathcal{E}:[\binom{S}{\beta}] \mapsto \binom{[S]}{\beta}$, both being bijections. 
As before, note that none of the $K$ data streams is available to Server $0$ and we assume that the answer from Server $0$ cannot be erased.


For an illustration of a $\Sigma^s$-QEMAC setting, consider Figure \ref{fig:example}, where we have $S=4$ data-servers and an auxiliary Server $0$, with $\alpha=2$ so that there are $K = \binom{4}{2}=6$ classical data streams, ${\sf W_1,W_2, W_3, W_4, W_5, W_6}$, relabeled as ${\sf A,B,C,D,E,F}$, respectively, in the figure for convenience. There is a data stream corresponding to each subset of cardinality $\alpha=2$ out of the $S=4$ data-servers, that is available precisely to those $\alpha=2$ servers. Specifically, the data stream ${\sf W_1}$ (equivalently, ${\sf A}$) is available to the data-servers with indices in the set $\mathcal{W}(1)=\{1,2\}$, ${\sf B}$ to $\mathcal{W}(2)=\{1,3\}$, ${\sf C}$ to $\mathcal{W}(3)=\{1,4\}$, ${\sf D}$ to $\mathcal{W}(4)=\{2,3\}$, ${\sf E}$ to $\mathcal{W}(5)=\{2,4\}$, ${\sf F}$ to $\mathcal{W}(6)=\{3,4\}$. Say in this example $\beta=1$, which means any one of $\mathcal{Q}_1,\mathcal{Q}_2, \mathcal{Q}_3, \mathcal{Q}_4$ may be erased. Thus, there are  $T=\binom{4}{1}=4$ erasure cases that must be tolerated, comprised of all cardinality-$1$ subsets of the $4$ data-servers, i.e., the indices of the erased servers can be any one of the sets $\{1\},\{2\},\{3\},\{4\}$. Note that Server $0$ has no data stream and $\mathcal{Q}_0$ is not subject to  erasures.

Recall that the closure of the cost tuples ${\bf \Delta} = (\Delta_0,\Delta_1,\cdots, \Delta_S)$ achieved by the coding schemes is denoted as $\mathfrak{D}^*$. We are interested in the trade-off between $\Delta_0$ and $\Delta_1+\cdots+\Delta_S$, i.e., the (normalized) cost of entanglement provided by Server $0$ (modeling entanglement previously available to Alice) and the sum (normalized) download cost from the data-servers.  $R = (\Delta_1+\cdots+\Delta_S)^{-1}$ is referred to as the data-server rate (reciprocal of the sum-download cost from the data-servers). For each $\Delta_0 \in \mathbb{R}_+$, we wish to find the capacity, which is defined as $C(\Delta_0) \triangleq \max_{(\Delta_0, \Delta_1,\cdots, \Delta_S) \in \mathfrak{D}^*} R$.
Due to the symmetry among data-servers, there is no loss of generality in the assumption that $\Delta_1=\Delta_2=\cdots=\Delta_S = \Delta$.
Therefore, the optimal tradeoff is equivalently represented as $\Delta^*(\Delta_0) \triangleq \min_{(\Delta_0, \Delta, \cdots, \Delta) \in \mathfrak{D}^*} \Delta$. Note that $C(\Delta_0) = (S\Delta^*(\Delta_0))^{-1}$.

\section{Result}
\subsection{Achievability for an arbitrary $\Sigma$-QEMAC}
The following theorem states our achievability result for an arbitrary (not necessarily symmetric) $\Sigma$-QEMAC.
\begin{theorem}[General Achievability] \label{thm:achievability}
For the $\Sigma$-QEMAC with parameters $(\mathbb{F}_d, \mathcal{S},K,T, \mathcal{W}, \mathcal{E})$, we have 
$\mathfrak{D}_{\achi} \subseteq \mathfrak{D}^*$, where  
\begin{align} 
&\mathfrak{D}_{\achi}  = \conv (\mathfrak{D}_{\AME}^+ \cup \mathfrak{D}_{\TQC}^+), \\
&\mathfrak{D}^+_{x} \triangleq \left\{ (\Delta_0^+,\cdots, \Delta_S^+) \in \mathbb{R}_+^{S+1} \left| \begin{array}{l}  \exists (\Delta_0,\cdots, \Delta_S) \in \mathfrak{D}_x, ~\Delta_s^+ \geq \Delta_s, \forall s\in \mathcal{S}\end{array} \right. \hspace{-0.25cm} \right\},  x\in \{{\rm AME}, {\rm TQC} \},\\
&\mathfrak{D}_{\AME} \triangleq \left\{ 
	{\bf \Delta}\in \mathbb{R}_+^{S+1}
    \left|
    \begin{array}{l}
     \min\left\{ \sum_{s\in \mathcal{S}}\Delta_s, \sum_{s\in \mathcal{W}(k)}2\Delta_s \right\} 
      -  \sum_{s\in \mathcal{E}(t)}2\Delta_s 
     \geq 1,   \forall k\in [K], t\in [T]
    \end{array}
    \right. \hspace{-0.25cm} \right\},\label{eq:region_AME} \\
&\mathfrak{D}_{\TQC} \triangleq \left\{ 
	{\bf \Delta}\in \mathbb{R}_+^{S+1}
    \left|
    \begin{array}{l}
     \sum_{s\in \mathcal{W}(k)} \Delta_s - \sum_{s\in \mathcal{E}(t)} \Delta_s \geq 1,  \forall k\in [K], t\in [T]
    \end{array}
    \right. \hspace{-0.25cm} \right\}.\label{eq:region_TQC}
\end{align}
\end{theorem}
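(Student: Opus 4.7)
The plan is to establish $\mathfrak{D}_{\TQC} \subseteq \mathfrak{D}^*$ and $\mathfrak{D}_{\AME} \subseteq \mathfrak{D}^*$ by two separate constructions, then to close under convex combination and upward-domination by standard arguments: the ${}^+$-extension is immediate since any server can pad its quantum output with unused qudits, and time-sharing between two schemes on disjoint sub-batches realizes the convex hull as $L \to \infty$.

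For the TQC corner $\mathfrak{D}_{\TQC}$, I would use only computational-basis states on each $\mathcal{Q}_s$, so that each quantum system behaves as a classical channel of dimension $\delta_s$. The $\Sigma$-QEMAC then reduces to a classical sum-network with erasures, where Server $s$ transmits $\log_d \delta_s$ scalar linear combinations (over $\mathbb{F}_d$, or a suitable extension of it) of its accessible messages per data instance. The condition $\sum_{s\in \mathcal{W}(k)} \Delta_s - \sum_{s\in \mathcal{E}(t)} \Delta_s \geq 1$ is precisely the cut-set inequality that, by a deterministic linear sum-network coding argument in the spirit of \cite{wei2023robust, Ramamoorthy_Langberg}, guarantees the existence of a code that delivers $\sum_k {\sf W}_k$ to Alice under every erasure pattern $t \in [T]$.

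For the AME corner $\mathfrak{D}_{\AME}$, the scheme uses an absolutely maximally entangled state on a joint Hilbert space of dimension $\prod_s \delta_s$, distributed so that Server $s$ holds the subsystem $\mathcal{Q}_s$. Each Server $s$ encodes its accessible messages by applying Pauli operations ${\sf X}(x){\sf Z}(z)$ whose exponents are affine functions of the messages in its possession, chosen so that, viewed through the $N$-sum box abstraction of \cite{Allaix_N_sum_box}, the overall Pauli displacement applied to the AME state corresponds to the target sum $\sum_k {\sf W}_k$. Alice completes the protocol with a joint stabilizer measurement on the surviving subsystems $\mathcal{Q}_{\mathcal{S}\setminus \mathcal{E}(t)}$. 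The two constraints inside the $\min$ in $\mathfrak{D}_{\AME}$ arise, respectively, as a quantum Singleton-type bound on the erasure-correcting capability of the full AME resource (the $\sum_{s\in \mathcal{S}} \Delta_s$ term), and a superdense-coding-style constraint on how much classical information about ${\sf W}_k$ can be loaded into the servers of $\mathcal{W}(k)$ and still survive the erasure of $\mathcal{E}(t)$ (the $\sum_{s\in \mathcal{W}(k)} 2\Delta_s$ term).

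The hard part will be the AME construction: one must verify that a single initial state and a single family of Pauli encoders (independent of $t$) simultaneously support decoding for every erasure pattern, and that AME states with the required parameters exist over a sufficiently large alphabet (which can be arranged by enlarging $L$ or moving to an extension field of $\mathbb{F}_d$). Once both $\mathfrak{D}_{\AME}$ and $\mathfrak{D}_{\TQC}$ are shown to lie in $\mathfrak{D}^*$, the inclusion $\conv(\mathfrak{D}_{\AME}^+ \cup \mathfrak{D}_{\TQC}^+) \subseteq \mathfrak{D}^*$ follows from the padding/time-sharing closure of $\mathfrak{D}^*$.
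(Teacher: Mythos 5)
Your outline coincides with the paper's own: the TQC corner is achieved by using computational-basis states so that the problem reduces to classical linear sum-network coding with erasures (invoking \cite{wei2023robust}), the AME corner is achieved via an $N$-sum box built on an absolutely maximally entangled state with Pauli encoders and a stabilizer measurement, and the full region $\conv(\mathfrak{D}_{\AME}^+ \cup \mathfrak{D}_{\TQC}^+)$ follows by padding and time-sharing. So the approach is the right one. But the proposal defers precisely the part that carries the content of the theorem, which you yourself label ``the hard part,'' so as written it is a proof plan rather than a proof.

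Concretely, what is missing is the construction and rank analysis that produce the inequalities defining $\mathfrak{D}_{\AME}$. The paper chooses, for given multiplicities $N_s$, an $N$-sum box transfer matrix ${\bf M} \in \mathbb{F}_q^{N\times 2N}$ with $\rk({\bf M}_k) = \min\{N, \sum_{s\in\mathcal{W}(k)} 2N_s\}$, together with a full-column-rank ${\bf U}\in\mathbb{F}_q^{N\times u}$, $u = N - \max_{t}\sum_{s\in\mathcal{E}(t)}2N_s$, such that $[{\bf U},{\bf E}_t]$ has full column rank for every erasure pattern $t$; the dimension law then gives $\rk({\bf U}_k)\ge \min\{N,\sum_{s\in\mathcal{W}(k)}2N_s\} - \max_t\sum_{s\in\mathcal{E}(t)}2N_s$, and requiring this to be at least $l$ is exactly where both terms in the $\min$ of \eqref{eq:region_AME} originate. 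In particular, the $\sum_{s\in\mathcal{S}}\Delta_s$ term is just $\rk({\bf M}_k)\le N$, not an independent quantum Singleton bound as your description suggests. The existence of ${\bf U}$ and of a decoding matrix ${\bf V}_{\dec}$ making each ${\bf R}_k$ invertible is established by Schwartz--Zippel arguments requiring $q>TN$ and $q>Kl$; the servers then precompensate by encoding ${\bf R}_k^{-1}{\bf W}_k$ in place of ${\bf W}_k$, since the raw measurement result is $\sum_k {\bf U}_k{\bf V}_k'{\bf W}_k$ plus a noise term in $\langle {\bf E}_t\rangle$, not the target sum directly. Finally, ``measure the surviving subsystems'' is the right intuition but needs the substitution lemma (Lemma~\ref{lem:independent_mixed}): a random unknown ${\sf X}{\sf Z}$ on one subsystem leaves it maximally mixed and independent of the rest, so a lost $\mathcal{Q}_i$ can be replaced by a locally generated maximally mixed ancilla and the same $N$-sum box measurement applied. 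None of these is a wrong turn, but they are the substance of the achievability proof rather than details to be waved through.
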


\noindent Theorem \ref{thm:achievability} is an inner bound (based on an achievability argument) on the optimal  tradeoff region $\mathfrak{D}^*$. The proof appears in Section \ref{proof:achievability}.  We first prove $\mathfrak{D}_{\AME} \subseteq \mathfrak{D}^*$ by constructing coding schemes that make use of quantum entanglement. The name ``AME" comes from the fact that the quantum state in the scheme is an ``absolutely maximally entangled" state (e.g., see \cite{helwig2012absolute,  huber2013structure,goyeneche2015absolutely}). The design is facilitated by the $N$-sum box protocol in \cite{Allaix_N_sum_box}. On the other hand, $\mathfrak{D}_{\TQC} \subseteq \mathfrak{D}^*$ is directly implied by a  classical network coding result in \cite{wei2023robust} together with the idea of `treating qudits as classical dits' (TQC). It then follows that $\mathfrak{D}_{\AME}^+ \subseteq \mathfrak{D}^*$ and $\mathfrak{D}_{\TQC}^+ \subseteq \mathfrak{D}^*$. Finally, by a time-sharing argument, any convex combination of the tuples in $\mathfrak{D}_{\AME}^+ \cup \mathfrak{D}_{\TQC}^+$ is also in $\mathfrak{D}^*$. This means that $\mathfrak{D}_{\achi} = \conv(\mathfrak{D}_{\AME}^+ \cup \mathfrak{D}_{\TQC}^+) \subseteq \mathfrak{D}^*$.

\subsection{Known converse bounds}
Let us recall a useful bound from \cite{mamindlapally2023singleton, grassl2022entropic}, which we present here under the framework of $\Sigma^s$-QEMAC to make the connection more transparent.

\begin{theorem}[EACQ singleton bound \cite{mamindlapally2023singleton}]  \label{thm:eacq_bound}
	Consider the $\Sigma^s$-QEMAC with $S$ data-servers and an auxiliary Server $0$, with $\alpha=S$ so that we have only $K=1$ data stream ${\sf W}$, which is available only to the data-servers, and the answers from any $\beta$ of the data-servers may be erased. Then any $(\Delta_0, \Delta_1,\cdots, \Delta_S) \in \mathfrak{D}^*$ must satisfy the bounds,
	\begin{align}
		&~~~~~~\sum_{s\in \mathcal{I}}\Delta_{s} \geq 1/2, ~~\forall \mathcal{I} \in \binom{[S]}{S-\beta}, \label{eq:eacq_1} \\
		&\mbox{and} ~~\Delta_0+\Delta_1+\cdots+\Delta_S \geq \frac{S}{S-\beta}.\label{eq:eacq_2}
	\end{align}
\end{theorem}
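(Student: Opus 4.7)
The plan is to prove the two inequalities separately, both exploiting Holevo's bound (Lemma \ref{lem:Holevo}) combined with the observation that Server $0$ carries no classical data and thus applies only the identity on $\mathcal{Q}_0$. This observation gives $\rho_{\mathcal{Q}_0}^{(w)}=\rho_{\mathcal{Q}_0}^{\init}$ independent of $w$, so $I({\sf W};\mathcal{Q}_0)=0$. Combined with the perfect-decoding requirement of ${\sf W}$ from $\mathcal{Q}_0\mathcal{Q}_{\mathcal{I}}$ for each surviving set $\mathcal{I}=[S]\setminus\mathcal{E}(t)$ of size $S-\beta$, and Lemma \ref{lem:Holevo}, this yields $I({\sf W};\mathcal{Q}_0\mathcal{Q}_{\mathcal{I}})=H({\sf W})=L\log d$.

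For \eqref{eq:eacq_1}, I would fix $\mathcal{I}$ and use the chain rule together with $I({\sf W};\mathcal{Q}_0)=0$ to conclude $I({\sf W};\mathcal{Q}_{\mathcal{I}}\mid\mathcal{Q}_0)=L\log d$. Expanding as $H(\mathcal{Q}_{\mathcal{I}}\mid\mathcal{Q}_0)-H(\mathcal{Q}_{\mathcal{I}}\mid\mathcal{Q}_0{\sf W})$, I upper-bound the first term by $\log\dim\mathcal{Q}_{\mathcal{I}}$ and lower-bound the (possibly negative) second term by $-\log\dim\mathcal{Q}_{\mathcal{I}}$, obtaining $L\log d\le 2L\log d\sum_{s\in\mathcal{I}}\Delta_s$. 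Dividing gives $\sum_{s\in\mathcal{I}}\Delta_s\ge 1/2$. The factor of two is the superdense-coding signature: the entanglement available through $\mathcal{Q}_0$ lets each transmitted qudit in $\mathcal{Q}_{\mathcal{I}}$ carry up to two classical dits of ${\sf W}$.

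For \eqref{eq:eacq_2}, I would apply Holevo directly on each surviving set to get $\Delta_0+\sum_{s\in\mathcal{I}}\Delta_s\ge 1$ for all $\mathcal{I}\in\binom{[S]}{S-\beta}$. Averaging over the $\binom{S}{\beta}$ patterns, and using that each $s\in[S]$ appears in fraction $(S-\beta)/S$ of the surviving sets, yields $\Delta_0+\frac{S-\beta}{S}\sum_{s=1}^S\Delta_s\ge 1$, equivalently $\frac{S}{S-\beta}\Delta_0+\sum_s\Delta_s\ge\frac{S}{S-\beta}$. The main obstacle is that this direct averaging produces a coefficient $\frac{S}{S-\beta}>1$ on $\Delta_0$, strictly weaker than the claimed coefficient $1$ whenever $\beta>0$. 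Sharpening this to the stated coefficient $1$ requires the more delicate EACQ Singleton analysis of \cite{mamindlapally2023singleton, grassl2022entropic}, which leverages that $\mathcal{Q}_0$ provides pure entanglement (no classical-data channel) so that its contribution to the total resource budget $\Delta_0+\sum_s\Delta_s$ must be counted on the same footing as a transmitted qudit, rather than with the inflated coefficient produced by a naive linear combination of per-pattern Holevo bounds.
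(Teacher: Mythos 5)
Your argument for inequality \eqref{eq:eacq_1} matches the paper's almost line for line: Holevo, then $I({\sf W};\mathcal{Q}_0)=0$ by no-signaling, then the chain rule, then bounding $H(\mathcal{Q}_{\mathcal{I}}\mid\mathcal{Q}_0)\leq\sum_{s\in\mathcal{I}}\log_d\delta_s$ and $-H(\mathcal{Q}_{\mathcal{I}}\mid\mathcal{Q}_0,{\sf W})\leq\sum_{s\in\mathcal{I}}\log_d\delta_s$ (the paper names this Araki--Lieb, but it is the same fact you used). That part is fine.

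For \eqref{eq:eacq_2} there is a genuine gap, one you yourself diagnose and then hand off to the cited references rather than close. Your per-pattern bound starts from $L\leq I({\sf W};\mathcal{Q}_0\mathcal{Q}_{\mathcal{I}})\leq\log\dim(\mathcal{Q}_0\mathcal{Q}_{\mathcal{I}})$, which charges $\Delta_0$ once for \emph{every} erasure pattern, so averaging inflates its coefficient to $\tfrac{S}{S-\beta}$. The fix is not a ``more delicate'' outside argument but a reordering of steps already at your disposal. Peel off $\mathcal{Q}_0$ by no-signaling \emph{before} applying any dimension bound, exactly as you did for \eqref{eq:eacq_1}: $L\leq H(\mathcal{Q}_{\mathcal{I}}\mid\mathcal{Q}_0)-H(\mathcal{Q}_{\mathcal{I}}\mid\mathcal{Q}_0,{\sf W})\leq H(\mathcal{Q}_{\mathcal{I}})-H(\mathcal{Q}_{\mathcal{I}}\mid\mathcal{Q}_0,{\sf W})$. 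Crucially, the first term no longer sees $\delta_0$, only $\sum_{s\in\mathcal{I}}\log_d\delta_s$. Now sum over all $\mathcal{I}\in\binom{[S]}{S-\beta}$. The positive terms total at most $\binom{S-1}{\beta}\sum_{s\in[S]}\log_d\delta_s$. For the negative conditional entropies, the quantum \emph{conditional} Han inequality (\cite[Lem.~3]{grassl2022entropic}) gives
\begin{align*}
\sum_{\mathcal{I}\in\binom{[S]}{S-\beta}}H(\mathcal{Q}_{\mathcal{I}}\mid\mathcal{Q}_0,{\sf W}) \;\geq\; \binom{S-1}{\beta}\,H(\mathcal{Q}_{[S]}\mid\mathcal{Q}_0,{\sf W}),
\end{align*}
and then $-H(\mathcal{Q}_{[S]}\mid\mathcal{Q}_0,{\sf W})\leq H(\mathcal{Q}_0)\leq\log_d\delta_0$ (non-negativity of $H(\mathcal{Q}_{[S]}\mathcal{Q}_0\mid{\sf W}={w})$). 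Thus $\Delta_0$ is charged only once, not once per pattern: $\binom{S}{\beta}L\leq\binom{S-1}{\beta}\left(\sum_{s\in[S]}\Delta_s+\Delta_0\right)L$, and dividing by $\binom{S-1}{\beta}=\tfrac{S-\beta}{S}\binom{S}{\beta}$ yields $\sum_{s\in[S]}\Delta_s+\Delta_0\geq\tfrac{S}{S-\beta}$ with the claimed unit coefficient on $\Delta_0$. This self-contained derivation is what the paper provides in Appendix~\ref{proof:eacq_bound}; your proposal stops short of it.
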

\noindent Note that in this theorem since there is only one data stream, computing the sum is equivalent to recovering the data stream. So the $\Sigma$-QEMAC problem reduces to a standard communication problem. Thus, this theorem essentially follows from the EACQ singleton bound \cite{mamindlapally2023singleton}, for the special case of trading \emph{qudits and entanglement} for \emph{classical information}. Since the erasure in our model is assumed to be server-wise instead of channel-wise as assumed in \cite{mamindlapally2023singleton}, for the sake of completeness we provide in this paper the proof for our model in Appendix \ref{proof:eacq_bound}. 

Although Theorem \ref{thm:eacq_bound} is not directly applicable to $\Sigma$-QEMAC with $K>1$ data streams, it can be used to find converse bounds for $K>1$ by incorporating  the cut-set argument (e.g., \cite{Appuswamy1}), i.e., separating the parties into two groups and allowing full cooperation within each group, thus reducing it to a problem with $K=1$. Since cooperation cannot hurt the converse argument, the converse for $K=1$ is a valid converse for the original problem with $K>1$. In some cases, following such a cut-set argument, the converse for the resulting point to point communication problem yields useful outer bounds on $\mathfrak{D}^*$ for the computation problem, (e.g., \cite{Yao_Jafar_Sum_MAC}). However, we will see that the cut-set based approach does not suffice in general, and new converse bounds are needed.

\subsection{The Capacity of the $\Sigma^s$-QEMAC}
Next we specialize from arbitrary $\Sigma$-QEMAC settings to symmetric settings, and present a sharp capacity characterization for the $\Sigma^s$-QEMAC, which is the main contribution of this work. Recall that the feasibility for the symmetric case requires that $S\geq \alpha > \beta \geq 0$.

\begin{theorem} \label{thm:sym}
	For $\Delta_0\in \mathbb{R}_+$, the capacity of the $\Sigma^s$-QEMAC is
		$C(\Delta_0) = (S\Delta^*)^{-1}$, 	where
	\begin{align} \label{eq:sym_optimal_cost}
		\Delta^*  &\triangleq \min_{(\Delta_0, \Delta, \cdots, \Delta)\in \mathfrak{D}^*} \Delta \notag \\
		&= \begin{cases}
			\max\left\{\frac{1}{2(\alpha-\beta)}, \frac{1-\Delta_0}{S-2\beta} \right\}, & S \geq \alpha+\beta \\
			\max\left\{\frac{1}{2(\alpha-\beta)}, \frac{1}{\alpha-\beta}-\frac{\Delta_0}{2\alpha-S}  \right\}, & S < \alpha + \beta
		\end{cases}.
	\end{align}
\end{theorem}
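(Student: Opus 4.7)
I split the proof into achievability (upper bounds on $\Delta^*$) and converse (lower bounds on $\Delta^*$).

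For achievability I specialize Theorem~\ref{thm:achievability} to the symmetric setting: with $\Delta_1=\cdots=\Delta_S=\Delta$, $\mathfrak{D}_{\AME}$ collapses to the pair $2(\alpha-\beta)\Delta\geq 1$ and $\Delta_0+(S-2\beta)\Delta\geq 1$, while $\mathfrak{D}_{\TQC}$ collapses to $(\alpha-\beta)\Delta\geq 1$. In case~1 ($S\geq\alpha+\beta$) the AME scheme alone achieves $\Delta=\max\{1/(2(\alpha-\beta)),\,(1-\Delta_0)/(S-2\beta)\}$, because both AME inequalities are simultaneously tight at that pair. In case~2 ($S<\alpha+\beta$) I time-share between the TQC corner $(\Delta_0,\Delta)=(0,\,1/(\alpha-\beta))$ and the AME corner $\bigl(\tfrac{2\alpha-S}{2(\alpha-\beta)},\,\tfrac{1}{2(\alpha-\beta)}\bigr)$; a one-line convex-combination check shows the resulting trajectory is exactly the linear segment $\Delta=\tfrac{1}{\alpha-\beta}-\tfrac{\Delta_0}{2\alpha-S}$, and for $\Delta_0$ past the AME corner the point $(\Delta_0,\,1/(2(\alpha-\beta)))$ already lies in $\mathfrak{D}_{\AME}^+$. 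This matches the claimed formula in both cases.

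For the converse I need three lower bounds on $\Delta^*$: (A) $\Delta\geq 1/(2(\alpha-\beta))$ for all $\Delta_0$; (B) $\Delta_0+(S-2\beta)\Delta\geq 1$ in case~1; and (C) $\Delta_0+(2\alpha-S)\Delta\geq(2\alpha-S)/(\alpha-\beta)$ in case~2. Bound~(A) is a clean cut-set: fix some $k^*\in[K]$, set ${\sf W}_k=0$ for $k\neq k^*$, and allow the $S-\alpha$ data-servers outside $\mathcal{W}(k^*)$ to cooperate (without erasure) with Server~$0$ into a single auxiliary of cost $\Delta_0+(S-\alpha)\Delta$. The reduced problem is a $\Sigma^s$-QEMAC with $\alpha$ data-servers and $\beta$ erasures, to which Theorem~\ref{thm:eacq_bound} bound~\eqref{eq:eacq_1} (applied with $S\leftarrow\alpha$) immediately gives $(\alpha-\beta)\Delta\geq 1/2$.

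Bounds (B) and (C) are the heart of the proof, and neither follows from the same cut-set plus Theorem~\ref{thm:eacq_bound}: bound~\eqref{eq:eacq_2} in the reduced problem only produces $\Delta_0+S\Delta\geq\alpha/(\alpha-\beta)$, which is strictly weaker in the regimes where (B) or (C) is active. My plan for (B) is to pick disjoint $\beta$-subsets $\mathcal{A},\mathcal{B}\subseteq[S]$ (possible since $2\beta\leq S-1$ in case~1), set $\mathcal{C}=[S]\setminus(\mathcal{A}\cup\mathcal{B})$, and Stinespring-dilate every encoder together with a purification of $\rho^{\init}$ so that, conditioned on each data realization, the global post-encoding state on $\mathcal{Q}_{\mathcal{S}}$ together with the Stinespring ancillas and the purifying environment is pure. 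Correctness under the two erasure patterns plus Holevo (Lemma~\ref{lem:Holevo}) yields $L\leq I(\Sigma;\mathcal{Q}_{\mathcal{B}}\mathcal{Q}_{\mathcal{C}\cup\{0\}})$ and $L\leq I(\Sigma;\mathcal{Q}_{\mathcal{A}}\mathcal{Q}_{\mathcal{C}\cup\{0\}})$, where $\Sigma={\sf W}_1+\cdots+{\sf W}_K$ is the (uniform) sum; combining these with the dimension bounds $H(\mathcal{Q}_{\mathcal{A}}),H(\mathcal{Q}_{\mathcal{B}})\leq L\beta\Delta$ and $H(\mathcal{Q}_{\mathcal{C}\cup\{0\}})\leq L\bigl((S-2\beta)\Delta+\Delta_0\bigr)$ together with weak monotonicity / strong subadditivity on the global pure state is the step that should produce $H(\mathcal{Q}_{\mathcal{C}\cup\{0\}})\geq L$, i.e.~(B). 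For (C), the case-2 geometry ($\alpha+\beta>S$ forces every data stream to intersect every erasure pattern) calls instead for restricting to a pair $k_1,k_2$ with $\mathcal{W}(k_1)\cup\mathcal{W}(k_2)=[S]$ and $|\mathcal{W}(k_1)\cap\mathcal{W}(k_2)|=2\alpha-S$ (always available in case~2), setting all other ${\sf W}_k$ to zero so the sum becomes ${\sf W}_{k_1}+{\sf W}_{k_2}$, and running an analogous weak-monotonicity argument on the three-way server partition of sizes $(2\alpha-S,\,S-\alpha,\,S-\alpha)$. The crux --- the ``optimal application of weak monotonicity'' highlighted in the abstract --- is making the coefficients come out to \emph{exactly} $(1,\,S-2\beta)$ and $\bigl((2\alpha-S)/(\alpha-\beta),\,2\alpha-S\bigr)$: a naive sum of the two Holevo bounds plus subadditivity leaves a $\beta\Delta$ slack that only a careful SSA / Araki--Lieb step can close.
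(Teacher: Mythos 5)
Your achievability argument and your bound (A) are correct and essentially identical to the paper's. The genuine gap is in bounds (B) and (C): your proposed chain — Holevo applied to the \emph{sum} $\Sigma={\sf W}_1+\cdots+{\sf W}_K$ plus weak monotonicity / strong subadditivity on a global purification — cannot close the $\beta\Delta$ slack you correctly identify, because the indispensable ingredient is the \textbf{no-signaling lemma} (Lemma~\ref{lem:no_signal}), which your proposal never invokes for (B) or (C).

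Concretely, in Case~1 the paper does not apply Holevo to $\Sigma$. It conditions on ${\sf W}_{[K]\setminus\{k_1\}}$ for a stream $k_1$ chosen so that $\mathcal{W}(k_1)=\mathcal{I}_1\cup\mathcal{I}_3$ contains the erased set $\mathcal{I}_1$ but is disjoint from the \emph{other} erasure candidate $\mathcal{I}_2$. Holevo then gives $L\leq I({\sf W}_{k_1};\mathcal{Q}_0\mathcal{Q}_{\mathcal{I}_2}\mathcal{Q}_{\mathcal{I}_3}\mathcal{Q}_{\mathcal{I}_4}\mid{\sf W}_{[K]\setminus\{k_1\}})$, and no-signaling kills $I({\sf W}_{k_1};\mathcal{Q}_{\mathcal{I}_2}\mid\cdot)$ outright, which \emph{moves $\mathcal{Q}_{\mathcal{I}_2}$ into the conditioning} of both entropy terms rather than into the dimension count. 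After the symmetric argument with $t_2,k_2$ produces a bound conditioned on $\mathcal{Q}_{\mathcal{I}_1}$ instead, adding the two and invoking weak monotonicity $H(\mathcal{Q}_0\mathcal{Q}_{\mathcal{I}_3}\mathcal{Q}_{\mathcal{I}_4}\mid\mathcal{Q}_{\mathcal{I}_2},{\sf W}_{[K]})+H(\mathcal{Q}_0\mathcal{Q}_{\mathcal{I}_3}\mathcal{Q}_{\mathcal{I}_4}\mid\mathcal{Q}_{\mathcal{I}_1},{\sf W}_{[K]})\geq 0$ yields exactly $\Delta_0+(S-2\beta)\Delta\geq 1$. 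Your version leaves $\mathcal{Q}_{\mathcal{A}}$ and $\mathcal{Q}_{\mathcal{B}}$ on the dimension side of the two Holevo bounds, and there is no SSA/Araki–Lieb manipulation that removes them once they are there — purification via Stinespring dilation does not help either, because the decisive fact is a classical independence (no data stream both escapes erasure under one pattern and is present at the other pattern's erased set), which enters only through Lemma~\ref{lem:no_signal} applied conditionally on ${\sf W}_{[K]\setminus\{k_1\}}$. The same gap is present in your Case~2 plan; moreover, the paper's Case~2 argument additionally requires a quantum conditional Han-type inequality (\cite[Lemma~3]{grassl2022entropic}) to average the single-$\mathcal{J}$ bounds over all $(\alpha-\beta)$-subsets $\mathcal{J}\subseteq\mathcal{I}_1$ before the two-sided weak monotonicity step, which your sketch also omits.
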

\noindent The proof appears in Section \ref{proof:sym}. Note that Theorem \ref{thm:sym} is a capacity result, and as such requires both a proof of achievability and a tight converse. The achievability is obtained by letting $\Delta_0\in \mathbb{R}_+$ and $\Delta_1=\Delta_2=\cdots=\Delta_S=\Delta$  in Theorem \ref{thm:achievability} and evaluating the smallest $\Delta$ with respect to $\Delta_0$ such that $(\Delta_0, \Delta,\cdots, \Delta) \in \mathfrak{D}_{\achi}$.  The smallest $\Delta$ turns out to be $\Delta^*$ in \eqref{eq:sym_optimal_cost}.  Then $C(\Delta_0) \geq (S\Delta^*)^{-1}$ establishes the inner (lower) bound for $C(\Delta_0)$. The outer (upper) bound $C(\Delta_0) \leq (S\Delta^*)^{-1}$ utilizes not only the EACQ singleton bound (Theorem \ref{thm:eacq_bound}) combined with a cut-set argument, but also a new bound that is derived in this work, where weak monotonicity \cite{linden2005new, prabhu2013exclusion}  of quantum entropy plays an important role. Insufficiency of the EACQ singleton bound and cut-set argument is demonstrated by an example in Section \ref{sec:example}.

The next two corollaries follow directly from Theorem \ref{thm:sym}. The first corollary considers the case $\Delta_0=0$, i.e., when no entanglement is shared initially between Alice and the data-servers. Essentially Server $0$ does not exist in this case. However, entanglement is still allowed among the data-servers.

\begin{corollary}[$\Delta_0=0$] \label{cor:no_helper}
	For $\Delta_0=0$, 
	\begin{align}
		C(0) &= \max\left\{ \underbrace{\min \left\{ \frac{2(\alpha-\beta)}{S}, \frac{S-2\beta}{S} \right\}}_{R_{\AME}}, \underbrace{\frac{\alpha-\beta}{S}}_{R_{\TQC}} \right\} \label{eq:achi_symmetric} \\
		&= \begin{cases} 
			\frac{2(\alpha-\beta)}{S} = R_{\AME}, & S\geq 2\alpha \\
			\frac{S-2\beta}{S}= R_{\AME}, & \alpha +\beta \leq S \leq 2\alpha \\
			\frac{\alpha-\beta}{S}= R_{\TQC}, & S \leq \alpha + \beta
		\end{cases}. \label{eq:achi_symmetric_2}
	\end{align}
\end{corollary}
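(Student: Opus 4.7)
The plan is to derive Corollary~\ref{cor:no_helper} as a purely algebraic specialization of Theorem~\ref{thm:sym} at $\Delta_0 = 0$. Since $C(\Delta_0) = (S\Delta^*)^{-1}$ and $1/\max\{a,b\} = \min\{1/a,1/b\}$ on positive reals, the task reduces to evaluating $\Delta^*$ at $\Delta_0 = 0$ in each of the two branches of \eqref{eq:sym_optimal_cost} and inverting.

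First I will substitute $\Delta_0 = 0$ into \eqref{eq:sym_optimal_cost}. For $S \geq \alpha+\beta$ this gives $\Delta^* = \max\{\tfrac{1}{2(\alpha-\beta)}, \tfrac{1}{S-2\beta}\}$, so that $C(0) = \min\{\tfrac{2(\alpha-\beta)}{S}, \tfrac{S-2\beta}{S}\} = R_{\AME}$. For $S < \alpha+\beta$ the second term inside the max becomes $\tfrac{1}{\alpha-\beta}$, which strictly exceeds $\tfrac{1}{2(\alpha-\beta)}$ because $\alpha > \beta$; hence $\Delta^* = \tfrac{1}{\alpha-\beta}$ and $C(0) = \tfrac{\alpha-\beta}{S} = R_{\TQC}$. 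This already yields a two-regime characterization, and it remains to express the answer in the unified form \eqref{eq:achi_symmetric}.

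Second, I will verify the identity $C(0) = \max\{R_{\AME}, R_{\TQC}\}$ by checking that $R_{\AME} \geq R_{\TQC}$ holds exactly when $S \geq \alpha+\beta$. The first term inside the min, $\tfrac{2(\alpha-\beta)}{S}$, always strictly exceeds $R_{\TQC} = \tfrac{\alpha-\beta}{S}$ since $\alpha > \beta$. The second term, $\tfrac{S-2\beta}{S}$, is at least $R_{\TQC}$ if and only if $S - 2\beta \geq \alpha - \beta$, i.e., $S \geq \alpha + \beta$. Hence for $S \geq \alpha+\beta$ the inner min dominates $R_{\TQC}$ and the outer max equals $R_{\AME}$; otherwise the outer max equals $R_{\TQC}$. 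This reconciles \eqref{eq:achi_symmetric} with the two-regime answer.

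Finally, I will split $R_{\AME}$ according to which term of the min is active, using the equivalence $2(\alpha-\beta) \leq S - 2\beta \iff S \geq 2\alpha$, to obtain the three explicit cases in \eqref{eq:achi_symmetric_2}. The boundaries $S = \alpha+\beta$ and $S = 2\alpha$ should be checked for consistency: at $S = \alpha+\beta$ both $\tfrac{S-2\beta}{S}$ and $\tfrac{\alpha-\beta}{S}$ coincide, and at $S = 2\alpha$ both $\tfrac{2(\alpha-\beta)}{S}$ and $\tfrac{S-2\beta}{S}$ coincide, so the piecewise definition is unambiguous. There is no conceptual obstacle — the corollary is pure bookkeeping of max/min structure on top of the nontrivial Theorem~\ref{thm:sym}, the only subtlety being the inversion $1/\max = \min$ and the boundary checks.
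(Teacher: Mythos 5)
Your derivation is correct and matches the paper's approach exactly: the paper simply states that the corollaries "follow directly from Theorem~\ref{thm:sym}" without displaying the algebra, and you have filled in precisely that bookkeeping --- substituting $\Delta_0=0$, inverting via $1/\max = \min$, checking $R_{\AME}\geq R_{\TQC}\iff S\geq\alpha+\beta$ and $2(\alpha-\beta)\leq S-2\beta\iff S\geq 2\alpha$, and verifying the boundary cases.
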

$R_{\AME}$ that appears in \eqref{eq:achi_symmetric} is defined as the maximal value of $(\Delta_1+\cdots+\Delta_S)^{-1}$ (given $\Delta_0=0)$ for ${\bf \Delta} \in \mathfrak{D}_{\AME}$. Thus, $R_{\AME}$ is achieved by our quantum coding scheme. Similarly, $R_{\TQC}$ is defined as the maximal value of $(\Delta_1+\cdots+\Delta_S)^{-1}$ (given $\Delta_0=0)$ for ${\bf \Delta} = (\Delta_0, \Delta_1,\cdots, \Delta_S) \in \mathfrak{D}_{\TQC}$. Thus, $R_{\TQC}$ is achieved by treating qudits as classical dits. It can be seen from \eqref{eq:achi_symmetric_2} that when $S > \alpha+\beta$, we have $R_{\AME}>R_{\TQC}$, i.e., the optimal scheme is our proposed scheme that is facilitated by the $N$-sum box abstraction and utilizes quantum entanglement, outperforming the classical scheme which does not require quantum entanglement.

Fig. \ref{fig:numerical} illustrates the capacity (and the rates $R_{\AME}, R_{\TQC}$) from Corollary \ref{cor:no_helper} with $S=8$ data-servers and erasure levels $\beta \in \{1,2\}$ for various data replication levels $\alpha$.

\begin{figure}[ht]
\center
\begin{tikzpicture}
\begin{axis}[
      	width=0.7\textwidth,
      	height=0.5\textwidth,
      	xmin = 2, xmax = 8,
		ymin = 0, ymax = 1,
		xlabel={$\alpha$},
        ylabel={\footnotesize Capacity (Rate)},
		legend pos=north west,
    	ymajorgrids=true,
    	grid=both,
	    xtick={2,3,4,5,6,7,8},
		legend style={at={(0.64,0.42)},anchor=north west},
    	]
	
	      \addplot[color = blue!25, line width=2mm, opacity=0.4] coordinates {
      	(2,1/4)
      	(3,2/4)
      	(4,3/4)
      	(5,3/4)
      	(6,3/4)
      	(7,3/4)
      	(8,7/8)
              };
      
      \addplot[color = red!25, line width=2mm, opacity=0.4] coordinates {
      	(3,1/4)
      	(4,2/4)
      	(5,2/4)
      	(6,2/4)
      	(7,5/8)
      	(8,6/8)
              };

      \addplot[color=cyan, thick, mark=o, dashdotted, mark options={solid}] coordinates {
      	(2,1/4)
      	(3,2/4)
      	(4,3/4)
      	(5,3/4)
      	(6,3/4)
      	(7,3/4)
      	(8,3/4)
      };
      
      \addplot[color=magenta, thick, mark=o, dashdotted, mark options={solid}] coordinates {
      	(2,1/8)
      	(3,2/8)
      	(4,3/8)
      	(5,4/8)
      	(6,5/8)
      	(7,6/8)
      	(8,7/8)
      };
      
      \addplot[color=orange, thick, mark=asterisk, dashdotted, mark options={solid}] coordinates {
      	(3,1/4)
      	(4,2/4)
      	(5,2/4)
      	(6,2/4)
      	(7,2/4)
      	(8,2/4)
      };
      
      \addplot[color=violet, thick, mark=asterisk, dashdotted, mark options={solid}] coordinates {
      	(3,1/8)
      	(4,2/8)
      	(5,3/8)
      	(6,4/8)
      	(7,5/8)
      	(8,6/8)
      };

       \addlegendentry{\footnotesize Capacity, $\beta=1$}
       \addlegendentry{\footnotesize Capacity, $\beta=2$}
       \addlegendentry{\footnotesize $(R_{\AME}), \beta=1$}
       \addlegendentry{\footnotesize $(R_{\TQC}), \beta=1$}
       \addlegendentry{\footnotesize $(R_{\AME}), \beta=2$}
       \addlegendentry{\footnotesize $(R_{\TQC}), \beta=2$}
    \end{axis}

\end{tikzpicture}
   \caption{The capacity (and rates $R_{\rm AME}$ and $R_{\rm TQC}$) are shown for $S=8$, $\beta \in \{1,2\}$ versus $\alpha$ when the receiver Alice has no prior shared entanglement with the data-servers $(\Delta_0=0)$. The data-servers are still allowed to be entangled. Note that $R_{\rm TQC}$ is the capacity of classical codes.}
   \label{fig:numerical}
\end{figure}
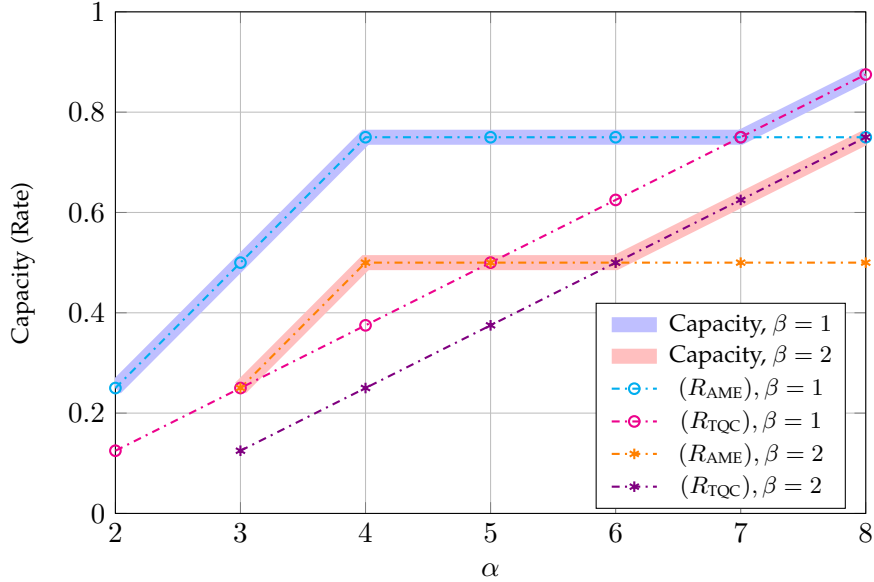

As $\Delta_0$ is increased, i.e., as the amount of initially shared entanglement between Alice and the data-servers is increased, there appears a critical threshold beyond which the capacity saturates, i.e., additional entanglement does not improve the capacity of the $\Sigma^s$-QEMAC. This threshold is highlighted in the second corollary.

\begin{corollary}[Saturation] \label{cor:sym_largeE}
	$C(\Delta_0) = \frac{2(\alpha-\beta)}{S}$ if $\Delta_0 \geq \frac{2\alpha-S}{2(\alpha-\beta)}$.
\end{corollary}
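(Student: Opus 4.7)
The plan is to deduce Corollary~\ref{cor:sym_largeE} directly from the closed-form expression for $\Delta^*$ in Theorem~\ref{thm:sym}, using the fact that $C(\Delta_0)=(S\Delta^*)^{-1}$. Thus $C(\Delta_0) = \frac{2(\alpha-\beta)}{S}$ is equivalent to $\Delta^* = \frac{1}{2(\alpha-\beta)}$, so the task reduces to checking that, under the hypothesis $\Delta_0\geq \frac{2\alpha-S}{2(\alpha-\beta)}$, the first argument of the max in \eqref{eq:sym_optimal_cost} dominates the second in both regimes of Theorem~\ref{thm:sym}. I would solve for the crossover value of $\Delta_0$ at which the two arguments of the max coincide, and verify that this crossover matches the stated threshold exactly.

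Regime 1 ($S\geq \alpha+\beta$): here $S-2\beta\geq \alpha-\beta>0$, so clearing denominators turns the inequality $\frac{1}{2(\alpha-\beta)}\geq\frac{1-\Delta_0}{S-2\beta}$ into $S-2\beta\geq 2(\alpha-\beta)-2(\alpha-\beta)\Delta_0$, which rearranges to $\Delta_0\geq \frac{2\alpha-S}{2(\alpha-\beta)}$. If in addition $S\geq 2\alpha$ this bound is nonpositive, so it is automatic for every $\Delta_0\in\mathbb{R}_+$, consistent with the $S\geq 2\alpha$ case of Corollary~\ref{cor:no_helper} where $C(0)$ already equals $R_\AME=\frac{2(\alpha-\beta)}{S}$.

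Regime 2 ($S<\alpha+\beta$): feasibility $\alpha>\beta$ together with the regime condition gives $2\alpha-S>\alpha-\beta>0$, so the term $\frac{\Delta_0}{2\alpha-S}$ is well-defined and positive. The inequality $\frac{1}{2(\alpha-\beta)}\geq\frac{1}{\alpha-\beta}-\frac{\Delta_0}{2\alpha-S}$ rearranges to $\frac{\Delta_0}{2\alpha-S}\geq\frac{1}{\alpha-\beta}-\frac{1}{2(\alpha-\beta)}=\frac{1}{2(\alpha-\beta)}$, i.e., again $\Delta_0\geq\frac{2\alpha-S}{2(\alpha-\beta)}$.

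Combining the two regimes, once $\Delta_0$ reaches the stated threshold the maximum in \eqref{eq:sym_optimal_cost} is attained at $\frac{1}{2(\alpha-\beta)}$, yielding $C(\Delta_0)=\frac{2(\alpha-\beta)}{S}$ as claimed. Since Theorem~\ref{thm:sym} is already in hand, the corollary is a routine algebraic consequence; the only care required is tracking the positivity of the denominators $S-2\beta$ and $2\alpha-S$ in the two regimes, which is guaranteed by the regime condition together with $S\geq\alpha>\beta\geq 0$. There is essentially no conceptual obstacle.
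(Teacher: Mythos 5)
Your proposal is correct and matches the paper's (implicit) reasoning: the paper simply asserts that Corollary~\ref{cor:sym_largeE} follows directly from Theorem~\ref{thm:sym}, and your case-by-case algebraic verification that $\frac{1}{2(\alpha-\beta)}$ dominates the second term of the max exactly when $\Delta_0 \geq \frac{2\alpha-S}{2(\alpha-\beta)}$, with denominator-sign checks handled by $S\geq\alpha>\beta\geq 0$ and the regime condition, is precisely that deduction.
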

Evidently, when $\Delta_0$ is above the threshold\footnote{If $\frac{2\alpha-S}{2(\alpha-\beta)}\leq 0$, then any $\Delta_0 \geq 0$ suffices.} value $\frac{2\alpha-S}{2(\alpha-\beta)}$, then the capacity is equal to $\frac{2(\alpha-\beta)}{S}$ which is $2$ times the classical capacity $R_{\TQC}$ that appears in  \eqref{eq:achi_symmetric_2}. The factor of $2$ represents a superdense coding gain \cite{Superdense, Yao_Jafar_Sum_MAC}, and one might wonder if it can be achieved with only pairwise entanglements of each data-server with Server $0$ based on the original superdense coding protocol \cite{Superdense}. Indeed, such a pairwise superdense coding strategy can achieve the rate $R=\frac{2(\alpha-\beta)}{S}$ but it requires $\Delta_0 \geq \frac{S}{2(\alpha-\beta)}$, i.e., more entanglement than the threshold in Corollary \ref{cor:sym_largeE}. This is explained as follows. The classical scheme achieves the cost tuple $(\Delta_1,\cdots, \Delta_S) = (\frac{1}{\alpha-\beta},\cdots, \frac{1}{\alpha-\beta})$. Then using pairwise superdense coding, $\Delta_0 = \frac{S}{2(\alpha-\beta)}$ and $\Delta_1=\cdots= \Delta_S=\frac{1}{2(\alpha-\beta)}$ is  achievable. 

Let us reiterate that the saturation threshold in Corollary \ref{cor:sym_largeE} is \emph{non-trivial} because this threshold is reached with strictly smaller $\Delta_0$. Note that $S>\alpha$ when we have more than one data-stream, so $2\alpha-S < S$. Thus, pairwise superdense coding is not sufficient to achieve  the capacity in Corollary \ref{cor:sym_largeE}. Indeed, our achievable scheme utilizes entanglement across all servers, instead of merely pairwise entanglements.

Fig. \ref{fig:S4a3b2} illustrates the functional form of $\Delta^*(\Delta_0)$ defined in \eqref{eq:sym_optimal_cost} evaluated for $(S,\alpha,\beta) = (4,3,2)$.
\begin{figure}[h]
\center
\begin{tikzpicture}[xscale=2.5, yscale=2.5]
\draw [->, thick] (0,0)--(3.3,0) node [above] {$\Delta_0$};
\draw [->, thick] (0,0)--(0,1.5) node [right=0.1cm] {$\Delta$};
\node (P1) at (0,1){$\bullet$};
\node [above right=-0.3cm and -0.2cm of P1]{\footnotesize $P_1:(0,1)$};
\node (P2) at (1,1/2){$\bullet$};
\node [above  right =-0.2cm and -0.2cm of P2]{\footnotesize $P_2:\left(1,\frac{1}{2}\right)$};
\node at (3,1/2){ };
\node (P3) at (2,1/2){$\bullet$};
\node [above  right =-0.2cm and -0.2cm of P3]{\footnotesize $P_3:\left(2,\frac{1}{2}\right)$};

\draw[thick] (0,1) -- (1,1/2) -- (3.05,1/2);

\draw[pattern={north east lines}] (0,0) -- (0,1) -- (1,1/2) -- (3, 1/2) -- (3,0) -- cycle;

\draw[thick, color=white] (3,0.49) -- (3,0.01);

\node [fill=white] at (1.25, 0.25) {$(\Delta_0,\Delta,\cdots,\Delta) \notin \mathfrak{D}^*$};

\node at (1.75, 1) {$(\Delta_0,\Delta,\cdots,\Delta) \in \mathfrak{D}^*$};

\node at (3.15, 0.5) {$\Delta^*$};

\end{tikzpicture}
  \caption{The functional form of $\Delta^*(\Delta_0)$ defined in \eqref{eq:sym_optimal_cost} is illustrated for $(S,\alpha,\beta) = (4,3,2)$. $P_1$ is achieved by treating qudits as classical dits (TQC). (This follows from Corollary \ref{cor:no_helper} since for this example $S\leq\alpha+\beta$. If $S>\alpha+\beta$, entanglement is still required to achieve $\Delta^*$ for $\Delta_0=0$). $P_2$ is achieved by our quantum coding scheme (AME). $P_3$ is achievable with the original superdense coding scheme based on pairwise entanglements.} \label{fig:S4a3b2}
\end{figure}
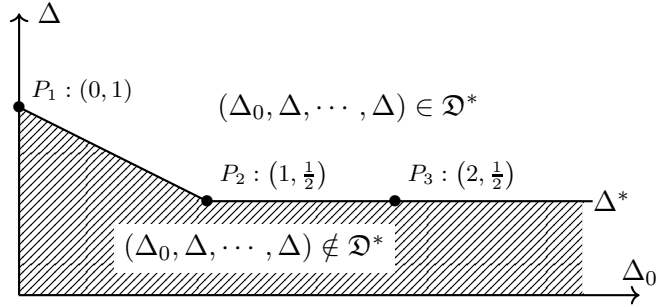

\section{Example} \label{sec:example}
\subsection{Converse bounds for symmetric $(S,\alpha,\beta) = (3,2,1)$}
Let us show the insufficiency of the cut-set argument through an example of a  $\Sigma^s$-QEMAC  with $(S,\alpha,\beta) = (3,2,1)$. There are $K= \binom{3}{2} = 3$ data streams, denoted as ${\sf A}, {\sf B}, {\sf C}$. Without loss of generality, say Server $1$ has $({\sf A}, {\sf B})$, Server $2$ has $({\sf A}, {\sf C})$ and Server $3$ has $({\sf B}, {\sf C})$. To simplify the example further, let us set $\Delta_0=0$, so that Server $0$ can be ignored. Say, $\beta = 1$, meaning that one of $\mathcal{Q}_1,\mathcal{Q}_2$ or $\mathcal{Q}_3$ may get erased.

Let us first apply the EACQ singleton bound with a cut-set argument to establish a baseline. For the cut-set argument, we need to separate the parties $\{$Alice, Server $1$, Server $2$, Server $3$$\}$ into $2$ groups. The group that contains Alice will jointly act as a receiver while the other group jointly acts as a transmitter in the resulting communication problem. Consider the following cuts.

\begin{enumerate}
	\item ``${\sf A}$  cut": We collect the parties that have the data-stream ${\sf A}$, i.e., $\{$Server $1$, Server $2$$\}$ into one group (transmitter) and $\{$Server $3$, Alice$\}$ into the other group (receiver). 
		Suppose ${\sf B} = {\sf C} = 0$ (or any constant, that is known by the receiver). Then the receiver must be able to determine ${\sf A}$, even if one of $\mathcal{Q}_1$ or $\mathcal{Q}_2$ gets erased. According to Theorem \ref{thm:eacq_bound}, with $\Delta_1=\Delta_2=\Delta_3=\Delta$, we have the bound,
			\begin{align}
				2\Delta  \geq 1/2,~~ \mbox{and}~~ 3\Delta \geq 2 \implies \Delta \geq 2/3.
			\end{align}
	\item ``${\sf A}{\sf B}$ cut": We collect the parties that have at least one of the data-streams ${\sf A},{\sf B}$, i.e., $\{$Server $1$, Server $2$, Server $3$$\}$ into the transmitter group,  leaving only $\{$Alice$\}$ in the receiver group. Suppose ${\sf C} = 0$. Then the receiver must be able to determine ${\sf A}+{\sf B}$ (which can be regarded as a single data stream). According to Theorem \ref{thm:eacq_bound}, we have
		\begin{align}
			3\Delta \geq 1/2,~~ \mbox{and}~~  3\Delta \geq 3/2 \implies \Delta \geq 1/2.
		\end{align}
	\item ``${\sf A}{\sf B}{\sf C}$ cut": The grouping can be also based on who has either (${\sf A}$ or ${\sf B}$ or ${\sf C}$), but this will yield the same partitioning and thus the same bound as the second case. 
\end{enumerate}
\begin{figure}
\center
\begin{tikzpicture}
\begin{scope}
  \node[rectangle, draw, fill=white, minimum height = 0.6cm, minimum width =0.7cm ] (S1) at (0,0) {\footnotesize Server $1$};
  \node[rectangle, draw, fill=white, minimum height = 0.6cm, minimum width =0.7cm ] (S2) at (2,0) {\footnotesize Server $2$};
  \node[rectangle, draw, fill=white, minimum height = 0.6cm, minimum width =0.7cm ] (S3) at (4,0) {\footnotesize Server $3$};
  \node[above = 1cm of S1] (A) {${\sf A}$}; 
  \node[above = 1cm of S2] (B) {${\sf B}$}; 
  \node[above = 1cm of S3] (C) {${\sf C}$};
  \node[below=1cm of S2] (sum) {${\sf A}+{\sf B}+{\sf C}$};
  \draw [thick, -latex] (A.south) -- (S1.north);
  \draw [thick, -latex] (A.south) -- (S2.north);
  \draw [thick, -latex] (B.south) -- (S1.north);
  \draw [thick, -latex] (B.south) -- (S3.north);
  \draw [thick, -latex] (C.south) -- (S2.north);
  \draw [thick, -latex] (C.south) -- (S3.north);
  \draw [thick, -latex] (S1.south) -- (sum.north) node[pos=0.5, left] {\small $\mathcal{Q}_1$};
  \draw [thick, -latex] (S2.south) -- (sum.north) node[pos=0.5, left=-0.1cm] {\small $\mathcal{Q}_2$};
  \draw [thick, -latex] (S3.south) -- (sum.north) node[pos=0.5, right=0.1cm] {\small $\mathcal{Q}_3$};
  \node[above = 2cm of S2] {\footnotesize Original sum-computation problem};
\end{scope}
  
\begin{scope}[shift={(8,0)}]
  \node[rectangle, draw, fill=white, minimum height = 0.6cm, minimum width =0.7cm ] (S1) at (0,0) {\footnotesize Server $1$};
  \node[rectangle, draw, fill=white, minimum height = 0.6cm, minimum width =0.7cm ] (S2) at (2,0) {\footnotesize Server $2$};
  \node[rectangle, draw, fill=white, minimum height = 0.6cm, minimum width =0.7cm ] (S3) at (4,0) {\footnotesize Server $3$};
  \node[above right = 1cm and 0cm of S1] (A) {${\sf A}$}; 
  \node[below=1cm of S2] (sum) {${\sf A}$};
  \draw [thick, -latex] (A.south) -- (S1.north);
  \draw [thick, -latex] (A.south) -- (S2.north);
  \draw [thick, -latex] (S1.south) -- (sum.north) node[pos=0.5, left] {\small $\mathcal{Q}_1$};
  \draw [thick, -latex] (S2.south) -- (sum.north) node[pos=0.5, left=-0.1cm] {\small $\mathcal{Q}_2$};
  \draw [thick, -latex] (S3.south) -- (sum.north) node[pos=0.5, right=0.1cm] {\small $\mathcal{Q}_3$};
  \node[above = 1.8cm of S2, align=center] {\footnotesize Reduced communication problem  \\ \footnotesize from the ``${\sf A}$" cut};
\end{scope}
\node[align=center] at (6,-1) {\footnotesize (One of $\mathcal{Q}_1, \mathcal{Q}_2$ or $\mathcal{Q}_3$ \\ \footnotesize may be erased)};

\end{tikzpicture}
\caption{Original sum-computation problem and the reduced communication problem from the ``${\sf A}$" cut.}
\label{fig:cutset}
\end{figure}
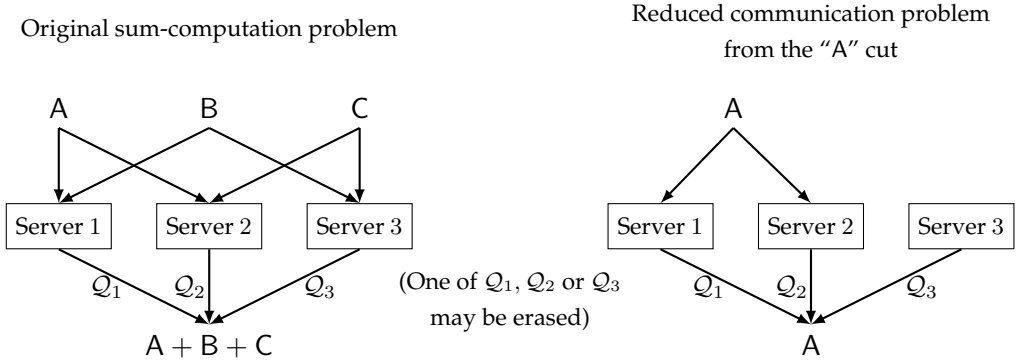

Other cuts also do not produce new bounds due to the symmetry of the problem, e.g.,  ``${\sf B}$ cut" produces the same bound as the ``${\sf A}$ cut", ``${\sf BC}$ cut" produces the same bound as the ``${\sf AB}$ cut", etc. Therefore, the cut-set and the EACQ singleton bound give us at best the converse bound $\Delta\geq 2/3$, which comes from the ``${\sf A}$ cut".  

However, this bound $\Delta \geq 2/3$ is not tight for the $\Sigma^s$-QEMAC  problem where Alice must recover the sum ${\sf A}+{\sf B}+{\sf C}$. As shown by Theorem \ref{thm:sym}, the smallest $\Delta$ given $\Delta_0=0$ for $(S,\alpha,\beta) = (3,2,1)$ is equal to $\Delta^*=1$. 

One may wonder if this is because the EACQ bound implicitly assumes that $\mathcal{Q}_3$ cannot be erased. Taking into consideration that $\mathcal{Q}_3$ may also be erased, we prove in Appendix \ref{proof:communication} that $\Delta = 3/4 < 1$ is still achievable for the communication problem reduced from the ``${\sf A}$ cut", based on superdense coding. This shows that such a cut-set argument \emph{cannot} provide stronger bounds than $\Delta \geq 3/4$, establishing the insufficiency of the cut-set argument for the $\Sigma^s$-QEMAC.
This observation shows an interesting distinction between the $\Sigma$-QEMAC, where erasures are allowed, and the original $\Sigma$-QMAC setting of \cite{Yao_Jafar_Sum_MAC}, in which no erasure is considered. A cut-set argument combined with a tight communication bound  suffices for the converse bounds for the $\Sigma$-QMAC in \cite{Yao_Jafar_Sum_MAC}, but not for the $\Sigma$-QEMAC considered in this work.

In fact the property of \emph{weak monotonicity}\footnote{Weak monotonicity is  regarded as equivalent to the strong subadditivity property of quantum entropies \cite{araki1970entropy}.} of quantum entropy (see e.g., \cite{pippenger2003inequalities, linden2005new}) plays an important role in proving a tight converse for the $\Sigma^s$-QEMAC. Intuitively, this is reminiscent of a result in \cite{prabhu2013exclusion} referred to as the exclusion principle in dense coding. However, since the $\Sigma^s$-QEMAC has distributed data streams and the information desired by Alice is a function (sum) of the data-streams rather the data-streams themselves, the connection to \cite{prabhu2013exclusion} is not straightforward.

To conclude this example, let us prove the bound $\Delta \geq 1$. Since a coding scheme must be correct for every realization of $({\sf A}, {\sf B}, {\sf C})$, it must be correct if ${\sf A}, {\sf B}, {\sf C}$ are independent random variables uniformly distributed in $\mathbb{F}_d^{L}$. Depending on the case of erasure, the measurement result is denoted as $Y_1$ (if $\mathcal{Q}_1$ gets erased); $Y_2$ (if $\mathcal{Q}_2$ gets erased) and $Y_3$ (if $\mathcal{Q}_3$ gets erased). Note that given any coding scheme, after the servers apply their encoding operations, the state of ${\sf A}{\sf B}{\sf C}\mathcal{Q}_1\mathcal{Q}_2\mathcal{Q}_3$ is determined, denoted as $\rho$. Let us consider the erasure of $\mathcal{Q}_1$ and conditioning on $({\sf B}, {\sf C})$. In this case we have,
\begin{align}
	L &= I({\sf A}, {\sf B}, {\sf C}; {\sf A} + {\sf B} + {\sf C} \mid {\sf B}, {\sf C}) \\
	&= I({\sf A}, {\sf B}, {\sf C}; Y_1 \mid {\sf B}, {\sf C}) \\
	&= I({\sf A}; Y_1 \mid {\sf B}, {\sf C}) \\
	&\leq I({\sf A}; \mathcal{Q}_2,\mathcal{Q}_3 \mid {\sf B}, {\sf C})_{\rho} \label{eq:ex_conv_1} \\
	&= \underbrace{I({\sf A}; \mathcal{Q}_3 \mid {\sf B}, {\sf C})_{\rho}}_{=0} + I({\sf A}; \mathcal{Q}_2 \mid \mathcal{Q}_3, {\sf B}, {\sf C})_{\rho}  \\
	&= I({\sf A}; \mathcal{Q}_2 \mid \mathcal{Q}_3, {\sf B}, {\sf C})_{\rho} \label{eq:ex_conv_2} \\
	&=H(\mathcal{Q}_2 \mid \mathcal{Q}_3, {\sf B}, {\sf C})_{\rho} - H(\mathcal{Q}_2 \mid \mathcal{Q}_3, {\sf A}, {\sf B}, {\sf C})_{\rho} \\
	&\leq H(\mathcal{Q}_2)_{\rho} - H(\mathcal{Q}_2 \mid \mathcal{Q}_3, {\sf A}, {\sf B}, {\sf C})_{\rho} \label{eq:ex_conv_3}
\end{align}
Step \eqref{eq:ex_conv_1} is by Lemma \ref{lem:Holevo} since Alice obtains $Y_1$ conditioned on any realization of $({\sf B}, {\sf C})$ by measuring $\mathcal{Q}_2\mathcal{Q}_3$. Step \eqref{eq:ex_conv_2} is by Lemma \ref{lem:no_signal} conditioned on any realization of $({\sf B}, {\sf C})$, since ${\sf A}$ is not available to Server $3$. Step \eqref{eq:ex_conv_3} is because conditioning does not increase entropy.
By symmetry, one can similarly obtain,
\begin{align}
	L \leq H(\mathcal{Q}_2)_{\rho} - H(\mathcal{Q}_2 \mid \mathcal{Q}_1, {\sf A}, {\sf B}, {\sf C})_{\rho} \label{eq:ex_conv_4}
\end{align}
by considering the erasure of $\mathcal{Q}_3$ and conditioning on $({\sf A},{\sf B})$.
Adding \eqref{eq:ex_conv_3} and \eqref{eq:ex_conv_4}, we have
\begin{align}
	2L &\leq 2H(\mathcal{Q}_2)_{\rho} - \underbrace{\big(H(\mathcal{Q}_2 \mid \mathcal{Q}_3, {\sf A}, {\sf B}, {\sf C})_{\rho} + H(\mathcal{Q}_2 \mid \mathcal{Q}_1, {\sf A}, {\sf B}, {\sf C})_{\rho} \big)}_{\geq 0}\\
	&\leq 2H(\mathcal{Q}_2)_{\rho}
\end{align}
because of weak monotonicity,\footnote{For a tripartite quantum system $XYZ$ in the state $\rho$, $H(X|Y)_{\rho}+H(X|Z)_{\rho} \geq 0$.} conditioned on any realization of $({\sf A},{\sf B},{\sf C})$. This shows that $\Delta_2 = \log_d \delta_2/L$ $ \geq H(\mathcal{Q}_2)_{\rho}/L  \geq 1$. \hfill \qed

\subsection{$N$-sum box based  coding scheme for the example in Fig. \ref{fig:example} with $\Delta_0=0$}
Let us present the solution for the example in Fig. \ref{fig:example} for the case where Alice has no prior entanglement with the data-servers, i.e., $\Delta_0=0$ so Server $0$ can be ignored. Let  ${\sf A}_\ell, {\sf B}_\ell, \cdots, {\sf F}_\ell \in \mathbb{F}_5$ for all $\ell \in [L]$, and set $L=2$ as the batch size of the coding scheme. Thus, the coding scheme must allow Alice to compute ${\sf A}_{[2]}+{\sf B}_{[2]}+\cdots+{\sf F}_{[2]}$, with each server transmitting a  $\delta=5$ dimensional quantum system, while tolerating the loss of the quantum subsystem from any one of the servers. The rate to be achieved is $R = 2/(4\log_5 5) = 1/2$.
The scheme makes use of the $N$-sum box formulation in \cite{Allaix_N_sum_box}, which is summarized as a lemma below.
\begin{lemma}[$N$-sum box \cite{Allaix_N_sum_box}] \label{lem:box}
	Given a field $\mathbb{F}_q$, a positive integer $N$, and an $N\times 2N$ matrix ${\bf M} = [{\bf M}_l, {\bf M}_r]$ where ${\bf M}_l, {\bf M}_r \in \mathbb{F}_q^{N\times N}$ satisfy the strong self-orthogonality (SSO) property: $\rk({\bf M}) = N$, ${\bf M}_r{\bf M}_l^\top = {\bf M}_l{\bf M}_r^\top$, there exists a set of orthogonal quantum states, denoted as $\{\ket{{\bf a}}_{\bf M}\}_{{\bf a}\in \mathbb{F}_q^{N\times 1}}$ on $N$ $q$-dimensional quantum subsystems $Q_1,Q_2,\cdots, Q_N$, such that applying ${\sf X}(x_i){\sf Z}(z_i)$ to $Q_i$ for all $i \in [N]$, the state of the composite quantum system $Q$ changes from $\ket{\bf a}_{\bf M}$ to $\ket{{\bf a}+{\bf M}\bbsmatrix{{\bf x}\\{\bf z}}}_{\bf M}$ (with global phases omitted), i.e., $\bigotimes_{i\in [N]}{\sf X}(x_i){\sf Z}(z_i) \ket{{\bf a}}_{\bf M} \equiv \ket{{\bf a}+{\bf M}\bbsmatrix{{\bf x}\\{\bf z}}}_{\bf M}$, where ${\bf x} \triangleq [x_1,\cdots, x_N]^\top \in \mathbb{F}_q^{N\times 1}$ and ${\bf z} \triangleq [z_1,\cdots, z_N]^\top \in \mathbb{F}_q^{N\times 1}$. We say that $(\mathbb{F}_q, N, {\bf M}, \{\ket{{\bf a}}_{\bf M}\}_{{\bf a}\in \mathbb{F}_q^{N\times 1}})$ is an $N$-sum box defined in $\mathbb{F}_q$ with transfer matrix ${\bf M}$.
\end{lemma}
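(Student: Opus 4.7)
The plan is to place the statement inside the standard $q$-ary Heisenberg--Weyl/Clifford framework and construct $\{\ket{{\bf a}}_{\bf M}\}_{{\bf a}\in \mathbb{F}_q^N}$ by applying a suitable Clifford unitary to the computational basis of $N$ qudits. Modulo global phases, the generalized Paulis ${\sf X}({\bf x}){\sf Z}({\bf z}) \triangleq \bigotimes_i {\sf X}(x_i){\sf Z}(z_i)$ are parameterized by $({\bf x},{\bf z})\in \mathbb{F}_q^{2N}$, and two such operators commute iff the symplectic form $\omega\bigl((u_1,v_1),(u_2,v_2)\bigr)\triangleq \tr(u_1^\top v_2 - v_1^\top u_2)$ vanishes. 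Expanding ${\bf M}_r{\bf M}_l^\top = {\bf M}_l {\bf M}_r^\top$ entrywise shows that the $N$ rows of ${\bf M}$, viewed as vectors in $\mathbb{F}_q^{2N}$, are pairwise $\omega$-orthogonal; together with $\rk({\bf M})=N$ this says that the row span of ${\bf M}$ is a Lagrangian (maximally isotropic) subspace of $\mathbb{F}_q^{2N}$.

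Next, I would invoke the standard fact that every Lagrangian subspace extends to a symplectic basis: there exists ${\bf N}\in \mathbb{F}_q^{N\times 2N}$ such that $S \triangleq \bbsmatrix{{\bf M}\\ {\bf N}}$ is $\mathbb{F}_q$-symplectic, i.e.\ $S^\top J S = J$ with $J=\bbsmatrix{0 & I_N \\ -I_N & 0}$. The standard correspondence between $\mathbb{F}_q$-symplectic transformations and $q$-ary Clifford unitaries then produces a unitary $U$ on the $N$-qudit Hilbert space whose conjugation action realizes $S$ at the phase-space level:
\begin{align*}
U^\dagger\, {\sf X}({\bf x}){\sf Z}({\bf z})\, U \;\equiv\; {\sf X}({\bf x}'){\sf Z}({\bf z}'), \qquad \bbsmatrix{{\bf x}'\\{\bf z}'} \;=\; S\bbsmatrix{{\bf x}\\{\bf z}},
\end{align*}
where $\equiv$ denotes equality up to a scalar phase; by construction ${\bf x}' = {\bf M}\bbsmatrix{{\bf x}\\{\bf z}}$.

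Define $\ket{{\bf a}}_{\bf M} \triangleq U\ket{\bf a}$ for ${\bf a}\in \mathbb{F}_q^N$, with $\{\ket{\bf a}\}$ the computational basis; unitarity of $U$ immediately gives orthogonality. The shift rule then follows by a direct calculation:
\begin{align*}
\bigotimes_{i\in[N]} {\sf X}(x_i){\sf Z}(z_i)\,\ket{{\bf a}}_{\bf M}
\;=\; U\bigl(U^\dagger {\sf X}({\bf x}){\sf Z}({\bf z})U\bigr)\ket{\bf a}
\;\equiv\; U\, {\sf X}({\bf x}'){\sf Z}({\bf z}')\ket{\bf a}
\;\equiv\; U\ket{{\bf a}+{\bf x}'}
\;=\; \ket{{\bf a}+{\bf M}\bbsmatrix{{\bf x}\\{\bf z}}}_{\bf M},
\end{align*}
using ${\sf Z}({\bf z}')\ket{\bf a}\equiv\ket{\bf a}$ and ${\sf X}({\bf x}')\ket{\bf a}=\ket{{\bf a}+{\bf x}'}$ on the computational basis.

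I expect the main obstacle to be the ``symplectic-to-Clifford'' lift for general $q=p^r$: one has to assemble $U$ from elementary Cliffords (Fourier on a single qudit, $\mathbb{F}_q$-linear controlled-additions between qudits, and diagonal phase gates attached to quadratic forms over $\mathbb{F}_q$) and to track the cocycles/global phases carefully, with extra care required in characteristic $2$. This is exactly the stabilizer-code machinery underlying the $N$-sum box construction of \cite{Allaix_N_sum_box}; assembling these ingredients and absorbing the stray scalars into ``$\equiv$'' yields the lemma.
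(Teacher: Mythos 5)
Your argument is correct and reaches the lemma by a genuinely different (though closely related, ``dual'') route from the paper's Appendix~B. The paper works with the $2N\times N$ matrix ${\bf G} = \bbsmatrix{{\bf M}_r^\top\\-{\bf M}_l^\top}$, whose column span equals $\ker{\bf M}$ (indeed ${\bf M}{\bf G} = {\bf M}_l{\bf M}_r^\top - {\bf M}_r{\bf M}_l^\top = {\bf 0}$ by SSO, and both spaces have dimension $N$). It completes $\langle{\bf G}\rangle$ to a symplectic basis $[{\bf G},{\bf H}]$, then invokes the stabilizer formalism of \cite[Prop.~2]{song_colluding_PIR}: $\mathcal{S}(\langle{\bf G}\rangle)$ is a maximal stabilizer, its joint eigenspaces are $1$-dimensional and indexed by the cosets $\bar{\bf s}\in\mathbb{F}_q^{2N\times 1}/\langle{\bf G}\rangle$, Weyl operators permute those eigenspaces by coset translation, and the identity $[{\bf 0},{\bf I}][{\bf G},{\bf H}]^{-1}={\bf M}$ is used to relabel the cosets by ${\bf M}{\bf s}\in\mathbb{F}_q^{N\times 1}$. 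You instead complete the \emph{row span} of ${\bf M}$ --- the $J$-dual Lagrangian --- to a symplectic $S = \bbsmatrix{{\bf M}\\{\bf N}}$, lift $S$ to a Clifford unitary $U$, and set $\ket{{\bf a}}_{\bf M} \triangleq U\ket{\bf a}$; orthogonality is then immediate from unitarity and the shift rule falls out of the conjugation identity. This is arguably the cleaner packaging: it bypasses the explicit coset-relabeling step (the labels come for free from the direct-sum decomposition $S$ supplies), at the cost of treating the ``symplectic-to-Clifford lift'' over $\mathbb{F}_q$ as a black box. The paper's route is the dual packaging: it black-boxes the eigenspace decomposition of a maximal stabilizer (via the cited proposition) and works concretely with projections. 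The two black boxes are equivalent --- a maximal-stabilizer eigenbasis \emph{is} a Clifford image of the computational basis --- and both routes hinge on the same linear-algebra fact (Witt extension of a Lagrangian to a symplectic basis, which the paper pulls from \cite{Ana_Sym}). One suggestion: since you correctly identify the lift for general $q = p^r$ (especially $p=2$) as the delicate step, you should cite a concrete reference for it rather than only gesturing at ``assembling $U$ from elementary Cliffords''; the paper's choice to route through \cite{song_colluding_PIR} sidesteps exactly that bookkeeping.
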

Lemma \ref{lem:box} follows essentially by a ``relabelling" of a set of orthogonal states that appear in the proof of \cite[Thm. 1]{Allaix_N_sum_box}.  For the sake of completeness, a proof is provided in Appendix \ref{proof:box}  along with some relevant background from \cite[Sec. IV-A]{song_colluding_PIR}, \cite[Sec. II]{Allaix_N_sum_box}.

Suppose the $5$-dimensional quantum subsystems available to the $4$ servers are $Q_1,Q_2,Q_3$ and $Q_4$, respectively. Let us construct an $N=4$-sum box defined in $\mathbb{F}_5$ with transfer matrix
\begin{align}
	{\bf M} &= \bbsmatrix{
		1 & 1 & 1 & 1 & 0 & 0 & 0 & 0 \\
		1 & 2 & 3 & 4 & 0 & 0 & 0 & 0 \\
		0 & 0 & 0 & 0 & 1 & 2 & 3 & 4 \\
		0 & 0 & 0 & 0 & 1 & 4 & 4 & 1
	} \\
	&= [{\bf m}_{1x},\cdots, {\bf m}_{4x}, {\bf m}_{1z}, \cdots, {\bf m}_{4z}]
\end{align}
where ${\bf m}_{ix}, {\bf m}_{iz}$ are the $i^{th}$, $(i+4)^{th}$ columns of ${\bf M}$ for $i\in [4]$. One can verify that this ${\bf M}$ satisfies the self-orthogonality constraint required by Lemma \ref{lem:box}. Moreover, this ${\bf M}$ has the following property.

\noindent {\bf Property P1:} Given any subset $\mathcal{I}\subseteq [4]$, the submatrix of ${\bf M}$ with columns ${\bf m}_{ix}, {\bf m}_{iz}$ for $i\in \mathcal{I}$ has full rank, equal to $\min\{4, 2|\mathcal{I}|\}$. 

For $s\in [4]$, Server $s$ applies ${\sf X}(x_s){\sf Z}(z_s)$ to its quantum subsystem, so that $[x_s, z_s]^\top$ is a linear function of the data that is available to Server $s$. Generally, we can write $[x_s,z_s]^\top$ as,
\begin{align}
	\bbsmatrix{ x_1\\z_1 } &=
	V_{1a}\bbsmatrix{{\sf A}_1\\{\sf A}_2} + V_{1b}\bbsmatrix{{\sf B}_1\\{\sf B}_2} + V_{1c}\bbsmatrix{{\sf C}_1\\{\sf C}_2} \\
	\bbsmatrix{ x_2\\z_2 } &=
	V_{2a}\bbsmatrix{{\sf A}_1\\{\sf A}_2} + V_{2d}\bbsmatrix{{\sf D}_1\\{\sf D}_2} + V_{2e}\bbsmatrix{{\sf E}_1\\{\sf E}_2} \\
	\bbsmatrix{ x_3\\z_3 } &=
	V_{3b}\bbsmatrix{{\sf B}_1\\{\sf B}_2} + V_{3d}\bbsmatrix{{\sf D}_1\\{\sf D}_2} + V_{3f}\bbsmatrix{{\sf F}_1\\{\sf F}_2} \\
	\bbsmatrix{ x_4\\z_4 } &=
	V_{4c}\bbsmatrix{{\sf C}_1\\{\sf C}_2} + V_{4e}\bbsmatrix{{\sf E}_1\\{\sf E}_2} + V_{4f}\bbsmatrix{{\sf F}_1\\{\sf F}_2} 
\end{align}
where $V_{s*}$ is a $2\times 2$ matrix with elements free to be chosen in $\mathbb{F}_5$, for all $s\in [4], *\in\{a,b,c,d,e,f\}$. According to Lemma \ref{lem:box}, if the initial state of the composite quantum system $Q$ is $\ket{{\bf 0}}_{\bf M}$, then after the operations, the state becomes $\ket{Y}_{\bf M}$ and
\begin{align}
	Y &= \bbsmatrix{{\bf m}_{1x} & {\bf m}_{1z} & {\bf m}_{2x} & {\bf m}_{2z}}\bbsmatrix{V_{1a}\\V_{2a}}\bbsmatrix{{\sf A}_1\\{\sf A}_2} \notag\\
	&+ \bbsmatrix{{\bf m}_{1x} & {\bf m}_{1z} & {\bf m}_{3x} & {\bf m}_{3z}}\bbsmatrix{V_{1b}\\V_{3b}}\bbsmatrix{{\sf B}_1\\{\sf B}_2} \notag\\
	&+ \bbsmatrix{{\bf m}_{1x} & {\bf m}_{1z} & {\bf m}_{4x} & {\bf m}_{4z}}\bbsmatrix{V_{1c}\\V_{4c}}\bbsmatrix{{\sf C}_1\\{\sf C}_2} \notag\\
	&+\bbsmatrix{{\bf m}_{2x} & {\bf m}_{2z} & {\bf m}_{3x} & {\bf m}_{3z}}\bbsmatrix{V_{2d}\\V_{3d}}\bbsmatrix{{\sf D}_1\\{\sf D}_2} \notag \\
	&+\bbsmatrix{{\bf m}_{2x} & {\bf m}_{2z} & {\bf m}_{4x} & {\bf m}_{4z}}\bbsmatrix{V_{2e}\\V_{4e}}\bbsmatrix{{\sf E}_1\\{\sf E}_2} \notag \\
	&+\bbsmatrix{{\bf m}_{3x} & {\bf m}_{3z} & {\bf m}_{4x} & {\bf m}_{4z}}\bbsmatrix{V_{3f}\\V_{4f}}\bbsmatrix{{\sf F}_1\\{\sf F}_2}\\
	& \triangleq {\bf M}_a {\bf V}_a \bbsmatrix{{\sf A}_1\\{\sf A}_2} + \cdots + {\bf M}_f {\bf V}_f \bbsmatrix{{\sf F}_1\\{\sf F}_2} \label{eq:simplify}
\end{align}
where in the last equation \eqref{eq:simplify} we define the compact notations ${\bf M}_*, {\bf V}_*$ for $*\in \{a,b,c,d,e,f\}$. The matrix ${\bf V}_*$ is $4\times 2$ with elements in $\mathbb{F}_5$ yet to be determined. It remains to specify ${\bf V}_*$. Let us first define a matrix,
\begin{align}
	{\bf U} = \bbsmatrix{4 & 3\\1 & 1 \\ 2& 2 \\1 & 3}.
\end{align}
There is nothing too special about this choice of ${\bf U}$, except that we need it to satisfy the following property.

\noindent {\bf Property P2:} $\rk([{\bf U}, {\bf m}_{tx}, {\bf m}_{tz}])=4$, i.e., $[{\bf U}, {\bf m}_{tx}, {\bf m}_{tz}]$ is invertible for $t\in [4]$.

By Property {\bf P1}, for $*\in \{a,b,c,d,e,f\}$, the matrix ${\bf M}_*$ has full rank $4$ and is invertible. Now we can specify ${\bf V}_*$ such that
\begin{align}
	Y &= {\bf U} \bbsmatrix{{\sf A}_1\\{\sf A}_2} + {\bf U} \bbsmatrix{{\sf B}_1\\{\sf B}_2} +\cdots + {\bf U} \bbsmatrix{{\sf F}_1\\{\sf F}_2} \\
	& = {\bf U}\Big( \bbsmatrix{{\sf A}_1\\{\sf A}_2} + \bbsmatrix{{\sf B}_1\\{\sf B}_2} + \cdots + \bbsmatrix{{\sf F}_1\\{\sf F}_2} \Big)
\end{align}
by letting ${\bf V}_* = {\bf M}_*^{-1}{\bf U}$ for $* \in \{a,b,c,d,e,f\}$.

The next step is to consider the transmission of the quantum subsystems. Suppose during the transmission, there exists one $t\in [4]$ such that $Q_t$ is subjected to another operation ${\sf X}(\tilde{x}){\sf Z}(\tilde{z})$ where $\tilde{x}, \tilde{z}$ are unknown. The state of the composite system now becomes $\ket{Y'}_{\bf M}$ and 
\begin{align}
	Y' = {\bf U}\Big( \bbsmatrix{{\sf A}_1\\{\sf A}_2} + \bbsmatrix{{\sf B}_1\\{\sf B}_2} + \cdots + \bbsmatrix{{\sf F}_1\\{\sf F}_2} \Big)  +  \bbsmatrix{{\bf m}_{tx}, {\bf m}_{tz}} \bbsmatrix{\tilde{x}\\\tilde{z}}.
\end{align}
After receiving the composite system, Alice obtains $Y'$ as the result of her measurement with certainty. By Property {\bf P2}, if Alice knows $t$, then she can retrieve
\begin{align}
	\bbsmatrix{{\sf A}_1\\{\sf A}_2} + \bbsmatrix{{\sf B}_1\\{\sf B}_2} + \cdots + \bbsmatrix{{\sf F}_1\\{\sf F}_2}
\end{align}
by multiplying $[{\bf U}, {\bf m}_{tx}, {\bf m}_{tz}]^{-1}$ to $Y'$ and taking the top $2$ elements.  Note that the scheme works no matter which quantum subsystem $Q_t$ is subjected to the unknown operations ${\sf X}(\tilde{x}){\sf Z}(\tilde{z})$.  In addition, $\tilde{x}, \tilde{z}$ can be any value in $\mathbb{F}_5$. Therefore, the scheme also works if $\tilde{x}, \tilde{z}$ are independent and uniformly drawn in $\mathbb{F}_5$. But if $Q_t$ is subjected to the operation ${\sf X}(\tilde{x}){\sf Z}(\tilde{z})$ with independent $\tilde{x}, \tilde{z}$ chosen uniformly in $\mathbb{F}_5$, then it  puts $Q_t$ in the maximally mixed state and makes it \emph{independent} of (and unentangled with) the rest of the quantum subsystems (Lemma \ref{lem:independent_mixed}). In other words, Alice is able to recover the desired sum even if any one of the $4$ quantum systems (Alice knows which one) is subjected to random ${\sf X}(\tilde{x}){\sf Z}(\tilde{z})$ operations that make it independent of the rest of the quantum systems. Recall that positions of erasures are by definition known to the receiver.

What this implies is that even if $Q_t$ is lost, i.e., not received by Alice, the scheme still works \cite{QECtutorial} if Alice simply replaces the missing $\mathcal{Q}_t$ by an ancillary $5$-dimensional quantum subsystem $\widehat{Q}$ that is in the maximally mixed state described by the density operator $\rho_{\widehat{Q}} = {\bf I}_5/5$, where ${\bf I}_5$ denotes the $5\times 5$ identity matrix, and $\widehat{Q}$ is independent of the quantum system $Q$ (with density operator $\rho_Q$) composed of $Q_1,Q_2,Q_3,Q_4$, i.e., the joint state of $Q$ and $\widehat{Q}$ is $\rho_{Q \widehat{Q}} = \rho_Q \otimes \rho_{\widehat{Q}}$.
\begin{lemma} \label{lem:independent_mixed}
	Let $A,B$ denote two quantum subsystems in the joint state $\rho_{AB}$ where $B$ has dimension $d=p^r$ with $d$ being a power of a prime. Suppose $B$ is subjected to a random operation ${\sf X}(\tilde{x}){\sf Z}(\tilde{z})$ with independent $\tilde{x}, \tilde{z}$ drawn uniformly in $\mathbb{F}_d$. Denote the joint state, the partial state of $A$, and the partial state of $B$ after the random operation as $\rho'_{AB}, \rho_A$ and $\rho_B'$, respectively. Then $\rho'_{AB} = \rho_A \otimes \rho_{B}'$ and $\rho_B' = {\bf I}_d/d$, where ${\bf I}_d$ denotes the $d\times d$ identity matrix.
\end{lemma}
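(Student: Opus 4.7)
}
The plan is to recognize the random operation on $B$ as a Pauli twirl (i.e., a uniform average of conjugations by the Heisenberg--Weyl operators on $\mathbb{F}_d$), and to show that this twirl is the completely depolarizing channel on $B$ acting as the identity on $A$. The post-operation joint state is
\begin{align*}
\rho_{AB}' \;=\; \frac{1}{d^{2}} \sum_{\tilde x,\tilde z \in \mathbb{F}_d} \bigl({\bf I}_A \otimes {\sf X}(\tilde x){\sf Z}(\tilde z)\bigr)\,\rho_{AB}\,\bigl({\bf I}_A \otimes {\sf X}(\tilde x){\sf Z}(\tilde z)\bigr)^{\dagger},
\end{align*}
so it suffices to prove the operator identity
\begin{align*}
\frac{1}{d^{2}}\sum_{\tilde x,\tilde z \in \mathbb{F}_d}\bigl({\sf X}(\tilde x){\sf Z}(\tilde z)\bigr)\,E\,\bigl({\sf X}(\tilde x){\sf Z}(\tilde z)\bigr)^{\dagger} \;=\; \frac{\Tr(E)}{d}\,{\bf I}_d
\end{align*}
for every $E\in\mathcal{L}(\mathcal{H}_B)$, and then to substitute $E$ equal to the components of $\rho_{AB}$ in a suitable Weyl expansion on $B$.

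The key steps I would carry out are: (i) record the Weyl commutation relation ${\sf Z}(z){\sf X}(a)=\omega^{\tr(az)}{\sf X}(a){\sf Z}(z)$ and the multiplicative rules ${\sf X}(x){\sf X}(a)={\sf X}(x+a)$, ${\sf Z}(z){\sf Z}(b)={\sf Z}(z+b)$; (ii) use these to compute, for any Weyl operator ${\sf X}(a){\sf Z}(b)$,
\begin{align*}
{\sf X}(\tilde x){\sf Z}(\tilde z)\,{\sf X}(a){\sf Z}(b)\,{\sf Z}(\tilde z)^{\dagger}{\sf X}(\tilde x)^{\dagger} \;=\; \omega^{\,\tr(a\tilde z-b\tilde x)}\,{\sf X}(a){\sf Z}(b);
\end{align*}
(iii) invoke the orthogonality of additive characters on $\mathbb{F}_d$, i.e.\ $\sum_{\tilde z\in\mathbb{F}_d}\omega^{\tr(a\tilde z)}=d\,\delta_{a,0}$ and similarly for $\tilde x$, to conclude that the twirl kills every Weyl operator except the identity; (iv) expand $\rho_{AB}=\sum_{a,b\in\mathbb{F}_d} N_{ab}\otimes {\sf X}(a){\sf Z}(b)$ in the orthogonal Weyl basis on $B$, so that the twirl collapses the sum to $N_{00}\otimes {\bf I}_d$; and (v) compute the partial trace $\rho_A=\Tr_B(\rho_{AB})=d\,N_{00}$ using $\Tr({\sf X}(a){\sf Z}(b))=d\,\delta_{a,0}\delta_{b,0}$, giving $\rho_{AB}'=\rho_A\otimes({\bf I}_d/d)$ and $\rho_B'={\bf I}_d/d$.

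The one place I would be careful, rather than truly obstructed, is the bookkeeping of the commutation phase in the prime-power case $d=p^r$: the phase $\omega^{\tr(\cdot)}$ uses the field trace $\tr:\mathbb{F}_{p^r}\to\mathbb{F}_p$ (as already fixed in the Preliminaries), and the character-sum vanishing relies on $\tr$ being a non-degenerate additive form. Everything else is a mechanical consequence of the Heisenberg--Weyl structure. Writing the argument in terms of the Weyl basis on $B$ alone (with $\mathcal{L}(\mathcal{H}_A)$-valued coefficients) keeps the bipartite step essentially identical to the single-system twirl, so no separate argument is needed for the joint state.
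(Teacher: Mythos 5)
Your proof is correct, and it takes a genuinely different route from the paper's. The paper reduces to pure states, writes $\ket{\psi}_{AB}=\sum_{i=0}^{p-1}\ket{\phi_i}_A\ket{i}_B$ in the computational basis of $B$, expands the fourfold sum over $\tilde x,\tilde z, i, j$ by hand, invokes the character sum $\sum_{\tilde z}\omega^{\tilde z(i-j)}=p\,\delta(i-j)$ to kill the off-diagonal terms, and then handles $d=p^r$ by decomposing the $p^r$-dimensional qudit into $r$ subsystems of dimension $p$ and repeating the argument on each. Your approach instead works directly with the density operator, identifies the random ${\sf X}(\tilde x){\sf Z}(\tilde z)$ as a Pauli twirl, expands $\rho_{AB}=\sum_{a,b}N_{ab}\otimes{\sf X}(a){\sf Z}(b)$ in the Weyl basis on $B$ with $\mathcal{L}(\mathcal{H}_A)$-valued coefficients, and uses the conjugation phase $\omega^{\tr(a\tilde z-b\tilde x)}$ together with non-degeneracy of the field trace to collapse the sum to $N_{00}\otimes{\bf I}_d$. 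What your route buys: it avoids the pure-state reduction, treats the bipartite case and the single-system case uniformly, and handles the prime-power field $\mathbb{F}_{p^r}$ in one shot without the tensor-factorization step the paper needs at the end. What the paper's route buys: it is self-contained and elementary in the computational basis, requiring no Weyl-basis machinery beyond the single character identity. Both are standard and correct; yours is closer to the "qudit twirl" formulation the paper itself cites (Wilde, Exercise 4.7.6) than to the paper's own written argument.
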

The argument in the lemma is  standard in quantum literature  \cite{QECtutorial}, \cite[Exercise 4.7.6 (Qudit Twirl)]{Wilde_2017}. A proof is provided in Appendix \ref{proof:independent_mixed}.

\section{Proof of Theorem \ref{thm:achievability}} \label{proof:achievability}
\subsection{Scheme with download cost tuple in $\mathfrak{D}_{\normalfont \footnotesize \mbox{AME}}$} \label{sec:scheme_construction}
Let us design a scheme which allows Alice to compute $L=\lambda l$ instances of the sum $\sum_{k\in[K]}{\sf W}_k^{[L]}$ $\in \mathbb{F}_d^{L\times 1}$. To do so, let the  $S+1$ servers prepare $N=N_0+N_1+N_2+\cdots+N_S$ $q$-dimensional quantum subsystems, $Q_0,Q_1,Q_2,\cdots Q_N$, with $q = d^{\lambda}$. For $s\in \mathcal{S} = \{0,1,\cdots, S\}$, Server $s$ possesses $N_s$ of these quantum subsystems indexed by $\mathcal{I}_s \subseteq [N]$. Let ${\bf M}\in \mathbb{F}_q^{N\times 2N}$ be the transfer matrix of an $N$-sum box where $N = \sum_{s\in\mathcal{S}}N_s$. By definition, an element of $\mathbb{F}_{q}$ is equivalent to a $\lambda$-length vector in $\mathbb{F}_d$. Thus, let ${\bf W}_k \in \mathbb{F}_q^{l\times 1}$ denote ${\sf W}_k^{[L]}$ written in $\mathbb{F}_q$ for $k\in [K]$. Recall that Server $s$ possesses the $N_s$ $q$-dimensional quantum subsystems $Q_i$ for $i\in \mathcal{I}_s$. 

Define $x_i=0, z_i=0,$ for all $i\in\mathcal{I}_0$, since the auxiliary server has no data inputs. Let Server $s\in\mathcal{S}$  apply the operation ${\sf X}(x_{i}){\sf Z}(z_{i})$ to $Q_i$ for $i \in \mathcal{I}_s$. Suppose that each $x_i, z_i$ is equal to the output of a linear function of the data available to the server. If the initial state of the composite system is $\ket{\bf 0}_{\bf M}$, then after each server applies the operations, the output state  can be written as
\begin{align} \label{eq:output_state_1}
	\ket{{\bf M}\bbsmatrix{{\bf x} \\ {\bf z}}}_{\bf M} = \Ket{\sum_{k\in [K]}{\bf M}_k{\bf V}_k {\bf W}_k}_{\bf M},
\end{align}
where for $k\in [K]$, ${\bf V}_k$ is a $\sum_{s\in \mathcal{W}(k)} 2N_s \times l$ matrix with elements freely chosen in $\mathbb{F}_q$, and ${\bf M}_k$ is the submatrix composed of the columns of ${\bf M}$ `controlled' by Servers $s\in \mathcal{W}(k)$. Specifically, for $k\in [K]$, ${\bf M}_k$ contains the $i^{th}$ and $(i+N)^{th}$ columns of ${\bf M}$ for $i\in \cup_{s\in \mathcal{W}(k)}\mathcal{I}_s$ and thus ${\bf M}_k$ has size $N\times \sum_{s\in \mathcal{W}(k)}2N_s$. It is proved in \cite{Yao_Jafar_Sum_MAC} that if $q \geq N$, then there exists an ${\bf M}$ (as the transfer matrix of an $N$-sum box) such that $\rk({\bf M}_k) = \min\{N, \sum_{s\in \mathcal{W}(k)}2N_s\}$ for all $k\in [K]$. 

Recall that during the transmission, the quantum subsystems possessed by Servers $s\in \mathcal{E}(t)$ may be erased for any  $t\in [T]$. To tolerate this erasure, let us find a matrix ${\bf U}\in \mathbb{F}_q^{N\times u}$ with $u =  N-\max_{t\in [T]} \sum_{s\in \mathcal{E}(t)}2N_s$ that has full column rank $u$. Here, we require that $N \geq  \max_{t\in [T]}\sum_{s\in \mathcal{E}(t)}2N_s$ for the scheme to work. Further, for $t\in [T]$, define ${\bf E}_t$ as the submatrix of ${\bf M}$ composed by the columns indexed by $i, i+N$ for $i\in \cup_{s\in \mathcal{E}(t)}\mathcal{I}_s$. We require that $[{\bf U}, {\bf E}_t]$ must have full rank $u+\sum_{s\in \mathcal{E}(t)}2N_s$ for all $t\in [T]$. The existence of such a ${\bf U}$ is guaranteed if $q>TN$ (proof provided in Appendix \ref{proof:schwartz_zippel}).
In the following, let $\langle {\bf A} \rangle$ denote the linear subspace spanned by the columns of ${\bf A}$ with coefficients chosen in $\mathbb{F}_q$. For $k\in [K]$, denote ${\bf U}_k$ as a basis of the linear subspace spanned by the intersection of $\langle{\bf U}\rangle$ and $\langle{\bf M}_k\rangle$. 
For $k\in [K]$, since $\langle {\bf U}_k \rangle \subseteq \langle {\bf M}_k \rangle$, it follows from basic linear algebra that for any matrix ${\bf V}_k' \in \mathbb{F}_q^{\scalebox{0.8}{\rk}({\bf U}_k) \times l}$, there exists ${\bf V}_k\in \mathbb{F}_q^{\sum_{s\in \mathcal{W}(k)} 2N_s \times l}$ such that ${\bf M}_k {\bf V}_k = {\bf U}_k {\bf V}_k'$.
Then the output state \eqref{eq:output_state_1} of the composite quantum system can be written as
\begin{align} \label{eq:output_state_2}
	\Ket{\sum_{k\in [K]}{\bf U}_k {\bf V}_k' {\bf W}_k}_{\bf M},
\end{align}
where ${\bf V}_k' \in \mathbb{F}_q^{\scalebox{0.8}{\rk}({\bf U}_k) \times l}$ has full column rank, equal to $l$ if $l\leq \rk({\bf U}_k), \forall k\in [K]$.
By the dimension law for linear subspaces, 
\begin{align}
	\rk({\bf U}_k) &= \rk({\bf U})+\rk({\bf M}_k) -\rk([{\bf U}, {\bf M}_k]) \\
	&\geq u+\min\{N, \sum_{s\in \mathcal{W}(k)}2N_s\}-N \\
	&= \min\{N, \sum_{s\in \mathcal{W}(k)}2N_s\} - \max_{t\in [T]}\sum_{s\in \mathcal{E}(t)}2N_s
\end{align}
for all $k\in [K]$. 
After the servers apply the operations to their quantum subsystems, suppose in the transmission,  the operations ${\sf X}(\tilde{x}_i){\sf Z}(\tilde{z}_i)$ are applied to the $i^{th}$ quantum subsystem, with unknown $\tilde{x}_i, \tilde{z}_i$ for $i\in \cup_{s\in \mathcal{E}(t)}$. The state received by Alice becomes
\begin{align}\label{eq:output_state_3}
	\Ket{\sum_{k\in [K]} {\bf U}_k {\bf V}_k' {\bf W}_k + {\bf E}_t \bbsmatrix{\tilde{x}_{i_1}\\ \vdots \\ \tilde{x}_{i_n} \\\tilde{z}_{i_1} \\\vdots \\\tilde{z}_{i_n}}}_{\bf M}
\end{align}
where $\{i_1,\cdots, i_n\} = \cup_{s\in \mathcal{E}(t)} \mathcal{I}_s$. Alice can now measure the result
\begin{align}
	Y = \underbrace{\sum_{k\in [K]}{\bf U}_k {\bf V}_k' {\bf W}_k}_{{\bf v}_u} + \underbrace{{\bf E}_t \bbsmatrix{\tilde{x}_{i_1}\\ \vdots \\ \tilde{x}_{i_n} \\\tilde{z}_{i_1} \\\vdots \\\tilde{z}_{i_n}}}_{{\bf v}_e}
\end{align}
with certainty given any $t\in [T]$. Since for $k\in [K]$, $\langle {\bf U}_k \rangle \subseteq \langle {\bf U} \rangle$, we have ${\bf v}_{u} \in \langle {\bf U} \rangle$. Similarly, ${\bf v}_e \in \langle {\bf E}_t\rangle$. Meanwhile, since for any $t\in [T]$, $[{\bf U}, {\bf E}_t]$ has full column rank, Alice (who knows $t$ and $Y$) is then able to compute ${\bf v}_u$ and ${\bf v}_e$. One way to do this is to first solve for ${\bf c}_u, {\bf c}_e$ such that ${\bf v}_u = {\bf U} {\bf c}_u, {\bf v}_e = {\bf E}_t {\bf c}_e$, by multiplying the left-inverse of $[{\bf U}, {\bf E}_t]$ to $Y$, and then compute ${\bf v}_u$ and ${\bf v}_e$ accordingly. Omitting the noise term ${\bf v}_e$, Alice now has ${\bf v}_u$, which is
\begin{align} \label{eq:general_achi_decoding}
	\sum_{k\in [K]} {\bf U}_k {\bf V}_k' {\bf W}_k \in \mathbb{F}_q^{N\times 1}
\end{align}
and further compute
\begin{align}
	{\bf V}_{\dec} \sum_{k\in [K]} {\bf U}_k {\bf V}_k' {\bf W}_k = \sum_{k\in [K]} {\bf R}_k {\bf W}_k \in \mathbb{F}_q^{l\times 1}
\end{align}
where ${\bf R}_k$ is an $l\times l$ invertible square matrix for $k\in [K]$. The existence of the ${\bf V}_{\dec}$ is guaranteed if $q> Kl$ (proof provided in Appendix \ref{proof:schwartz_zippel}). Since ${\bf R}_k$ is invertible for all $k\in [K]$, the servers can simply treat ${\bf W}_k$ as ${\bf R}_k^{-1}{\bf W}_k$ when coding so that Alice is able to compute the desired sum $\sum_{k\in [K]}{\bf W}_k$ in the end.

\subsection{Extend the scheme to tolerate erasures}
In the analysis of the scheme, we assumed that the quantum subsystems for Servers $s\in \mathcal{E}(t)$ are subjected to unknown ${\sf X}$ and ${\sf Z}$ operations during the transmission. If  these quantum subsystems are lost during the transmission, a simple solution is to replace the lost quantum systems with ancillary quantum subsystems generated locally by Alice, which are independent of the quantum systems sent from the servers. The reasoning follows from Lemma \ref{lem:independent_mixed}. Thus, the scheme in Section \ref{sec:scheme_construction} now tolerates erasures. Note that such a scheme exists if
\begin{align}
	\left\{
	\begin{array}{l}
		l \in \mathbb{N}, \\
		\Delta_s = N_s/l, \forall s\in \mathcal{S},\\
		\min\{N, \sum_{s\in \mathcal{W}(k)}2N_s\} - \max_{t\in \mathcal{E}(t)}\sum_{s\in \mathcal{E}(t)}2N_s \geq l,\\ \forall k, t\in [T].
	\end{array}
	\right.
\end{align}
Then by definition,
\begin{align} \label{eq:closure_AME}
	& \mbox{closure} \left\{ 
	{\bf \Delta} \in \mathbb{R}_+^{S}
    \left|
    \begin{array}{l}
      l \in \mathbb{N}, \\
		\Delta_s = N_s/l, \forall s\in \mathcal{S},\\
		\min\{N, \sum_{s\in \mathcal{W}(k)}2N_s\} \\- \max_{t\in \mathcal{E}(t)}\sum_{s\in \mathcal{E}(t)}2N_s \geq l, \forall k, t\in [T]
    \end{array}
    \right. \hspace{-0.2cm}
    \right\}   \subseteq \mathfrak{D}^*.
\end{align}
Note that the LHS of \eqref{eq:closure_AME} is equal to $\mathfrak{D}_{\AME}$. Therefore, $\mathfrak{D}_{\AME} \subseteq \mathfrak{D}^*$. It then follows that $\mathfrak{D}_{\AME}^+ \subseteq \mathfrak{D}^*$ as one can always send extra unentangled quantum resource.

\subsection{Treating Qudits as Classical-dits (TQC)}
Let us show that $\mathfrak{D}_{\TQC} \subseteq \mathfrak{D}^*$. Suppose for $s\in \mathcal{S}$, Server $s$ sends $N_s$ $q$-dimensional quantum subsystems by treating them as classical $q$-ary symbols, where $q=d^\lambda$. Denote $N \triangleq N_0+N_1+\cdots+N_S$. During the transmission, the quantum subsystems (now regarded as classcial $q$-ary symbols) from servers $s\in \mathcal{E}(t)$ may be lost. Alice should be able to compute $l$ instances of the sum, written in $\mathbb{F}_q$ as $\sum_{k\in [K]}{\bf W}_k$, where ${\bf W}_k \in \mathbb{F}_q^{l\times 1}$. Let us restate the result of \cite{wei2023robust} in the following lemma.
\begin{lemma}[Restatement of \cite{wei2023robust}]
	The classical scheme exists if $q>N$, and 
	\begin{align}
		\max_{t\in [T]}\sum_{s \in \mathcal{E}(t)}N_s \leq \min_{k\in [K]} \sum_{s\in \mathcal{W}(k)}N_s -l.
	\end{align}
\end{lemma}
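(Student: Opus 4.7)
The plan is to construct an explicit linear coding scheme over $\mathbb{F}_q$. For each $s\in [S]$, Server $s$ transmits $\mathbf{y}_s = \sum_{k:\, s\in \mathcal{W}(k)} \mathbf{G}_{s,k}\mathbf{W}_k \in \mathbb{F}_q^{N_s\times 1}$, where $\mathbf{G}_{s,k}\in \mathbb{F}_q^{N_s\times l}$ are encoder matrices to be designed (with the convention $\mathbf{G}_{s,k}=0$ whenever $s\notin \mathcal{W}(k)$). After an erasure pattern $t\in [T]$ occurs, Alice receives the vertical stack $\mathbf{y}^{(t)}$ of the $\mathbf{y}_s$ for $s\notin \mathcal{E}(t)$, an $\mathbb{F}_q$-linear image of $(\mathbf{W}_1,\ldots,\mathbf{W}_K)$. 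Sum-decodability is equivalent to the condition that, for every $t$, the row vector $[\mathbf{I}_l\mid \mathbf{I}_l\mid \cdots\mid \mathbf{I}_l]$ lies in the row-span of the effective block generator matrix on the surviving servers.

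First, I would recast the $T$ simultaneous decodability requirements as the non-vanishing, for each $t$, of a determinantal minor of the effective block generator. The entire scheme then exists provided a single polynomial $P$ in the free entries of $\{\mathbf{G}_{s,k}\}$, namely (up to sign) the product of those $T$ minors, admits a nonzero evaluation over $\mathbb{F}_q$. A Schwartz--Zippel style argument then yields such an evaluation as soon as $q$ exceeds the total degree of $P$, which a rough degree count bounds by something on the order of $N$.

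Second, I would produce a specific witness assignment showing $P\not\equiv 0$. The natural choice is an MDS backbone: fix an $N\times l$ Cauchy or Vandermonde matrix $\mathbf{C}$ over $\mathbb{F}_q$ (which requires $q>N$), partition its rows into blocks $\mathbf{C}|_s$ of size $N_s$, and take $\mathbf{G}_{s,k}$ to be a free scalar multiple $\alpha_{s,k}\,\mathbf{C}|_s$ for $s\in \mathcal{W}(k)$. The hypothesis $\sum_{s\in \mathcal{W}(k)} N_s \geq l + \max_{t}\sum_{s\in \mathcal{E}(t)} N_s$ ensures that, after any erasure $t$, the surviving rows of $\mathbf{C}|_{\mathcal{W}(k)\setminus \mathcal{E}(t)}$ still number at least $l$, so by the MDS property of $\mathbf{C}$ they span $\mathbb{F}_q^l$. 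Choosing the scalars $\alpha_{s,k}$ generically should then let Alice disentangle each $\mathbf{W}_k$'s contribution and sum them.

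The main obstacle I expect is the cross-stream interference: different $\mathbf{W}_k$'s share servers, so the received vector $\mathbf{y}^{(t)}$ is not a clean concatenation of MDS-coded $\mathbf{W}_k$'s but a mixture that the decoder must untangle into the sum $\sum_k \mathbf{W}_k$. A cleaner route, which I suspect is the one taken in \cite{wei2023robust}, is to induct on the structure of $(\mathcal{W},\mathcal{E})$, or to reduce via standard network-coding reductions to the single-source erasure problem whose solvability is governed precisely by the classical Singleton bound $\sum_{s\in \mathcal{W}(k)} N_s - \sum_{s\in \mathcal{E}(t)} N_s \geq l$. In either route, the field-size requirement $q>N$ enters exactly to provide enough distinct evaluation points for the underlying MDS code.
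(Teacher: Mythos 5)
The paper does not prove this lemma: it is stated explicitly as a ``Restatement of \cite{wei2023robust}'' and the paper simply cites that reference, so there is no internal proof to compare against. On its own merits, your sketch has a genuine gap, and it is precisely the one you flag at the end. The Schwartz--Zippel step requires a nonzero witness, and the one you offer (take $\mathbf{G}_{s,k}=\alpha_{s,k}\,\mathbf{C}|_s$ with an MDS backbone $\mathbf{C}$ and \emph{generic} scalars $\alpha_{s,k}$) does not work. With that choice Server $s$ transmits $\mathbf{C}|_s\bigl(\sum_{k:\,s\in\mathcal{W}(k)}\alpha_{s,k}\mathbf{W}_k\bigr)$, so Alice observes a $\mathbf{C}|_s$-compressed image of a \emph{different} $\alpha$-weighted combination of streams at each surviving server. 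Decoding $\sum_k\mathbf{W}_k$ then requires the $l$ rows of $[\mathbf{I}_l\,|\,\cdots\,|\,\mathbf{I}_l]$ to lie in the row span of the $\bigl(\sum_{s\notin\mathcal{E}(t)}N_s\bigr)\times Kl$ block matrix whose $s$-th block row is $[\alpha_{s,1}\mathbf{C}|_s\,|\,\cdots\,|\,\alpha_{s,K}\mathbf{C}|_s]$, and generic $\alpha_{s,k}$ actually \emph{destroys} this alignment rather than creating it: already for $K=2$, $\mathcal{W}(1)=\mathcal{W}(2)=\{1,2\}$, $N_1=N_2=l/2$ and no erasure (which satisfies the hypothesis), the row span of the received matrix is $l$-dimensional and coincides with the span of $[\mathbf{I}_l\,|\,\mathbf{I}_l]$ only if $\alpha_{1,1}=\alpha_{1,2}$ and $\alpha_{2,1}=\alpha_{2,2}$. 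So generic scalars give a vanishing, not a nonzero, instance of your polynomial $P$ on this subvariety.

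Concretely, two things would need fixing. First, sum-decodability is not a single minor but a rank-equality condition $\rk\!\bigl[\begin{smallmatrix}\mathbf{G}^{(t)}\\[1pt]\mathbf{I}_l\,|\cdots|\,\mathbf{I}_l\end{smallmatrix}\bigr]=\rk\,\mathbf{G}^{(t)}$ for each $t$; converting this to a determinant requires committing to a target rank and arguing that a specific minor certifies both rank statements simultaneously, which you have not done. Second, the witness must respect the cross-stream alignment constraints forced by shared servers (in the simplest cases, one is driven toward $\mathbf{G}_{s,k}$ not depending on $k$), and then one still must verify decodability of the \emph{sum} from coded partial sums under every erasure pattern, which is exactly where the cut-type hypothesis $\sum_{s\in\mathcal{W}(k)}N_s-\sum_{s\in\mathcal{E}(t)}N_s\ge l$ must be invoked in a nontrivial way. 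Your final paragraph, suggesting an induction on $(\mathcal{W},\mathcal{E})$ or a network-coding reduction, is the honest acknowledgment that the sketch as given does not close; that is the part that would need to become the actual proof, and it is indeed what \cite{wei2023robust} supplies.
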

Since $q>N$ can always be satisfied by choosing large enough $\lambda$,  we have by definition,
\begin{align} \label{eq:closure_TQC}
	\mbox{closure} \left\{ 
	{\bf \Delta} \in \mathbb{R}_+^{S}
    \left|
    \begin{array}{l}
        l \in \mathbb{N}, \\
        \Delta_s = N_s/l, \forall s\in [S],\\
		\max_{t\in [T]}\sum_{s \in \mathcal{E}(t)}N_s \\ \leq \min_{k\in [K]} \sum_{s\in \mathcal{W}(k)}N_s -l
    \end{array}
    \right.
    \right\} \subseteq \mathfrak{D}^*.
\end{align}
Note that the LHS of \eqref{eq:closure_TQC} is equal to $\mathfrak{D}_{\TQC}$. Therefore, $\mathfrak{D}_{\TQC} \subseteq \mathfrak{D}^*$. 
It then follows that $\mathfrak{D}_{\TQC}^+ \subseteq \mathfrak{D}^*$ as one can always send extra unentangled quantum resource. Finally, by a time-sharing argument, any convex combination of the tuples in $\mathfrak{D}_{\AME}^+ \cup \mathfrak{D}_{\TQC}^+ \subseteq \mathfrak{D}^*$. This concludes the proof of Theorem \ref{thm:achievability}. \hfil \qed

\section{Proof of Theorem \ref{thm:sym}} \label{proof:sym}
\subsection{Achievability}
Apply Theorem \ref{thm:achievability} to the $\Sigma^s$-QEMAC setting, with the cost tuple in the form ${\bf \Delta} = (\Delta_0, \Delta, \cdots, \Delta)$. From \eqref{eq:region_AME} we have  ${\bf \Delta} \in \mathfrak{D}_{\AME}$ if 
\begin{align} \label{eq:sym_AME}
	\min\{\Delta_0+S\Delta, 2\alpha \Delta \} - 2\beta\Delta \geq 1.
\end{align}
Similarly \eqref{eq:region_TQC} implies that  ${\bf \Delta} \in \mathfrak{D}_{\TQC}$ if 
\begin{align} \label{eq:sym_TQC}
	(\alpha-\beta)\Delta \geq 1.
\end{align}
Consider two cases.
\begin{enumerate}[align=left, leftmargin = 1cm]
	\item[{\bf Case 1:}]  $S \geq \alpha+\beta$. It follows from $\alpha > \beta$ that $S-2\beta>0$ and from \eqref{eq:sym_AME} that for any $\Delta_0\in \mathbb{R}_+$ and $\Delta = \max\{ \frac{1}{2(\alpha-\beta)}, \frac{1-\Delta_0}{S-2\beta} \}$, we have $(\Delta_0, \Delta,\cdots, \Delta) \in \mathfrak{D}_{\AME} \subseteq \mathfrak{D}^*$. 
	\item[{\bf Case 2:}] $S < \alpha+\beta$. It follows from \eqref{eq:sym_AME} that for $\big(\Delta_0 = \frac{2\alpha-S}{2(\alpha-\beta)}, \Delta = \frac{1}{2(\alpha-\beta)}\big)$ the cost tuple $(\Delta_0, \Delta,\cdots, \Delta)$ $\in \mathfrak{D}_{\AME}$. Also \eqref{eq:sym_TQC} implies that for $\big(\Delta_0 = 0, \Delta = \frac{1}{\alpha-\beta}\big)$ the cost tuple $(\Delta_0, \Delta,\cdots, \Delta) \in \mathfrak{D}_{\TQC}$. It can be verified that the equation for the straight line connecting these two points is $\Delta = \frac{1}{\alpha-\beta}-\frac{\Delta_0}{2\alpha-S}$. It thus follows that for any $\Delta_0 \in \mathbb{R}_+$ and $\Delta = \max\{ \frac{1}{2(\alpha-\beta)}, \frac{1}{\alpha-\beta}-\frac{\Delta_0}{2\alpha-S}\}$, we have $(\Delta_0, \Delta,\cdots, \Delta)$ $ \in \conv (\mathfrak{D}_{\AME}^+ \cup \mathfrak{D}_{\TQC}^+) \subseteq \mathfrak{D}^*$.
\end{enumerate}
Given $\Delta_0 \in \mathbb{R}_+$, by definition,  $C(\Delta_0) \geq (S\Delta)^{-1}$ if $(\Delta_0, \Delta, \cdots, \Delta) \in \mathfrak{D}^*$. Thus, $C(\Delta_0) \geq (S\Delta^*)^{-1}$ with the $\Delta^*$ defined in \eqref{eq:sym_optimal_cost}. \hfil \qed

\subsection{Converse} \label{proof:thm_conv}
Recall that the  $\Sigma^s$-QEMAC contains $K = \binom{S}{\alpha}$ data streams and $S+1$ servers. There are $T = \binom{S}{\beta}$ possible cases of erasure, each corresponding to a unique $\beta$ subset of the answers from the $S$ data-servers being erased. For any scheme with batch size $L$, since the decoding must be correct for all data realization $w=(w_1,w_2,\cdots,w_K) \in \mathbb{F}_d^{KL}$, it must be correct if the $K$ data streams are uniformly distributed over $\mathbb{F}_d^{KL}$. Therefore, for the purpose of proving a converse, it will not hurt to assume that the data is uniformly distributed over $\mathbb{F}_d^{KL}$, and let $({\sf W}_1, {\sf W}_2,\cdots, {\sf W}_K)$ denote these $K$ random variables, one for each data stream. This facilitates entropic proofs.
After the coding operations applied by the servers, the classical-quantum system of interest 
\begin{align*}
	{\sf W}_1\cdots {\sf W}_K \mathcal{Q}_0\mathcal{Q}_1 \cdots\mathcal{Q}_S 
\end{align*}
is in the state
\begin{align}
	\rho_{{\sf W}_1\cdots {\sf W}_K \mathcal{Q}_0\mathcal{Q}_1\cdots \mathcal{Q}_S} = \sum_{w\in  \mathbb{F}_d^{KL}}\frac{1}{d^{KL}}\rho_{\mathcal{Q}_0\mathcal{Q}_1\cdots \mathcal{Q}_S}^{(w)}.
\end{align}
Also recall that for the case of erasure indexed by $t\in [T]$, the measurement result is a random variable, denoted as $Y_{t}$ such that
\begin{align}
	{\sf Pr}(Y_t = y \mid {\sf W}_{[K]} = w) = \Tr(\Lambda_t^{(y)}\rho_{\mathcal{Q}_0 \mathcal{Q}_{[S]\setminus \mathcal{E}(t)}}),
\end{align}
where $\{\Lambda_t^{(y)}\}_y$ is the set of POVM matrices associated with the measurement $\Psi_{t}$. The joint distribution of ${\sf W}_1\cdots {\sf W}_K Y_t$ is thus determined.
According to the definition \eqref{eq:def_correctness}, the scheme must have ${\sf Pr}\left(Y_{t} = \sum_{k\in [K]}{\sf W}_k \right) = 1$ for $t\in [T]$. For $s\in \mathcal{S}=\{0,1,\cdots, S\}$, the dimension of $\mathcal{Q}_s$ is $\delta_s$ and due to symmetry there is no loss of generality in considering schemes with $\delta_1=\delta_2=\cdots=\delta_S = \delta$.

The proof proceeds as follows. We shall show that for any $\Delta_0 \in \mathbb{R}_+$, if $(\Delta_0, \Delta, \cdots, \Delta) \in \mathfrak{D}^*$, then
\begin{align}
	\Delta \geq \frac{1}{2(\alpha-\beta)} 
\end{align}
and in addition,
\begin{align}
	& \Delta \geq \frac{1-\Delta_0}{S-2\beta}, ~~\mbox{if}~ S\geq \alpha +\beta, \\
	& \Delta \geq \frac{1}{\alpha-\beta} - \frac{\Delta_0}{2\alpha-S}, ~~ \mbox{if}~ S < \alpha +\beta.
\end{align} 
First, consider any data stream (say ${\sf W}_1$), and condition on any realization of ${\sf W}_{[K] \setminus \{1\}}$, the problem reduces to a communication problem with one data stream ${\sf W}_1$, such that $\alpha$ servers (Servers $s\in \mathcal{W}(1)$) know the data stream, while the rest of the servers merely provide entanglement to Alice. Theorem \ref{thm:eacq_bound} (Eq. \eqref{eq:eacq_1}) then implies that $(\alpha-\beta)\Delta \geq 1/2 \implies \Delta \geq \frac{1}{2(\alpha-\beta)}$ (by substituting $S$ by $\alpha$).

Now, consider the following cases.
\begin{enumerate}[align = left]
	\item[{\bf Case 1:}] $S\geq \alpha+\beta$. Partition $[S] = \{1,2,\cdots, S\}$ such that
		\begin{align}
			[S] = &\underbrace{\{1,2,\cdots, \beta\}}_{\mathcal{I}_1~ (\beta)} \cup \underbrace{\{\beta+1,\cdots,2\beta\}}_{\mathcal{I}_2 ~ (\beta)} \cup \underbrace{\{2\beta+1,\cdots, \alpha+\beta\}}_{\mathcal{I}_3 ~ (\alpha-\beta)} \cup \underbrace{\{\alpha+\beta+1,\cdots, S\}}_{\mathcal{I}_4~ (S-\alpha-\beta)}.
		\end{align}
		Identify indices $k_1, k_2, t_1, t_2$ such that
		\begin{align}
			&\mathcal{W}(k_1) = \underbrace{\mathcal{I}_1\cup \mathcal{I}_3}_{(\alpha)}, &&\mathcal{W}(k_2) = \underbrace{\mathcal{I}_2 \cup \mathcal{I}_3}_{(\alpha)},\notag \\
			&\mathcal{E}(t_1) = \underbrace{\mathcal{I}_1}_{(\beta)}, && \mathcal{E}(t_2) = \underbrace{\mathcal{I}_2}_{(\beta)}.
		\end{align}
		For the case of erasure indexed by $t_1$, i.e., when $\mathcal{Q}_{\mathcal{I}_1}$ is erased, we have
		\begin{align}
			L &= I\Big({\sf W}_{[K]} ; \sum_{k\in [K]}{\sf W}_k \mid {\sf W}_{[K]\setminus \{k_1\}} \Big) \\
			& \leq I\Big({\sf W}_{[K]} ; Y_{t_1} \mid {\sf W}_{[K]\setminus \{k_1\}} \Big)\\
			& \leq I\Big( {\sf W}_{k_1}; \mathcal{Q}_{0} \mathcal{Q}_{\mathcal{I}_2} \mathcal{Q}_{\mathcal{I}_3} \mathcal{Q}_{\mathcal{I}_4} \mid {\sf W}_{[K]\setminus \{k_1\}} \Big)  \label{eq:conv_1_1} \\
			& = \underbrace{I\Big( {\sf W}_{k_1};   \mathcal{Q}_{\mathcal{I}_2}   \mid {\sf W}_{[K]\setminus \{k_1\}} \Big)}_{=0}  +  I\Big( {\sf W}_{k_1};  \mathcal{Q}_{0} \mathcal{Q}_{\mathcal{I}_3} \mathcal{Q}_{\mathcal{I}_4} \mid  \mathcal{Q}_{\mathcal{I}_2},  {\sf W}_{[K]\setminus \{k_1\}} \Big)  \\
			& = I\Big( {\sf W}_{k_1}; \mathcal{Q}_{0} \mathcal{Q}_{\mathcal{I}_3} \mathcal{Q}_{\mathcal{I}_4} \mid \mathcal{Q}_{\mathcal{I}_2},  {\sf W}_{[K]\setminus \{k_1\}} \Big) \label{eq:conv_1_2} \\
			& = H\Big(\mathcal{Q}_{0} \mathcal{Q}_{\mathcal{I}_3} \mathcal{Q}_{\mathcal{I}_4} \mid \mathcal{Q}_{\mathcal{I}_2},  {\sf W}_{[K]\setminus \{k_1\}} \Big) - H\Big(\mathcal{Q}_{0} \mathcal{Q}_{\mathcal{I}_3} \mathcal{Q}_{\mathcal{I}_4} \mid \mathcal{Q}_{\mathcal{I}_2},  {\sf W}_{[K]} \Big)  \\
			& \leq H(\mathcal{Q}_{0} \mathcal{Q}_{\mathcal{I}_3} \mathcal{Q}_{\mathcal{I}_4}) - H\Big(\mathcal{Q}_{0} \mathcal{Q}_{\mathcal{I}_3} \mathcal{Q}_{\mathcal{I}_4} \mid \mathcal{Q}_{\mathcal{I}_2},  {\sf W}_{[K]} \Big) \\
			& \leq \log_d\delta_0+ (S-2\beta) \log_d \delta  - H\Big(\mathcal{Q}_{0} \mathcal{Q}_{\mathcal{I}_3} \mathcal{Q}_{\mathcal{I}_4} \mid \mathcal{Q}_{\mathcal{I}_2},  {\sf W}_{[K]} \Big) \label{eq:conv_1_3}
		\end{align}
		Information measures on and after Step \eqref{eq:conv_1_1} are with respect to the state $\rho_{{\sf W}_1\cdots {\sf W}_K \mathcal{Q}_0 \mathcal{Q}_1\cdots \mathcal{Q}_S}$. Step \eqref{eq:conv_1_1} is by Lemma \ref{lem:Holevo} since Alice obtains $Y_{t_1}$ conditioned on any realization of ${\sf W}_{[K]\setminus \{k_1\}}$ by measuring $\mathcal{Q}_0 \mathcal{Q}_{\mathcal{I}_2}\mathcal{Q}_{\mathcal{I}_3}\mathcal{Q}_{\mathcal{I}_4}$. Step \eqref{eq:conv_1_2} is by Lemma \ref{lem:no_signal} conditioned on any realization of ${\sf W}_{[K]\setminus \{k_1\}}$, since ${\sf W}_{k_1}$ is not available to any server in $\mathcal{I}_2$. Step \eqref{eq:conv_1_3} is because conditioning does not increase entropy \cite[Thm. 11.4.1]{Wilde_2017}.
		
		For the case of erasure indexed by $t_2$, by the same reasoning, we similarly have ($\mathcal{I}_2$ replaced by $\mathcal{I}_1$),
		\begin{align}
			L &\leq \log_d\delta_0+ (S-2\beta) \log_d \delta  - H\Big(\mathcal{Q}_{0} \mathcal{Q}_{\mathcal{I}_3} \mathcal{Q}_{\mathcal{I}_4} \mid \mathcal{Q}_{\mathcal{I}_1},  {\sf W}_{[K]} \Big). \label{eq:conv_1_4}
		\end{align}
		Adding \eqref{eq:conv_1_3} and \eqref{eq:conv_1_4}, we have
		\begin{align}
			2L &\leq 2\big(\log_d\delta_0+ (S-2\beta) \log_d \delta\big) \notag\\
			&~~~~- \Bigg( H\Big(\mathcal{Q}_{0} \mathcal{Q}_{\mathcal{I}_3} \mathcal{Q}_{\mathcal{I}_4} \mid \mathcal{Q}_{\mathcal{I}_2},  {\sf W}_{[K]} \Big)  + H\Big(\mathcal{Q}_{0} \mathcal{Q}_{\mathcal{I}_3} \mathcal{Q}_{\mathcal{I}_4} \mid \mathcal{Q}_{\mathcal{I}_1},  {\sf W}_{[K]} \Big) \Bigg)  \\
			& \leq 2\big(\log_d\delta_0+ (S-2\beta) \log_d \delta\big) \label{eq:conv_1_5} \\
			& \hspace{-1cm} \implies \Delta_0 + (S-2\beta) \Delta \geq 1 \label{eq:conv_1_6} \\
			& \hspace{-1cm} \implies  \Delta \geq  \frac{1 - \Delta_0}{S-2\beta}. 
		\end{align}
		Step \eqref{eq:conv_1_5} follows from  weak monotonicity\footnote{For a tripartite quantum system $ABC$ in the state $\rho$, $H(A|B)_{\rho}+H(A|C)_{\rho} \geq 0$.}\cite{pippenger2003inequalities, linden2005new} conditioned on ${\sf W}_{[K]} = w$ for any $w\in \mathbb{F}_d^{KL}$. Step \eqref{eq:conv_1_6} is by dividing by $2L$ on both sides and using the definition $\Delta_0 = \log_d \delta_0/L$, $\Delta =  \log_d \delta/L$.
		\item[{\bf Case 2:}] $S < \alpha+\beta$. The idea is similar but the proof requires a few more steps. Partition $[S]$ such that,
		
		\begin{align}
			[S] =& \underbrace{\{1,2,\cdots, 2\alpha -S\}}_{\mathcal{I}_1~ (2\alpha-S)} \cup \underbrace{\{2\alpha-S+1, \cdots, \alpha\}}_{\mathcal{I}_2~(S-\alpha)} \cup \underbrace{\{\alpha +1, \cdots, S\}}_{\mathcal{I}_3~ (S-\alpha)}.
		\end{align}
		Note that since $S<\alpha+\beta$, we have $2\alpha -S > \alpha -\beta$. Let $\mathcal{J} \in \binom{[2\alpha-S]}{\alpha-\beta}$ be any subset of $\mathcal{I}_1$ with cardinality equal to $\alpha-\beta$. Identify indices $k, t$ such that 
			\begin{align}
				&\mathcal{W}(k) = \underbrace{\mathcal{I}_1 \cup \mathcal{I}_3}_{(\alpha)}, \\
				&\mathcal{E}(t) = \underbrace{(\mathcal{I}_1 \setminus \mathcal{J}) \cup \mathcal{I}_3}_{(\beta)}.
			\end{align}
		We have
		\begin{align}
			L &= I\big({\sf W}_{[K]}; \sum_{k\in [K]} {\sf W}_k \mid {\sf W}_{[K]\setminus \{k\}} \big) \\
			&\leq I\big({\sf W}_{[K]}; Y_{t} \mid {\sf W}_{[K]\setminus \{k\}} \big) \\
			&\leq I\big({\sf W}_{k}; \mathcal{Q}_0 \mathcal{Q}_{\mathcal{J}} \mathcal{Q}_{\mathcal{I}_2} \mid {\sf W}_{[K]\setminus \{k\}} \big)  \label{eq:conv_2_1} \\
			&= \underbrace{I\big({\sf W}_{k};  \mathcal{Q}_0 \mathcal{Q}_{\mathcal{I}_2} \mid {\sf W}_{[K]\setminus \{k\}} \big)}_{=0}   + I\big({\sf W}_{k};  \mathcal{Q}_{\mathcal{J}}  \mid \mathcal{Q}_0 \mathcal{Q}_{\mathcal{I}_2}, {\sf W}_{[K]\setminus \{k\}} \big)  \\
			&= I\big({\sf W}_{k};  \mathcal{Q}_{\mathcal{J}}  \mid \mathcal{Q}_0 \mathcal{Q}_{\mathcal{I}_2}, {\sf W}_{[K]\setminus \{k\}} \big) \label{eq:conv_2_2} \\
			&\leq H(\mathcal{Q}_{\mathcal{J}}) - H(\mathcal{Q}_{\mathcal{J}} \mid \mathcal{Q}_0 \mathcal{Q}_{\mathcal{J}_2} , {\sf W}_{[K]}) \label{eq:conv_2_3} \\
			&\leq (\alpha-\beta) \log_d \delta - H(\mathcal{Q}_{\mathcal{J}} \mid \mathcal{Q}_0 \mathcal{Q}_{\mathcal{J}_2} , {\sf W}_{[K]}) \label{eq:conv_2_4}
		\end{align}
		Information measures on and after Step \eqref{eq:conv_2_1} are with respect to the state $\rho_{{\sf W}_1\cdots {\sf W}_K \mathcal{Q}_0 \mathcal{Q}_1\cdots \mathcal{Q}_S}$. Step \eqref{eq:conv_2_1} is by Lemma \ref{lem:Holevo} since Alice obtains $Y_t$ conditioned on any realization of ${\sf W}_{[K]\setminus \{k\}}$ by measuring $\mathcal{Q}_0\mathcal{Q}_{\mathcal{J}}\mathcal{Q}_{\mathcal{I}_2}$. Step \eqref{eq:conv_2_2} is by Lemma \ref{lem:no_signal} conditioned on any realization of ${\sf W}_{[K] \setminus \{k\}}$, since ${\sf W}_k$ is not available to any server in $\{0\} \cup \mathcal{I}_2$. Step \eqref{eq:conv_2_3} is because conditioning does not increase entropy.
		 
		By symmetry, \eqref{eq:conv_2_4} holds for any $\mathcal{J} \in \binom{[2\alpha-S]}{\alpha-\beta}$. Taking the sum over all $\mathcal{J} \in \binom{[2\alpha-S]}{\alpha-\beta}$ (in total $\binom{2\alpha-S}{\alpha-\beta}$ terms), we have
		\begin{align}
			& \binom{2\alpha-S}{\alpha-\beta}L \leq \binom{2\alpha-S}{\alpha-\beta}(\alpha-\beta) \log_d \delta   - \sum_{\mathcal{J}} H(\mathcal{Q}_\mathcal{J} \mid \mathcal{Q}_0 \mathcal{Q}_{\mathcal{I}_2}, {\sf W}_{[K]}) \\
			& \leq \binom{2\alpha-S}{\alpha-\beta} (\alpha-\beta) \log_d \delta  - \binom{2\alpha-S-1}{\alpha-\beta-1} H(\mathcal{Q}_{[2\alpha-S]} \mid \mathcal{Q}_0 \mathcal{Q}_{\mathcal{I}_2}, {\sf W}_{[K]}) \label{eq:conv_2_5}
		\end{align}
		\begin{align}
			\implies L &\leq (\alpha-\beta) \log_d \delta  - \frac{\alpha-\beta}{2\alpha-S} H(\mathcal{Q}_{[2\alpha-S]} \mid \mathcal{Q}_0 \mathcal{Q}_{\mathcal{I}_2}, {\sf W}_{[K]}) \\
			& = (\alpha-\beta) \log_d \delta - \frac{\alpha-\beta}{2\alpha-S} H(\mathcal{Q}_{[2\alpha-S]} \mathcal{Q}_0 \mid  \mathcal{Q}_{\mathcal{I}_2}, {\sf W}_{[K]}) + \frac{\alpha-\beta}{2\alpha-S}H(\mathcal{Q}_0 \mid \mathcal{Q}_{\mathcal{I}_2}, {\sf W}_{[K]}) \label{eq:conv_2_6}
		\end{align}
		Step \eqref{eq:conv_2_5} follows from \cite[Lemma 3]{grassl2022entropic} (which may be viewed as the quantum conditional version of Han's inequality).
		By similar reasoning, we have ($\mathcal{I}_2$ replaced by $\mathcal{I}_3$),
		\begin{align}
			L &\leq (\alpha-\beta) \log_d \delta  - \frac{\alpha-\beta}{2\alpha-S} H(\mathcal{Q}_{[2\alpha-S]} \mathcal{Q}_0 \mid  \mathcal{Q}_{\mathcal{I}_3}, {\sf W}_{[K]}) + \frac{\alpha-\beta}{2\alpha-S}H(\mathcal{Q}_0 \mid \mathcal{Q}_{\mathcal{I}_3}, {\sf W}_{[K]}). \label{eq:conv_2_7}
		\end{align}
		Adding \eqref{eq:conv_2_6} and \eqref{eq:conv_2_7}, and noting that,
		\begin{align}
			&H(\mathcal{Q}_{[2\alpha-S]} \mathcal{Q}_0 \mid  \mathcal{Q}_{\mathcal{I}_2}, {\sf W}_{[K]})  + H(\mathcal{Q}_{[2\alpha-S]} \mathcal{Q}_0 \mid  \mathcal{Q}_{\mathcal{I}_3}, {\sf W}_{[K]}) \geq 0,
		\end{align}
		because of weak monotonicity conditioned on ${\sf W}_{[K]} = w$ for any $w\in \mathbb{F}_d^{KL}$, we have
		\begin{align}
			2L &\leq 2(\alpha-\beta) \log_d \delta + \frac{2(\alpha-\beta)}{2\alpha-S} \log_d \delta_0\\
			& \hspace{-1.2cm} \implies  1 \leq (\alpha-\beta)\Delta + \frac{\alpha-\beta}{2\alpha-S}\Delta_0 \label{eq:conv_2_8} \\
			& \hspace{-1.2cm} \implies  \Delta \geq \frac{1}{\alpha-\beta} - \frac{\Delta_0}{2\alpha-S} 
		\end{align}
		In Step \eqref{eq:conv_2_8} we divide both sides by $2L$  and apply the definitions $\Delta_0 = \log_d \delta_0/L, \Delta = \log_d \delta/L$.
\end{enumerate}
This concludes the proof of the converse. \hfill \qed

\section{Conclusion}\label{sec:conc}
Generalizing the $\Sigma$-QMAC model to $\Sigma$-QEMAC, which allows erasures, we construct a coding scheme to show that the communication efficiency gains that are enabled by quantum multiparty entanglement are not dominated in general by the higher cost of quantum erasure correction. 
The advantage provided by quantum entanglement is evident, e.g., from Corollaries \ref{cor:no_helper} and \ref{cor:sym_largeE}.  For the symmetric setting we also show by quantum entropic analysis that the proposed coding scheme, in combination with the simple idea of treating qudits as classical dits, is exactly optimal in communication efficiency for the $\Sigma^s$-QEMAC, for any given level of prior entanglement between the data-servers and the receiver. In other words, we characterize the precise capacity of the $\Sigma^s$-QEMAC for arbitrary levels of receiver entanglement. A natural next step would be to find the capacity of the asymmetric $\Sigma$-QEMAC.  Additional insights are needed to overcome the combinatorial complexity of asymmetric settings. This is left as an open problem for future work.

\appendix
\section{Proof of achievability of $\Delta = 3/4$} \label{proof:communication}
Let us first formulate the problem. The reduced communication problem contains $3$ servers, referred to as Server $1$, Server $2$ and Server $3$. The answer from one of them can get erased. The data stream ${\sf A}$ is only known to Server $1$ and Server $2$. The receiver must be able to recover ${\sf A}$.
Now, let us prove that $\Delta=3/4$ is achievable for the communication problem reduced from the ``${\sf A}$ cut" from Section \ref{sec:example}. Let $[{\sf A}_1, {\sf A}_2, {\sf A}_3, {\sf A}_4]$ be symbols from $\mathbb{F}_{d^z}$ that constitute $4z$ instances of the data stream ${\sf A}$. Let 
\begin{align}
	\begin{bmatrix}
		{\sf A}_1' & {\sf A}_2' & \cdots & {\sf A}_8'
	\end{bmatrix}
	=
	\begin{bmatrix}
		{\sf A}_1 & {\sf A}_2 & {\sf A}_3 & {\sf A}_4
	\end{bmatrix}
	{\bf M}^{4\times 8}
\end{align}
where ${\bf M}\in \mathbb{F}_{d^z}^{4\times 8}$ is the generator matrix of a $(k=4,n=8)$ MDS linear code defined in $\mathbb{F}_{d^z}$. $z$ is allowed to be chosen freely to allow the existence of ${\bf M}$. Let $Q_1,Q_2,\cdots, Q_8$ be $d^z$-dimensional quantum systems, distributed to Server $1$, Server $2$ and Server $3$ such that
\begin{enumerate}
	\item Server $1$ has $Q_1$, $Q_2$, $Q_3$;
	\item Server $2$ has $Q_4$, $Q_5$, $Q_6$;
	\item Server $3$ has $Q_7$, $Q_8$.
\end{enumerate}
Let $(Q_1, Q_7)$ and $(Q_4,Q_8)$ be initially prepared in the entangled state for superdense coding. Recall that only Server $1$ and Server $2$ know $({\sf A}_1',\cdots,{\sf A}_8')$. Server $1$ and Server $2$ perform the encoding operations as follows.
\begin{enumerate}
	\item Server $1$ encodes $({\sf A}_1', {\sf A}_2')$ into $Q_1$ using superdense coding, and encodes ${\sf A}_3'$ into $Q_2$, ${\sf A}_4'$ into $Q_3$ classically.
	\item Server $2$ encodes $({\sf A}_5', {\sf A}_6')$ into $Q_4$ using superdense coding, and encodes ${\sf A}_7'$ into $Q_5$, ${\sf A}_8'$ into $Q_6$ classically.
\end{enumerate}
Depending on different cases of erasure, the receiver applies one of the following decoding options.
\begin{enumerate}
	\item In the case that the answer from Server $1$ gets erased (this also ruins $Q_7$), Alice is able to measure $Q_4, Q_5, Q_6, Q_8$ to obtain $({\sf A}_5', {\sf A}_6', {\sf A}_7', {\sf A}_8')$ and recover $({\sf A}_1, {\sf A}_2, {\sf A}_3, {\sf A}_4)$ by the property of the MDS code.
	\item In the case that the answer from Server $2$ gets erased (this also ruins $Q_8$), Alice is able to measure $Q_1, Q_2, Q_3, Q_7$ to obtain $({\sf A}_1', {\sf A}_2', {\sf A}_3', {\sf A}_4')$ and recover $({\sf A}_1, {\sf A}_2, {\sf A}_3, {\sf A}_4)$ by the property of the MDS code..
	\item In the case that the answer from Server $3$ gets erased (this also ruins $Q_1, Q_4$), Alice is able to measure $(Q_2, Q_3, Q_5, Q_6)$ to obtain $({\sf A}_3', {\sf A}_4', {\sf A}_7', {\sf A}_8')$ and recover $({\sf A}_1, {\sf A}_2, {\sf A}_3, {\sf A}_4)$ by the property of the MDS code..
\end{enumerate}
Therefore, Alice gets $4z$ instances of the data stream ${\sf A}$ by downloading at most $3z$ qudits from each of the servers. We thus conclude that $\Delta = 3/4$ is achievable for the communication problem.

\section{Proof of Lemma \ref{lem:box}} \label{proof:box}
Recall that we are given $\mathbb{F}_q, N$ and an SSO matrix ${\bf M} = [{\bf M}_l, {\bf M}_r]$. Let
\begin{align}
	{\bf J} \triangleq \begin{bmatrix}
		{\bf 0} & -{\bf I} \\ {\bf I} & {\bf 0}
	\end{bmatrix}
\end{align}
where ${\bf 0}$ is the $N\times N$ zero matrix and ${\bf I}$ is the $N\times N$ identity matrix. In this section, when we say an $N\times 2N$ matrix ${\bf M}$ is SSO, this means that ${\bf M}$ has full row rank equal to $N$ and that ${\bf M}{\bf J}{\bf M}^\top = {\bf 0}$; when we say an  $2N\times N$ matrix ${\bf G}$ is SSO, this means that ${\bf G}$ has full column rank equal to $N$ and that ${\bf G}^\top {\bf J}{\bf G} = {\bf 0}$.

Define a $2N\times N$ $\mathbb{F}_q$ matrix
\begin{align}
	{\bf G} \triangleq \begin{bmatrix}
		{\bf M}_r^\top \\ -{\bf M}_l^\top
	\end{bmatrix} 
	\triangleq
	\begin{bmatrix}
		{\bf A}  \\ {\bf B}  
	\end{bmatrix}.
\end{align}
Note that ${\bf G}^\top{\bf J}{\bf G} = {\bf M}_r{\bf M}_l^\top - {\bf M}_l {\bf M}_r^\top = {\bf 0}$, which shows that ${\bf G}$ is SSO. It is known \cite[Remark 1]{Allaix_N_sum_box} that an SSO matrix can be completed to a symplectic basis by a skew-symmetric version of a Gram-Schmidt process  \cite[Theorem 1.1]{Ana_Sym}. In other words, there exists a matrix ${\bf H} = \bbsmatrix{{\bf C} \\ {\bf D}} \in \mathbb{F}_q^{2N\times N}$ such that $[{\bf G},  {\bf H}] \triangleq {\bf F}$ is \emph{symplectic}, i.e., ${\bf F}^\top {\bf J}{\bf F} = {\bf J}$. It can be verified that  $\bbsmatrix{{\bf D}^{\top} & -{\bf C}^{\top} \\ -{\bf B}^{\top} & {\bf A}^{\top}}{\bf F}$ is the $2N\times 2N$ identity matrix, so ${\bf F}^{-1}  = \bbsmatrix{{\bf D}^{\top} & -{\bf C}^{\top} \\ -{\bf B}^{\top} & {\bf A}^{\top}}$. With the ${\bf G}$ and ${\bf H}$ so defined, we have, 
\begin{align} \label{eq:GHFM}
	[{\bf 0}, {\bf I}][{\bf G}, {\bf H}]^{-1} = [{\bf 0}, {\bf I}]{\bf F}^{-1} = [-{\bf B}^\top , {\bf A}^\top ] = [{\bf M}_l, {\bf M}_r] = {\bf M}.
\end{align}

To proceed, we need the following definitions from the quantum stabilizer formalism over a finite field (\!\!\cite[Sec. IV-A]{song_colluding_PIR}, \cite[Sec. II]{Allaix_N_sum_box}). 
\begin{enumerate}
	\item $Q_1,Q_2,\cdots, Q_N$ denote $N$ $q$-dimensional quantum systems, which together constitute a $q^N$-dimensional quantum system. The Hilbert space corresponding to $Q_i$ is denoted as $\mathcal{H}_i$, and the Hilbert space for the joint quantum system is denoted as $\mathcal{H}_1\otimes \cdots \otimes \mathcal{H}_N$.
	\item For ${\bf x} = [x_1,\cdots, x_N]^\top  \in \mathbb{F}_q^{N\times 1}, {\bf z}=[z_1,\cdots, z_N]^\top  \in \mathbb{F}_q^{N\times 1}$, ${\bf W}(\bbsmatrix{{\bf x}\\ {\bf z}})$ is the Weyl operator defined as $c_{{\bf x}, {\bf z}}{\sf X}(x_1){\sf Z}(z_1)\otimes \cdots \otimes {\sf X}(x_N){\sf Z}(z_N)$. For our purpose the global phase $c_{{\bf x}, {\bf z}}$ does not matter, so we suppress $c_{{\bf x}, {\bf z}}$.
	\item Given a linear subspace $\mathcal{V} \subseteq \mathbb{F}_q^{2N\times 1}$, define $\mathcal{V}^{\bot_J} \triangleq \{{\bf w} \in \mathbb{F}_q^{2N\times 1} \colon \tr({\bf v}^\top {\bf J} {\bf w}) = 0,~ \forall {\bf v} \in \mathcal{V}\}$.  A linear subspace $\mathcal{V} \subseteq \mathbb{F}_q^{2N\times 1}$ is called self-orthogonal if and only if $\mathcal{V} \subseteq \mathcal{V}^{\bot_J}$.
	\item Given a self-orthogonal linear subspace $\mathcal{V}\subseteq \mathbb{F}_q^{2N\times 1}$, let $\mathcal{S}(\mathcal{V})$ be a stabilizer group defined from $\mathcal{V}$ \cite[Prop, 1]{song_colluding_PIR}. 
	\item Given a linear subspace $\mathcal{V}\subseteq \mathbb{F}_q^{2N\times 1}$, for a vector ${\bf s} \in \mathbb{F}_q^{2N\times 1}$, define the coset $\bar{\bf s} \triangleq \{{\bf s} + {\bf v} \colon {\bf v} \in \mathcal{V}\}$. All such cosets constitute a quotient space denoted as $\mathbb{F}_q^{2N\times 1}/\mathcal{V}$.
\end{enumerate}

Now let us recall \cite[Prop. 2]{song_colluding_PIR}, (see also \cite[Prop. 1]{Allaix_N_sum_box}, \cite[Prop. 2.2]{QMDSTPIR}).
\begin{proposition}[\!\!\!{\cite[Prop. 2]{song_colluding_PIR}}] \label{prop:stabilizer}
	Let $\mathcal{V}$ be a $D$-dimensional self-orthogonal subspace of $\mathbb{F}_q^{2N\times 1}$ and $\mathcal{S}(\mathcal{V})$ be a stabilizer defined from $\mathcal{V}$. For a coset $\bar{\bf s}\in \mathbb{F}_q^{2N\times 1}/\mathcal{V}^{\bot_J}$, let ${\bf s}$ be its coset leader. Then the following statements hold.
	\begin{enumerate}[label=(\alph*)]
		\item For any ${\bf v} \in \mathcal{V}$, the operation ${\bf W}({\bf v}) \in \mathcal{S}(\mathcal{V})$ is simultaneously and uniquely decomposed as 
		\begin{align}
			{\bf W}({\bf v}) = \sum_{\bar{\bf s} \in \mathbb{F}_q^{2N \times 1}/\mathcal{V}^{\bot_J}} \omega^{\tr({\bf v}^\top {\bf J} {\bf s})} {\bf P}_{\bar{\bf s}}
		\end{align}
		with orthogonal projections $\{{\bf P}_{\bar{\bf s}}\}$ such that 
		\begin{align}
			&{\bf P}_{\bar{\bf s}}{\bf P}_{\bar{\bf t}} = {\bf 0}~ \mbox{for any}~ \bar{\bf s} \not= \bar{\bf t} \notag,\\
			&\sum_{\bar{\bf s} \in \mathbb{F}_q^{2N \times 1}/\mathcal{V}^{\bot_J}} {\bf P}_{\bar{\bf s}} = {\bf I}_{q^N}. \notag
		\end{align}
 		\item Let $\Image{\bf P}_{\bar{\bf s}}$ be the image of the projection ${\bf P}_{\bar{\bf s}}$. For any ${\bf s},{\bf t} \in \mathbb{F}_d^{2N\times 1}$,
 		\begin{align}
 			{\bf W}({\bf s})(\Image{\bf P}_{\bar{\bf s}}) = \Image{\bf P}_{\overline{{\bf s}+ {\bf t}}}.
 		\end{align}
 		\item For any $\bar{\bf s} \in \mathbb{F}_q^{2N\times 1}/\mathcal{V}^{\bot_J}$,
 		\begin{align}
 			\dim (\Image{\bf P}_{\bar{\bf s}}) = q^{N-D}.
 		\end{align}
	\end{enumerate}
\end{proposition}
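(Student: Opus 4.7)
The plan is to reduce Proposition \ref{prop:stabilizer} to Fourier analysis on the finite abelian group $\mathcal{V}$. The foundation is the Weyl commutation identity
\[
{\bf W}({\bf v}) {\bf W}({\bf w}) = \omega^{\tr({\bf v}^\top {\bf J}{\bf w})} {\bf W}({\bf w}) {\bf W}({\bf v}),
\]
which I would derive by tensoring the elementary rule ${\sf Z}(z){\sf X}(x) = \omega^{\tr(xz)} {\sf X}(x){\sf Z}(z)$ across the $N$ subsystems. Self-orthogonality of $\mathcal{V}$ under the symplectic form $\tr({\bf v}^\top{\bf J}{\bf w})$ then makes the Weyl operators indexed by $\mathcal{V}$ pairwise commute, so $\mathcal{S}(\mathcal{V})$ is an abelian subgroup of the unitaries on $\mathbb{C}^{q^N}$ and simultaneously diagonalizable. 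Its character group is parametrized by cosets of $\mathcal{V}^{\bot_J}$: the map $\chi_{\bar{\bf s}}({\bf v})\triangleq \omega^{\tr({\bf v}^\top{\bf J}{\bf s})}$ is an additive character of $\mathcal{V}$; two representatives ${\bf s},{\bf s}'$ yield the same character iff ${\bf s}-{\bf s}'\in\mathcal{V}^{\bot_J}$; and $|\mathbb{F}_q^{2N\times 1}/\mathcal{V}^{\bot_J}|=q^D=|\widehat{\mathcal{V}}|$ exhausts them.

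For part (a), I would introduce the character projectors
\[
{\bf P}_{\bar{\bf s}} \triangleq \frac{1}{q^D}\sum_{{\bf v}\in \mathcal{V}} \omega^{-\tr({\bf v}^\top {\bf J}{\bf s})} {\bf W}({\bf v}).
\]
Using ${\bf W}({\bf v}){\bf W}({\bf w})={\bf W}({\bf v}+{\bf w})$ inside $\mathcal{S}(\mathcal{V})$ together with the substitution ${\bf u}={\bf v}+{\bf w}$, idempotence ${\bf P}_{\bar{\bf s}}^2={\bf P}_{\bar{\bf s}}$ and pairwise orthogonality ${\bf P}_{\bar{\bf s}}{\bf P}_{\bar{\bf t}}={\bf 0}$ for $\bar{\bf s}\ne \bar{\bf t}$ follow from column orthogonality of characters. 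Self-adjointness ${\bf P}_{\bar{\bf s}}^\dagger={\bf P}_{\bar{\bf s}}$ comes from ${\bf W}({\bf v})^\dagger={\bf W}(-{\bf v})$ in $\mathcal{S}(\mathcal{V})$ via ${\bf v}\mapsto -{\bf v}$. Completeness $\sum_{\bar{\bf s}}{\bf P}_{\bar{\bf s}}={\bf I}_{q^N}$ follows from the dual row orthogonality $\sum_{\bar{\bf s}}\omega^{-\tr({\bf v}^\top{\bf J}{\bf s})}=q^D$ for ${\bf v}={\bf 0}$ and $0$ otherwise (for ${\bf v}\ne {\bf 0}$ the map $\bar{\bf s}\mapsto \omega^{-\tr({\bf v}^\top{\bf J}{\bf s})}$ is a non-trivial character of $\mathbb{F}_q^{2N\times 1}/\mathcal{V}^{\bot_J}$, by non-degeneracy of both ${\bf J}$ and the trace form). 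A short computation yields ${\bf W}({\bf v}){\bf P}_{\bar{\bf s}}=\omega^{\tr({\bf v}^\top{\bf J}{\bf s})}{\bf P}_{\bar{\bf s}}$, and multiplying by $\sum_{\bar{\bf s}}{\bf P}_{\bar{\bf s}}={\bf I}$ produces the claimed spectral decomposition; uniqueness is immediate because $\{{\bf P}_{\bar{\bf s}}\}$ is a complete orthogonal family.

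Part (b) is immediate from the commutation identity: ${\bf W}({\bf t}){\bf W}({\bf v}){\bf W}({\bf t})^\dagger = \omega^{-\tr({\bf v}^\top{\bf J}{\bf t})}{\bf W}({\bf v})$, so
\[
{\bf W}({\bf t}) {\bf P}_{\bar{\bf s}} {\bf W}({\bf t})^\dagger = \frac{1}{q^D}\sum_{{\bf v}\in\mathcal{V}} \omega^{-\tr({\bf v}^\top{\bf J}({\bf s}+{\bf t}))} {\bf W}({\bf v}) = {\bf P}_{\overline{{\bf s}+{\bf t}}},
\]
and unitarity of ${\bf W}({\bf t})$ upgrades this to ${\bf W}({\bf t})(\Image{\bf P}_{\bar{\bf s}})=\Image{\bf P}_{\overline{{\bf s}+{\bf t}}}$. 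Part (c) then reduces to a dimension count: (b) makes all images $\Image{\bf P}_{\bar{\bf s}}$ unitarily isomorphic with a common dimension $d$, and completeness gives $q^D d = q^N$, so $d=q^{N-D}$.

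The main obstacle is consistent global-phase bookkeeping of the Weyl operators over $\mathbb{F}_q=\mathbb{F}_{p^r}$. For the character-projector calculus above to close, the phases $c_{{\bf x},{\bf z}}$ must be chosen so that both ${\bf W}({\bf v}){\bf W}({\bf w})={\bf W}({\bf v}+{\bf w})$ and ${\bf W}({\bf v})^\dagger={\bf W}(-{\bf v})$ hold on the nose inside $\mathcal{S}(\mathcal{V})$ (not merely projectively); this is what elevates $\mathcal{S}(\mathcal{V})$ from a subset of unitaries to a genuine abelian group isomorphic (as a group) to $\mathcal{V}$, on which the character formulas are exact. This normalization is precisely the one baked into the construction of $\mathcal{S}(\mathcal{V})$ from $\mathcal{V}$ in \cite[Prop.~1]{song_colluding_PIR}; with that convention fixed, everything else is routine finite abelian Fourier analysis.
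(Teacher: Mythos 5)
Your argument is correct, but note how it relates to the paper: the paper does not prove Proposition \ref{prop:stabilizer} at all --- it is imported as background, cited to \cite[Prop.~2]{song_colluding_PIR}, with the stabilizer $\mathcal{S}(\mathcal{V})$ itself only defined by reference to \cite[Prop.~1]{song_colluding_PIR}; the paper's own work in Appendix B is the subsequent ``relabeling'' step that turns the joint eigenbasis into the $N$-sum box of Lemma \ref{lem:box}. So what you have written is a genuine self-contained proof (the standard finite-abelian Fourier / character-projector argument) rather than a reconstruction of anything in this paper. The substance checks out: defining ${\bf P}_{\bar{\bf s}} = q^{-D}\sum_{{\bf v}\in\mathcal{V}}\omega^{-\tr({\bf v}^\top{\bf J}{\bf s})}{\bf W}({\bf v})$, idempotence, mutual orthogonality and completeness follow from character orthogonality exactly as you say; the relation ${\bf W}({\bf v}){\bf P}_{\bar{\bf s}}=\omega^{\tr({\bf v}^\top{\bf J}{\bf s})}{\bf P}_{\bar{\bf s}}$ gives (a), with uniqueness by Fourier inversion (worth one explicit line); the conjugation identity gives (b) --- and you correctly conjugate by ${\bf W}({\bf t})$, silently fixing the ${\bf W}({\bf s})$ typo in the statement; and the transitivity-plus-counting argument $q^{D}\dim(\Image{\bf P}_{\bar{\bf s}})=q^{N}$ gives (c). You also correctly isolate the one genuinely delicate point, the phase normalization making ${\bf v}\mapsto{\bf W}({\bf v})$ an exact homomorphism on $\mathcal{V}$ with ${\bf W}({\bf v})^\dagger={\bf W}(-{\bf v})$ (nontrivial especially in characteristic $2$), and defer it to the cited construction of $\mathcal{S}(\mathcal{V})$ --- which is consistent with how the paper uses that construction. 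The only small additions I would suggest are a one-line justification that $\mathcal{V}^{\bot_J}$ (defined via the trace form) is an $\mathbb{F}_q$-subspace of dimension $2N-D$, so the quotient really has $q^{D}$ cosets and your character count closes; this is routine (non-degeneracy of the field trace turns the trace-form complement into the $\mathbb{F}_q$-symplectic complement) but is used implicitly in both your completeness sum and the dimension count. In short: correct proof, standard technique, different ``route'' only in the sense that the paper outsources this statement to the literature while you prove it.
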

Let us apply  Proposition \ref{prop:stabilizer} to the special case of $\mathcal{V} = \langle {\bf G} \rangle$, where $\langle {\bf G} \rangle$ is the column-span of ${\bf G}$ with coefficients in $\mathbb{F}_q$. First note that since ${\bf G}$ is SSO, we have $\mathcal{V} = \mathcal{V}^{\bot_J} = \langle {\bf G} \rangle$ \cite[Rem. 4]{Allaix_N_sum_box}, which yields $\mathcal{S}(\mathcal{V})$ a maximal stabilizer (group). Then according to (c), each $\Image{\bf P}_{\bar{\bf s}}$ is a $1$-dimensional subspace, and we denote $\ket{\bar{\bf s}}_*$ as a unit vector from $\Image{\bf P}_{\bar{\bf s}}$ for each coset $\bar{\bf s}\in \mathbb{F}_q^{2N\times 1}/\langle {\bf G} \rangle$. According to (a), we have that $\{\ket{\bar{\bf s}}\colon \bar{\bf s}\in \mathbb{F}_q^{2N\times 1}/\langle {\bf G} \rangle\}$ is a set of orthogonal vectors, and since the cardinality of the set is $q^N$, they form a basis of $\mathcal{H}_1 \otimes \cdots \otimes \mathcal{H}_N$. According to (b), we have for  ${\bf s}, {\bf t} \in \mathbb{F}_q^{2N\times 1}$, 
\begin{align}\label{eq:stabilizer_state_evolve}
	{\bf W}({\bf t}) \ket{\bar{\bf s}}_{*} = \ket{\overline{ {\bf s}+{\bf t} }}_{*}.
\end{align}

For a vector ${\bf s}\in \mathbb{F}_q^{2N\times 1}$, define $g({\bf s}) \in \mathbb{F}_q^{N\times 1}$ and $h({\bf s}) \in \mathbb{F}_q^{N\times 1}$ as the unique coefficient vectors that yield ${\bf s} = {\bf G}g({\bf s})  + {\bf H}h({\bf s})$. This is feasible because $[{\bf G}, {\bf H}]$ is invertible. Note that from this definition we have $h({\bf s}) = [{\bf 0}, {\bf I}][{\bf G}, {\bf H}]^{-1}{\bf s} = {\bf M}{\bf s}$ by \eqref{eq:GHFM}.
For a vector ${\bf s} \in \mathbb{F}_q^{2N\times 1}$, define the coset $\bar{\bf s}\triangleq \{{\bf s} + {\bf g} \colon {\bf g} \in \langle {\bf G} \rangle\}\in \mathbb{F}_q^{2N\times 1}/\langle {\bf G} \rangle$. It follows that given ${\bf G}, {\bf H}$, there is a one-to-one correspondence between a coset $\bar{\bf s}$ and $h({\bf s}) = {\bf M}{\bf s}$. This is because for ${\bf s}, {\bf t} \in \mathbb{F}_q^{2N\times 1}$, $\bar{\bf s} = \bar{\bf t}$ if and only if $h({\bf s}) = h({\bf t})$. Denote this bijection as $f\colon \mathbb{F}_q^{2N\times 1}/\langle {\bf G}\rangle \to \mathbb{F}_q^{N\times 1}$.
Now, due to the bijection $f$, let us define $\ket{{\bf M}{\bf s}}_{\bf M} \triangleq \ket{\bar{\bf s}}_*$ for ${\bf s} \in \mathbb{F}_q^{2N\times 1}$. It follows that $\{\ket{{\bf M}{\bf s}}_{\bf M} \colon {\bf s} \in \mathbb{F}_q^{2N\times 1}\} \stackrel{(*)}{=} \{\ket{{\bf a}}_{\bf M} \colon {\bf a} \in \mathbb{F}_q^{N\times 1}\}$ is a set of orthogonal states that constitute a basis for $\mathcal{H}_1\otimes \cdots \otimes \mathcal{H}_N$. Step (*) follows as ${\bf M}$ has full row rank equal to $N$. Then for any ${\bf a}, {\bf x}, {\bf z} \in \mathbb{F}_q^{N\times 1}$, if the initial state of $Q_1\cdots Q_N$ is $\ket{{\bf a}}_{\bf M} = \ket{{\bf M}{\bf s}}_{\bf M}$ for some ${\bf s}\in \mathbb{F}_q^{2N\times 1}$, with ${\bf t}$ set to $\bbsmatrix{{\bf x} \\{\bf z}}$, \eqref{eq:stabilizer_state_evolve} implies that applying ${\sf X}(x_i){\sf Z}(z_i)$ to $Q_i$ for $i\in [N]$, the state evolves to
\begin{align}
	\ket{\overline{ {\bf s}+{\bf t} }}_{*} = \ket{ {\bf M} ({\bf s} + \bbsmatrix{{\bf x}\\{\bf z}}) }_{{\bf M}} = \ket{{\bf a} + {\bf M} \bbsmatrix{{\bf x}\\{\bf z}}}_{{\bf M}}.
\end{align}
This recovers Lemma \ref{lem:box}. Let us note that essentially  what this proof is doing is ``relabeling" the set of orthogonal states $\{\ket{\bar{\bf s}}_{*}\}$ with $\{\ket{{\bf M}{\bf s}}_{{\bf M}}\}$ by identifying the bijection $f$.

\section{Proof of Lemma \ref{lem:independent_mixed}} \label{proof:independent_mixed}
Let $\mathcal{H}_A, \mathcal{H}_B$ denote the Hilbert spaces related to System $A$ and System $B$. It suffices to prove for pure state $\rho_{AB} = \ket{\psi}_{AB}\bra{\psi}_{AB}$.  Let us first show the case for $d = p$ being a prime. Without loss of generality, we write
\begin{align} \label{eq:general_joint_state}
	\ket{\psi}_{AB} = \sum_{i=0}^{p-1}\ket{\phi_i}_{A}\ket{i}_B
\end{align}
for a set of vectors $\{\ket{\phi_i}_{A}\}_{i=1}^{p-1}$ defined in $\mathcal{H}_A$, and $\{\ket{i}_{B}\}_{i=0}^{p-1}$ is the computational basis for $\mathcal{H}_B$.

After the random ${\sf X}(\tilde{x}){\sf Z}(\tilde{z})$ operation is applied to System $B$, the state becomes a mixed state with density operator
\begin{align} \label{eq:post_state_1}
	\rho_{AB}' &= \frac{1}{p^2}\sum_{\tilde{x}=0}^{p-1} \sum_{\tilde{z}=0}^{p-1}  \underbrace{\big( {\bf I}\otimes {\sf X}(\tilde{x}){\sf Z}(\tilde{z}) \big)}_{U_{\tilde{x}\tilde{z}}}\ket{\psi}_{AB} \bra{\psi}_{AB}U_{\tilde{x}\tilde{z}}^\dagger.
\end{align}
By \eqref{eq:general_joint_state},
\begin{align}
	& U_{\tilde{x}\tilde{z}} \ket{\psi}_{AB}\bra{\psi}_{AB}U_{\tilde{x}\tilde{z}}^\dagger \notag\\
	& = \sum_{i = 0}^{p-1}\sum_{j=0}^{p-1} \omega^{\tilde{z}(i-j)} \ket{\phi_i}_A\bra{\phi_j}_A\ket{i+\tilde{x}}_B\bra{j+\tilde{x}}_B.
\end{align}
Therefore, \eqref{eq:post_state_1} is equal to (the subscripts $A,B$ are omitted for simplicity)
\begin{align} \label{eq:post_state_2}
	& \frac{1}{p^2}\sum_{\tilde{x}=0}^{p-1} \sum_{\tilde{z}=0}^{p-1}  \sum_{i = 0}^{p-1}\sum_{j=0}^{p-1} \omega^{\tilde{z}(i-j)} \ket{\phi_i}\bra{\phi_j}\ket{i+\tilde{x}}\bra{j+\tilde{x}} \\
	&= \frac{1}{p^2} \sum_{i = 0}^{p-1}\sum_{j=0}^{p-1} \ket{\phi_i}\bra{\phi_j} \sum_{\tilde{x}=0}^{p-1} \sum_{\tilde{z}=0}^{p-1}  \omega^{\tilde{z}(i-j)} \ket{i+\tilde{x}}\bra{j+\tilde{x}} \\
	&= \frac{1}{p^2} \sum_{i = 0}^{p-1}\sum_{j=0}^{p-1} \ket{\phi_i}\bra{\phi_j}  \underbrace{\sum_{\tilde{z}=0}^{p-1}  \omega^{\tilde{z}(i-j)}}_{p\delta(i-j)} \sum_{\tilde{x}=0}^{p-1} \ket{i + \tilde{x}}\bra{j +  \tilde{x}} \\
	& =\frac{1}{p} \sum_{i = 0}^{p-1}\sum_{j=0}^{p-1} \delta(i-j) \ket{\phi_i}\bra{\phi_j}      \sum_{\tilde{x}=0}^{p-1} \ket{i + \tilde{x}}\bra{j + \tilde{x}}\\
	& =\frac{1}{p} \sum_{i = 0}^{p-1} \ket{\phi_i}\bra{\phi_i} \underbrace{\sum_{\tilde{x}=0}^{p-1} \ket{i +\tilde{x}}\bra{i + \tilde{x}}}_{{\bf I}_p}\\
	&= \Big(\sum_{i=0}^{p-1} \ket{\phi_i}\bra{\phi_i}\Big) \otimes \Big( \frac{{\bf I}_p}{p} \Big)
\end{align}
where $\delta(x) \triangleq 1$ if $x=0$ and $\delta(x) \triangleq 0$ otherwise.

This shows that $B$ is independent of $A$ and is in the maximally mixed state. To generalize the argument to general $d = p^r$, it is important to note that applying a random ${\sf X}(\tilde{x}){\sf Z}(\tilde{z})$ operation to a qudit is equivalent to applying a random ${\sf X}(\tilde{x}_i){\sf Z}(\tilde{x}_i)$ operation to the $i^{th}$ $p$-dimensional subsystem for all $i\in [r]$, where $\tilde{x}_i, \tilde{z}_i \in \mathbb{F}_p$. Therefore, the argument generalizes to $d=p^r$.

\section{Proofs of claims in Section \ref{proof:achievability}} \label{proof:schwartz_zippel}
\subsection{The existence of ${\bf U}$}
Recall that we are given ${\bf E}_t \in \mathbb{F}_q^{N\times \sum_{s\in \mathcal{E}(t)}2N_s}$ that has full column rank for $t\in [T]$. We will prove the existence of the matrix ${\bf U} \in \mathbb{F}_q^{N\times u}$, where $u = N-\max_{t\in [T]}\sum_{s\in \mathcal{E}(t)}N_s$, such that $[{\bf U}, {\bf E}_t]$ has full column rank for all $t\in [T]$.
For $t\in [T]$, there exists a realization of ${\bf U} \in \mathbb{F}_q^{N\times u}$ and a deterministic matrix ${\bf Z}_t \in \mathbb{F}_q^{N\times (N-u-\sum_{s\in \mathcal{E}(t)}2N_s)}$ such that $[{\bf U}, {\bf E}_t, {\bf Z}_t]$ is invertible. Therefore, $P_t \triangleq \det([{\bf U}, {\bf E}_t, {\bf Z}_t])$ is a non-zero polynomial in the elements of ${\bf U}$, with degree more than $N$ for all $t\in [T]$. Consider the polynomial $P = \prod_{t=1}^T P_t$, which is a non-zero polynomial with degree not more than $T N$. By Schwartz-Zippel Lemma, if the element of ${\bf U}$ is chosen i.i.d. uniformly in $\mathbb{F}_q$, then the probability of $P$ evaluating to $0$ is not more than $\frac{T N}{q}$, which is strictly less than $1$ if $q>T N$. Therefore, if $q>T N$, there exists ${\bf U}$ such that $[{\bf U}, {\bf E}_t]$ has full column rank for all $t\in [T]$.

\subsection{The existence of ${\bf V}_{\dec}$}
Recall that we are given ${\bf U}_k \in \mathbb{F}_q^{N\times \scalebox{0.8}{\rk}({\bf U}_k)}$, ${\bf V}_k' \in \mathbb{F}_q^{\scalebox{0.8}{\rk}({\bf U}_k) \times l}$ that has full column rank for $k\in [K]$. We will prove the existence of the matrix ${\bf V}_{\dec} \in \mathbb{F}_q^{l\times N}$ such that ${\bf V}_{\dec}{\bf U}_k {\bf V}'_k$ is an $l\times l$ invertible matrix ${\bf R}_k$ for all $k \in [K]$. The proof is as follows. By definition, there exists a realization of ${\bf V}_{\dec} \in \mathbb{F}_q^{l \times N}$ such that ${\bf V}_{\dec}{\bf U}_k{\bf V}'_k$ is invertible. Therefore, $P_k \triangleq \det({\bf V}_{\dec}{\bf U}_k{\bf V}'_k)$ is a non-zero polynomial in the elements of ${\bf V}_{\dec}$, with degree not more than $l$ for all $k\in [K]$. Consider the polynomial $P = \prod_{k=1}^K P_K$, which is a non-zero polynomial with degree not more than $Kl$. By Schwartz-Zippel Lemma, if the element of ${\bf V}_{\dec}$ is chosen i.i.d. uniformly in $\mathbb{F}_q$, then the probability of $P$ evaluating to $0$ is not more than $\frac{Kl}{q}$, which is strictly less than $1$ if $q>Kl$. Therefore, if $q>Kl$, there exists ${\bf V}_{\dec}$ such that ${\bf V}_{\dec}{\bf U}_k {\bf V}'_k$ is invertible for all $k\in [K]$.

\section{Proof of Theorem \ref{thm:eacq_bound}} \label{proof:eacq_bound}
We continue to use the notations as  defined in Section \ref{proof:thm_conv}.
After the coding operations done by the servers, the classical-quantum system of interest  ${\sf W}\mathcal{Q}_0\mathcal{Q}_1 \cdots\mathcal{Q}_S$ is in the state,
\begin{align}
	\rho_{{\sf W} \mathcal{Q}_0\mathcal{Q}_1\cdots \mathcal{Q}_S} = \sum_{w\in  \mathbb{F}_d^{L}}\frac{1}{d^{L}}\rho_{\mathcal{Q}_0\mathcal{Q}_1\cdots \mathcal{Q}_S}^{(w)}.
\end{align}
For $t\in [T]$, Alice measures $\mathcal{Q}_0\mathcal{Q}_{[S]\setminus \mathcal{E}(t)}$ to obtain $Y_t$ such that ${\sf Pr}(Y_t = {\sf W}) = 1$. Recall that in this theorem $\{\mathcal{E}(1), \mathcal{E}(2), \cdots, \mathcal{E}(T)\}$ constitute the collection of all cardinality-$\beta$ subsets of $\{1,2,\cdots, S\}$. Denote $\mathcal{I}_t \triangleq [S]\setminus \mathcal{E}(t)$. We have, for all $t\in [T]$, 
\begin{align}
	L &= H({\sf W}) = I({\sf W}; Y_t) \\
	&\leq I({\sf W}; \mathcal{Q}_0 \mathcal{Q}_{\mathcal{I}_t})_{\rho_{{\sf W} \mathcal{Q}_0\mathcal{Q}_1\cdots\mathcal{Q}_S}} \label{eq:eacq_1_1} \\
	&= \underbrace{I({\sf W}; \mathcal{Q}_0)}_{=0} + I({\sf W}; \mathcal{Q}_{\mathcal{I}_t} \mid \mathcal{Q}_0) \label{eq:eacq_1_2} \\
	&= I({\sf W}; \mathcal{Q}_{\mathcal{I}_t} \mid \mathcal{Q}_0) \\
	&= H(\mathcal{Q}_{\mathcal{I}_t} \mid \mathcal{Q}_0) - H(\mathcal{Q}_{\mathcal{I}_t}\mid \mathcal{Q}_0, {\sf W}) \\
	&\leq H(\mathcal{Q}_{\mathcal{I}_t}) - H(\mathcal{Q}_{\mathcal{I}_t}\mid \mathcal{Q}_0, {\sf W})  \label{eq:eacq_1_3} \\
	&\leq H(\mathcal{Q}_{\mathcal{I}_t}) + H(\mathcal{Q}_{\mathcal{I}_t}) \label{eq:eacq_1_4} \\
	&\leq 2\log_d \sum_{s\in \mathcal{I}_t}\delta_s \\
	& \hspace{-1cm} \implies \sum_{s\in \mathcal{I}} \Delta_s \geq 1/2, ~~\forall \mathcal{I} \in \binom{[S]}{S-\beta}
\end{align}
Information measures on and after Step \eqref{eq:eacq_1_1} are with respect to the state $\rho_{{\sf W} \mathcal{Q}_0 \mathcal{Q}_1\cdots \mathcal{Q}_S}$. Step \eqref{eq:eacq_1_1} is by Holevo bound (Lemma \ref{lem:Holevo}). Step \eqref{eq:eacq_1_2} is by Lemma \ref{lem:no_signal} since Server $0$ does not know ${\sf W}$. Step \eqref{eq:eacq_1_3} is because conditioning does not increase entropy. \eqref{eq:eacq_1_4} is from the Ariki-Lieb triangle inequality\footnote{For a bipartite quantum system $AB$ in the state $\rho$, $|H(A)_{\rho}-H(B)_{\rho}| \leq H(AB)_{\rho}$.}.
This proves \eqref{eq:eacq_1}. We next show \eqref{eq:eacq_2}. From \eqref{eq:eacq_1_3}, we have,
\begin{align}
	L \leq H(\mathcal{Q}_{\mathcal{I}}) - H(\mathcal{Q}_{\mathcal{I}} \mid \mathcal{Q}_0, {\sf W}),~~\forall \mathcal{I} \in \binom{[S]}{S-\beta}.
\end{align}
Taking the sum over all $\mathcal{I} \in \binom{[S]}{S-\beta}$ (in total $\binom{S}{\beta}$ terms), we have
\begin{align}
	&\binom{S}{\beta}L \leq \sum_{\mathcal{I} \in  \binom{[S]}{S-\beta}}   H(\mathcal{Q}_{\mathcal{I}}) - \sum_{\mathcal{I} \in  \binom{[S]}{S-\beta}} H(\mathcal{Q}_{\mathcal{I}} \mid \mathcal{Q}_0, {\sf W})   \\
	&\leq \binom{S-1}{\beta} \sum_{s\in [S]}\log_d \delta_s - \sum_{\mathcal{I} \in  \binom{[S]}{S-\beta}} H(\mathcal{Q}_{\mathcal{I}} \mid \mathcal{Q}_0, {\sf W})  \\
	&\leq \binom{S-1}{\beta} \sum_{s\in [S]}\log_d \delta_s -  \binom{S-1}{\beta} H(\mathcal{Q}_{[S]} \mid \mathcal{Q}_0, {\sf W}) \label{eq:eacq_2_1}\\
	&\leq \binom{S-1}{\beta} \sum_{s\in [S]}\log_d \delta_s +  \binom{S-1}{\beta} H(\mathcal{Q}_0) \label{eq:eacq_2_2}\\
	&\leq \binom{S-1}{\beta} \big(\sum_{s\in [S]}\Delta_s + \Delta_0\big)L \\
	&   \implies \Delta_0+\Delta_1+\cdots+\Delta_S \geq \frac{S}{S-\beta}
\end{align}
Step \eqref{eq:eacq_2_1} follows from \cite[Lemma 3]{grassl2022entropic}. Step \eqref{eq:eacq_2_2} is due to non-negativity of entropy and the fact that conditioning does not increase entropy. This concludes the proof of Theorem \ref{thm:eacq_bound}.

\bibliographystyle{IEEEtran}
\bibliography{../../bib_file/yy.bib}
\end{document}